\newcommand{\elo}{Eloise}
\newcommand{\abe}{Abelard}
\newcommand{\G}{\mathds{G}}
\newcommand{\W}{\mathds{W}}
\newcommand{\T}{\mathscr{T}}
\renewcommand{\H}{H} 
\def\limp{\Rightarrow}
\def\liff{\Leftrightarrow}
\def\forallNat{\forall^{\mathsf{nat}}}
\def\existsNat{\exists^{\mathsf{nat}}}
\def\forallN{\forall^{\mathsf{N}}}
\def\existsN{\exists^{\mathsf{N}}}
\def\Lam#1#2{\lambda#1\,.\,#2}
\def\<{\langle}
\def\>{\rangle}
\newcommand{\tv}[1]{|#1|}
\newcommand{\fv}[1]{\|#1\|}
\newcommand{\far}[1]{\forallN #1}
\newcommand{\exr}[1]{\existsN {#1}}
\newcommand{\eval}{\succ}
\newcommand{\Th}{\mathop{\textbf{th}}}
\newcommand{\real}{\Vdash}
\newcommand{\ureal}{\Vvdash}
\newcommand{\emptystr}{\<\cdot\>}
\newcommand{\pole}{\Bot}
\newcommand{\Pl}{\ensuremath{\operatorname{PL}}}
\newcommand{\Eq}{\texttt{eq}}
\newcommand{\Quote}{\texttt{quote}}
\newcommand{\Next}{\texttt{next}}
\newcommand{\Fork}{{\pitchfork}}
\newcommand{\I}{\mathrm{\bf I}}
\renewcommand{\k}{\textrm{\bf k}}
\newcommand{\cc}{\textrm{\bf c\!c}}
\newcommand{\barr}{\overline}
\newcommand{\dom}{\mathop{\mathrm{dom}}}
\newcommand{\Int}[1]{\llbracket #1 \rrbracket}
\newcommand{\ds}{\displaystyle}
\newcommand{\FV}{\mathit{FV}}
\renewcommand{\P}{\mathcal{P}}
\newcommand{\TYP}[3]{#1~{\vdash}~#2~{:}~#3}
\newcommand{\Pow}{\mathfrak{P}}
\newcommand{\Nat}{\mathsf{Nat}}
\newcommand{\C}{\mathcal{C}}
\newcommand{\N}{\mathds{N}}
\newcommand{\V}{\mathcal{V}}
\newcommand{\B}{\mathcal{B}}
\newcommand{\A}{\mathcal{A}}
\newcommand{\M}{\mathcal{M}}
\newcommand{\pair}[1]{\langle #1 \rangle}
\newcommand{\union}[1]{\bigcup_{#1}}
\newcommand{\vc}[2]{\vec{#1}_{#2}}
\newtheorem{thm}{Theorem}%[section]
\newtheorem{lem}{Lemma}
\newtheorem{prop}{Proposition}
\newtheorem{cor}{Corollary}
\newtheorem{clm}{Fact}
\theoremstyle{definition}
\newtheorem{definition}{Definition}
\newtheorem{example}{Example}
\newcommand{\mauricio}[1]{#1}
\newcommand{\langue}[1]{}
\newcommand{\comment}[1]{}
\begin{document}
\title{Classical realizability and arithmetical formul\ae}
\author{Mauricio Guillermo}
\author{\'Etienne Miquey}
\address{(Mauricio Guillermo)\\ \textsc{Universidad de la Rep\'ublica\\ IMERL\\ Facultad de Ingenier\'ia\\ Montevideo\\ Uruguay}}
\email{mguille@fing.edu.uy}
\address{(\'Etienne Miquey)\\ \textsc{PPS Laboratory, Univ Paris Diderot, Team PiR2, INRIA\newline\indent Universidad de la Rep\'ublica\\ IMERL\\ Facultad de Ingenier\'ia\\ Montevideo\\ Uruguay}}
\email{etienne.miquey@ens-lyon.fr}

\begin{abstract}
In this paper we treat the specification problem in Krivine classical realizability~\cite{Kri09},
in the case of arithmetical formul\ae.
In the continuity of previous works from Miquel and the first author~\cite{GuiPhD,Peirce}, we characterize the universal realizers of a formula 
as being the winning strategies for a game (defined according to the formula).
In the first section we recall the definition of classical realizability, as well as a few technical results.
In Section~\ref{s:specification}, we introduce in more details the specification problem 
and the intuition of the game-theoretic point of view we adopt later.
We first present a game $\G^{1}$, that we prove to be adequate and complete if the language contains no instructions `\texttt{quote}'~\cite{Kri03},
using interaction constants to do substitution over execution threads. 
We then show that as soon as the language contain `\texttt{quote}', the game is no more complete, and present a second game 
$\G^{2}$ that is both adequate and complete in the general case.
In the last Section, we draw attention to a model-theoretic point of view and 
use our specification result to show that arithmetical formul\ae\ are absolute for realizability models.
\end{abstract}
\maketitle

\section{Introduction}
The so called \emph{Curry-Howard correspondence} constituted an important breakthrough in proof theory,
by evidencing a strong connection between the notions of functional programming and proof theory \cite{CF58,How69,Gir89}. 
For a long time, this correspondence has been limited to intuitionistic proofs and constructive mathematics,
so that classical reasonings, that are omnipresent in mathematics, could only be retrieved through 
negative translations to intuitionistic logic~\cite{Fri78} or to linear
logic~\cite{Gir06}. 

In 1990, Griffin discovered that the control operator \texttt{call/cc} (for \emph{call with current continuation})
of the Scheme programming language could be typed by Peirce's law $((A\to B)\to A)\to A)$,
this way extending the formu\ae-as-types interpretation \cite{How69}.
As Peirce's law is known to imply, in an intuitionistic framework, all the other forms of classical reasoning 
(excluded middle, \emph{reductio ad absurdum}, double negation elimination, etc.),
this discovery opened the way for a direct computational interpretation of classical proofs,
using control operators and their ability to \emph{backtrack}. 
Several calculi were born from this idea, such as Parigot's $\lambda\mu$-calculus~\cite{Par97}, 
Barbanera
and Berardi's symmetric $\lambda$-calculus~\cite{BB96}, 
Krivine's
$\lambda_c$-calculus~\cite{Kri09} or Curien and Herbelin's
$\bar{\lambda}\mu\tilde\mu$-calculus~\cite{CH00}.

Nonetheless, some difficulties quickly appeared in the analysis of the computational behaviour of programs extracted from classical proofs.
One reason for these difficulties was precisely the presence of control operators, 
whose ability to backtrack breaks the linearity of the execution of programs. 
More importantly, the formul{\ae}-as-types interpretation suffered from the lack of a theory 
connecting the point of view of typing with the point of view of computation.
Realizability was designed by Kleene to interpret the computational contents of
the proofs of Heyting arithmetic~\cite{Kle45}, and even if it has been extended later to more general frameworks 
(like intuitionistic set theories~\cite{Myh73,Fri73,McCPhD}),
it is intrinsically incompatible with classical reasoning:
the negation of the middle excluded principle is realizable.

\subsection{Classical realizibility}
To address this problem, Krivine introduced in the middle of the 90s the theory of \emph{classical realizability}~\cite{Kri09}, 
which is a complete reformulation of the very principles of realizability to make them compatible with classical reasoning. 
(As noticed in several articles~\cite{Oli08,Miq10}, classical realizability can be seen as a reformulation of Kleene's realizability 
through Friedman's $A$-translation~\cite{Fri78}.) 
Although it was initially introduced to interpret the proofs of classical second-order arithmetic, 
the theory of classical realizability can be scaled to more expressive theories such as Zermelo-Fraenkel 
set theory~\cite{Kri01} or the calculus of constructions with universes~\cite{Miq07}.  

As in intuitionistic realizability, every formula~$A$ is interpreted in classical realizability as a set~$|A|$ of programs 
called the \emph{realizers} of~$A$, that share a common computational behaviour dictated by the structure of the formula~$A$. 
This point of view is related to the point of view of deduction (and of typing) via the property of \emph{adequacy}, 
that expresses that any program extracted from a proof of~$A$---that is: any program of type~$A$---realizes the formula~$A$, 
and thus has the computational behaviour expected from the formula~$A$.  

However the difference between intuitionistic and classical
realizability is that in the latter,  
the set of realizers of~$A$ is defined indirectly, that is from a set
$\|A\|$ of execution contexts (represented as argument stacks)  
that are intended to challenge the truth of~$A$. 
Intuitively, the set~$\|A\|$---which we shall call the \emph{falsity
  value of~$A$}---can be understood as the set of all possible  
counter-arguments to the formula~$A$.  
In this framework, a program realizes the formula~$A$---i.e.\ belongs
to the \emph{truth value}~$|A|$---if and only if  
it is able to defeat all the attempts to refute~$A$ using a stack in~$\|A\|$. 
(The definition of the classical notion of a realizer is also
parameterized by a \emph{pole} representing a particular challenge,  
that we shall define and discuss in Section~\ref{sss:Poles}.)  

By giving an equal importance to programs---or terms---that `defend'
the formula~$A$, and to execution contexts---or stacks---that `attack'  
the formula~$A$, the theory of classical realizability is therefore
able to describe the interpretation of classical reasoning in terms of  
manipulation of whole stacks (as first class citizens) using control operators.

\subsection{Krivine $\lambda_c$-calculus}

The programming language commonly used in classical realizability is
Krivine's $\lambda_c$-calculus, which is an extension of Church's
$\lambda$-calculus~\cite{Chu41}
containing an instruction $\cc$ (representing the control operator
\texttt{call/cc}). 
and \emph{continuation constants} embedding stacks.
Unlike the traditional $\lambda$-calculus, the $\lambda_c$-calculus 
is parameterized by a particular execution strategy ---corresponding 
to the Krivine Abstract Machine~\cite{Kam}---
so that the notion of confluence---which is central in traditional
$\lambda$-calculi, does not make sense anymore. 
The property of confluence is replaced by the property of determinism,
which is closer from the point of view of real programming languages. 

A pleasant feature of this calculus is that 
it can be enriched with \emph{ad hoc} extra instructions.
For instance, a \texttt{print} instruction might be added to trace an execution,
as well as extra instructions manipulating primitive numerals to do
some code optimization~\cite{Miq10}. 
In some situations, extra instructions can also be designed to realize
reasoning principles, 
the standard example being the instruction $\Quote$ that computes the
Gödel code of a stack, 
used for instance to realize the axiom of dependent choice~\cite{Kri03}.
In this paper, we shall consider this instruction together $\Eq$, 
that tests the syntactic equality between two $\lambda_c$-terms.

\subsection{The specification problem}
A central problem in classical realizability is the
\emph{specification problem}, 
which is to find a characterization for the (universal) realizers of a formula
by their computational behaviour.
In intuitionistic logic, this characterization does not contain more
information than  
the formula itself, so that this problem has been given little attention. 
For instance, the realizers of an existential formula $\existsN x A(x)$
are exactly the ones reducing to a pair made of a witness $n\in\N$ and
a proof term realizing $A(n)$~\cite{Kri93}. 

However, in classical realizability the situation appears to be quite different 
and the desired characterization is in general much more difficult to obtain.
Indeed, owing to the presence of control operators in the language of terms,
the realizers have the ability to backtrack at any time, making the
execution harder to predict. 
Considering for instance the very same formula $\existsN x A(x)$,
a classical realizer of it can give as many integers for $x$ as it wants, 
using backtrack to make another try.
Hence we can not expect from such a realizer to reduct directly to a witness
(for an account of witness extraction techniques in classical
realizability, see Miquel's article~\cite{Miq10}). 
In addition, as we will see in Section~\ref{ss:halt}, 
giving such a witness might be computationally impossible without backtrack, 
for example in the case of a formula relying on the Halting Problem.
We will treat this particular example in Section~\ref{ss:halt}.

Furthermore, as stated in the article on Peirce's Law~\cite{Peirce}, the presence of instructions such as $\Quote$ makes the problem
still more subtle. We will deal with this particular case in Section~\ref{s:nsubst}.

\subsection{Specifying arithmetical formul\ae}
The architecture of classical realizability is centered around the
opposition between falsity values (stacks) 
and truth values (terms). This opposition, as well as the underlying
intuition (opponents vs. defenders), 
naturally leads us to consider the problem in a game-theoretic setting.
Such a setting---namely realizability games--- was defined by Krivine
as a mean to \mauricio{prove that any arithmetical formula which
  is \emph{universally realized} (i.e.:
  realized for all poles) is true in the ground model
  (often considered as the 
  \emph{standard full model of second order 
  arithmetics})~(c.f.: theorem 16 of \cite{Kri03}). Thereafter, Krivine also
  proves the converse, which is that every arithmetical formula which
  is true in the ground model is realized by the term 
  implementing the trivial winning strategy of the game associated to
  the formula (c.f.: theorem 21 of
  \cite{Kri09}). These realizability games are 
  largely inspired on the \emph{non-counterexample interpretation} of
  Kreisel \cite{Kreisel51}, \cite{Kreisel52} and the subsequent
  developpement of game semantics for proofs by
  Coquand~\cite{Coquand95}.} 

  \mauricio{Our goal is to establish an operational
  description which characterize \emph{all} the realizers of a
  given arithmetical formula. In particular, 
  it does not suffice to find a realizer for any true arithmetical 
  formula, but we want to explicit a sufficient operational condition to
  be a realizer.}  

  In Coquand's games, the only atomic
  formul\ae~ are $\top$ and $\bot$, 
  therefore a strategy   
  for a true atomic formula does nothing, as the game is already won by
  the defender.  
  As a consequence, any ``blind'' enumeration of $\N^k$ is a winning
  strategy for every true $\Sigma^0_{2k}$-formul\ae. 
  Such a strategy, which is central in Krivine's proof that any true
  formula in the ground model is realized~\cite[Theorem 21]{Kri09}, 
  has no interesting computational content. \mauricio{Even more}, it is
  not suitable 
  for being a realizer in the general case where we use Leibniz
  equality.  
  This remark will be discussed more consistently in Section \ref{s:models}. 

  \mauricio{The game developped by Krivine makes
  both players to use only constants. If the calculus does not contain
  instructions incompatible with substitution (like {\tt 'quote'}),
  this game is   
  equivalent to the one we prove that specifies the arithmetical
  formul\ae\ in the substitutive case. However, Krivine's realizers are
  eventually intended to contain {\tt 'quote'}. In this general case,
  we prove that the specification is obtained from the first game by a
  relaxation of the rules of $\exists$.}
 
%More, the possibility of analysing computational content of winning
%strategies mostly relies on a substitutive property of the reduction
%threads,  
%which is lost as soon as the language has the instruction \Quote. 
%More precisely, in the presence of such an instruction, some realizer
%are no-longer winning strategies and the usual proof that a winning
%strategy is a realizer does not work any longer. 

  \mauricio{Thus,} both works left open the question of giving a
    precise specification for arithmetic formul\ae~ in the general case.

In this paper we will rephrase the game-theoretic framework of the
first author Ph.D. thesis~\cite{GuiPhD}  
to provide a game-theoretic characterization $\G^{1}$ that is both complete and adequate, 
in the particular case where the underlying calculus contains
infinitely many \emph{interaction constants}. 
However, this hypothesis---that is crucial in our proof of completeness---is known to be incompatible 
with the presence of instructions such as $\Quote$ or $\Eq$~\cite{Peirce}, which allow us to distinguish syntactically $\lambda_c$ terms 
that are computationally equivalent. 
We exhibit in Section~\ref{ss:wild} a \emph{wild realizer} 
that uses these instructions and does not suit as a winning strategy for $\G^{1}$, 
proving that $\G^{1}$ is no more complete in this case.

Indeed, as highlighted in the article on Peirce's Law~\cite{Peirce}, the presence of such instructions 
introduces a new---and purely game-theoretic---form of backtrack that does not come from a control operator
but from the fact that realizers, using a syntactic equality test provided by $\Quote$, 
can check whether a position has already appeared before.
We present in Section~\ref{s:nsubst} a second game $\G^{2}$ that allows this new form of backtrack, and captures the behaviour 
of our wild realizer. Then we prove that without any assumption on the set of instructions,
this game is both adequate and complete, thus constituting the definitive specification of arithmetical formul\ae.

\subsection{Connexion with forcing}
\label{ss:introforcing}
In addition to the question of knowing how to specify arithmetical formul\ae,
this paper presents an answer to another question, which is to know 
whether arithmetical formul\ae\ are \emph{absolute} for realizability models.
In set theory, a common technique to prove independence results in theory is to use forcing,
that allows us to extend a model and add some specific properties to it.
Yet, it is known $\Sigma^{1}_2$-formul\ae~are absolute for a large class of models, including those produced by forcing.
This constitutes somehow a barrier to forcing, 
which  does not permit to change the truth of formul\ae~that are below $\Sigma^{1}_2$ in the arithmetical hierarchy.

If classical realizability was initially designed to be a semantics for proofs of Peano second-order arithmetic,
it appeared then to be scalable to build models for high-order arithmetic~\cite{Forcing} or set theory~\cite{RAlgI}.
Just like forcing techniques, these constructions rest upon a ground model and allow us to break some formul\ae~ 
that were true in the ground model, say the continuum hypothesis or the axiom of choice~\cite{Gimmel}.
In addition, the absoluteness theorem of $\Sigma^{2}_1$ does not apply to realizability model.
Hence it seems quite natural to wonder, as for forcing, whether realizability models preserve some formul\ae.
We will explain in Section \ref{s:models} how the specification results allow us to show that arithmetical formul\ae\ 
are absolute for realizability models.

\section{The language $\lambda_c$}
\label{s:Lamc}

A lot of the notions we use in this paper are the very same as in the article on Peirce's Law~\cite{Peirce}. 
We will recall them briefly, for a more gentle introduction, 
we advise the reader to refer to this paper.
\newcommand{\tabspace}{\\[-1.5ex]}
\newcommand{\leg}[1]{\renewcommand{\arraystretch}{1.5}\setlength{\tabcolsep}{1cm}
		      \begin{tabular}{c}
		          \textbf{#1}
		      \end{tabular}\\}

% \newbox\hautbox \setbox\hautbox=\hbox{\vphantom{\rule[-1cm]{0cm}{2cm}}}
% \begin{tabular}{|@{\usebox{\hautbox}}c|}

\begin{figure}[ht]
\begin{tabular}{|c|}
\hline
\tabspace
 \begin{tabular}{>{\bf}l@{\hspace{2cm}}ccl@{\hspace{1.5cm}}r}
    Terms & $t,u$ &$::=$& $x  \mid \lambda x . t \mid tu \mid  \k_\pi \mid \kappa$ & $x,\in\V_\lambda,\kappa\in C$ \\
    Stacks & $\pi$ &$::=$& $\alpha ~~|~~ t\cdot \pi $ & ($\alpha\in\B$,~$t$ closed)\\ 
    Processes & $p, q$ &$::=$& $t\star \pi$ & ($t$ closed)\\
  \end{tabular}
\\
\tabspace
\hline
\tabspace
\small 
$\begin{array}{rcl}
  x\{c:=u\} &\equiv& x \\
  (\Lam{x}{t})\{c:=u\} &\equiv& \Lam{x}{t\{c:=u\}} \\
  (t_1t_2)\{c:=u\} &\equiv& t_1\{c:=u\}t_2\{c:=u\} \\
  \k_{\pi}\{c:=u\} &\equiv& \k_{\pi\{c:=u\}} \\
%   \end{array}\qquad\begin{array}{rcl}
  c\{c:=u\} &\equiv& u \\
  c'\{c:=u\} &\equiv& c'\hfill(\text{if}~c'\not\equiv c) \\
  \alpha\{c:=u\} &\equiv& \alpha \\
  (t\cdot\pi)\{c:=u\} &\equiv& t\{c:=u\}\cdot\pi\{c:=u\} \\
\end{array}\quad
\begin{array}{@{}rcl}
  x\{\alpha:=\pi_0\} &\equiv& x \\
  (\Lam{x}{t})\{\alpha:=\pi_0\} &\equiv&
  \Lam{x}{t\{\alpha:=\pi_0\}} \\
  (t_1t_2)\{\alpha:=\pi_0\} &\equiv&
  t_1\{\alpha:=\pi_0\}t_2\{\alpha:=\pi_0\} \\
  \k_{\pi}\{\alpha:=\pi_0\} &\equiv& \k_{\pi\{\alpha:=\pi_0\}} \\
% \end{array}\begin{array}{rcl@{}}
  c\{\alpha:=\pi_0\} &\equiv& c \\
  \alpha\{\alpha:=\pi_0\} &\equiv& \pi_0 \\
  \alpha'\{\alpha:=\pi_0\} &\equiv& \alpha'\hfill
  (\text{if}~\alpha'\not\equiv\alpha) \\
  (t\cdot\pi)\{\alpha:=\pi_0\} &\equiv&
  t\{\alpha:=\pi_0\}\cdot\pi\{\alpha:=\pi_0\} \\
\end{array}$
 \\
 \normalsize
\leg{Substitution over terms and stacks}
\tabspace
\hline
\tabspace
\begin{tabular}{>{\bf}lc@{$::=$}lr}
First-order terms& $e_1,e_2$& $x\mid f(e_1,\ldots,e_k)$ & $x\in\V_1,f\in\Sigma$\\
Formul\ae & $A,B$ & $X(e_1,\ldots,e_k)\mid A\limp B \mid \forall x A\mid \forall X A$ & $X\in\V_2$
\end{tabular}
\\ 
\tabspace
\hline
\tabspace
\small
 $\begin{array}{@{}rcl}
  \bot &\equiv& \forall Z\,Z \\
  \lnot A &\equiv& A\limp\bot \\
  A\land B &\equiv& \forall Z\,((A\limp B\limp Z)\limp Z) \\
  A\lor B &\equiv& \forall Z\,((A\limp Z)\limp(B\limp Z)\limp Z) \\
\end{array}\begin{array}{rcl@{}}
  A\liff B &\equiv& (A\limp B)\land(B\limp A) \\
  \exists x\,A(x) &\equiv& \forall Z\,
  (\forall x\,(A(x)\limp Z)\limp Z) \\
  \exists X\,A(X) &\equiv& \forall Z\,
  (\forall X\,(A(X)\limp Z)\limp Z) \\
  e_1=e_2 &\equiv& \forall W\,(W(e_1)\limp W(e_2)) \\
\end{array}$\\ 
\normalsize
\leg{Second-order encodings} 
\tabspace
\hline
\tabspace
\renewcommand{\arraystretch}{2.5}
\setlength{\tabcolsep}{1cm}
$\begin{array}{ccc}

%  \multicolumn{2}{c}{
 \infer[\scriptstyle(x:A)\in\Gamma]{\TYP{\Gamma}{x}{A}}{} &
 
    \infer{\TYP{\Gamma}{\Lam{x}{t}}{A\limp B}}{
      \TYP{\Gamma,x:A}{t}{B}
    } 
    &
    \infer{\TYP{\Gamma}{tu}{B}}{
      \TYP{\Gamma}{t}{A\limp B} &\quad \TYP{\Gamma}{t}{A}
    } \\
    
    \infer[\scriptstyle x\notin\FV(\Gamma)]
    {\TYP{\Gamma}{t}{\forall x\,A}}{\TYP{\Gamma}{t}{A}}
    &
    \infer{\TYP{\Gamma}{t}{A\{x:=e\}}}{
      \TYP{\Gamma}{t}{\forall x\,A}
    } 
    &
    \infer[\scriptstyle X\notin\FV(\Gamma)]
    {\TYP{\Gamma}{t}{\forall X\,A}}{\TYP{\Gamma}{t}{A}}
    \\
    \infer{\TYP{\Gamma}{t}{A\{X:=P\}}}{
      \TYP{\Gamma}{t}{\forall X\,A}
    } 
    &
    \multicolumn{2}{c}{\infer{\TYP{\Gamma}{\cc}{((A\limp B)\limp A)\limp A}}{}}
 \end{array}$ \\
 \leg{Typing rules of second-order logic}[-1.5ex]\\
    
    \hline
\end{tabular}
\caption{Definitions}
\label{fig:bnf}
\end{figure}

\subsection{Terms and stacks}
\label{ss:TermsStacks}

The $\lambda_c$-calculus distinguishes two kinds of syntactic
expressions: \emph{terms}, which represent programs, and \emph{stacks},
which represent evaluation contexts.
% The terms of the $\lambda_c$-calculus are pure
% $\lambda$-terms~\cite{Chu41,Bar84} enriched with two kinds of
% constants:
% \begin{itemize}
% \item\emph{Continuation constants} $\k_{\pi}$, one for every
%   stack~$\pi$;
% \item\emph{Instructions}, such as the control operator
%   \texttt{call/cc} (written here $\cc$), that are taken in a fixed
%   set~$\C$ of constants.
% \end{itemize}
% The stacks of the $\lambda_c$-calculus are finite lists of closed
% terms terminated by a stack constant taken in a fixed set~$\B$ of
% \emph{stack constants}, also known as \emph{stack bottoms}.
Formally, terms and stacks of the $\lambda_c$-calculus are %thus
defined (see Fig.~\ref{fig:bnf}) from three auxiliary sets of symbols, that are pairwise
disjoint:
\begin{itemize}
\item A denumerable set~$\V_{\lambda}$ of $\lambda$-variables
  (notation: $x$, $y$, $z$, etc.)
\item A countable set~$\C$ of instructions, which contains at least an
  instruction $\cc$ (`call$/$cc', for: \emph{call with current
    continuation}).
\item A nonempty countable set~$\B$ of stack constants, also called
  stack bottoms (notation: $\alpha$, $\beta$, $\gamma$, etc.)
\end{itemize}
% 
% \begin{definition}[Terms and stacks]
%   --- Terms and stacks of the $\lambda_c$-calculus are defined by
%   mutual induction from the following formation rules:
%   \begin{enumerate}
%   \item If~$x\in\V_{\lambda}$ is a $\lambda$-variable, then~$x$ is a
%     term, and $\FV(x)=\{x\}$.
%   \item If~$c\in\C$ is an instruction, then $c$ is a term,
%     and $\FV(c)=\varnothing$.
%   \item If~$\pi$ is a stack, then $\k_{\pi}$ is a term,
%     and $\FV(\k_{\pi})=\varnothing$.
%   \item If~$t$ and~$u$ are terms, then $tu$ is a term,
%     and $\FV(tu)=\FV(t)\cup\FV(u)$.
%   \item If~$x\in\V_{\lambda}$ is a $\lambda$-variable and if~$t$ is a
%       term, then $\Lam{x}{t}$ is a term, and
%     $\FV(\Lam{x}{t})=\FV(t)\setminus\{x\}$.
%   \item If~$\alpha\in\B$ is a stack constant, then~$\alpha$ is a stack.
%   \item If~$t$ is a closed term (i.e.\ $\FV(t)=\varnothing$)
%     and if~$\pi$ is a stack, then $t\cdot\pi$ is a stack.
%   \end{enumerate}
% \end{definition}

% % In this definition, we define every $\lambda_c$-term~$t$ together with
% % its set of free variables $\FV(t)$, so that we can restrict the
% % application of rule (7) to closed terms~$t$.
% Thanks to this restriction, stacks are always closed objects and
% continuation constants~$\k_{\pi}$ are actually constant.

In what follows, we adopt the same writing conventions as in the pure
$\lambda$-calculus, by considering that application is
left-associative and has higher precedence than abstraction.
We also allow several abstractions to be regrouped under a
single~$\lambda$, so that the closed term
$\Lam{x}{\Lam{y}{\Lam{z}{((zx)y)}}}$ can be more simply written
$\Lam{xyz}{zxy}$.

As usual, terms and stacks are considered up to
$\alpha$-conversion~\cite{Bar84} and we denote by $t\{x:=u\}$ the
term obtained by replacing every free occurrence of the variable~$x$
by the term~$u$ in the term~$t$, possibly renaming the bound variables
of~$t$ to prevent name clashes.
The sets of all closed terms and of all (closed) stacks are
respectively denoted by~$\Lambda$ and~$\Pi$.

\begin{definition}[Proof-like terms]
  -- We say that a $\lambda_c$-term~$t$ is \emph{proof-like} if~$t$
  contains no continuation constant~$\k_{\pi}$. We denote by $\Pl$ the
  set of all proof-like terms. 
\end{definition}

% The above terminology comes from the fact that every realizer coming
% from the proof of a theorem of~PA2 is of this form (as we shall see
% in Theorem~\ref{th:RealizPA2} p.~\pageref{th:RealizPA2}).

Finally, every natural number $n\in\N$ is represented in the
$\lambda_c$-calculus as the closed proof-like term~$\barr{n}$ defined
by
$$\barr{n}~\equiv~\barr{s}^n\barr{0}~\equiv~
\underbrace{\barr{s}(\cdots(\barr{s}}_n\barr{0})\cdots)\,,$$
where $\barr{0}\equiv\Lam{xf}{x}$ and
$\barr{s}\equiv\Lam{nxf}{f(nxf)}$ are Church's encodings of zero and
the successor function in the pure $\lambda$-calculus.
Note that this encoding slightly differs from the traditional encoding
of numerals in the $\lambda$-calculus, although the term
$\barr{n}\equiv\barr{s}^n\barr{0}$ is clearly $\beta$-convertible to
Church's encoding $\Lam{xf}{f^nx}$---and thus computationally
equivalent.
The reason for preferring this modified encoding is that it is better
suited to the call-by-name discipline of Krivine's Abstract Machine
(KAM) we will now present.

\subsection{Krivine's Abstract Machine}

In the $\lambda_c$-calculus, computation occurs through the
interaction between a closed term and a stack within Krivine's Abstract
Machine (KAM).
Formally, we call a \emph{process} any pair $t\star\pi$ formed by a
closed term~$t$ and a stack~$\pi$.
The set of all processes is written $\Lambda\star\Pi$ (which is just
another notation for the Cartesian product of~$\Lambda$ by~$\Pi$).

\begin{definition}[Relation of evaluation]
 We call a relation of \emph{one step evaluation} any binary
  relation $\eval_1$ over the set $\Lambda\star\Pi$ of processes that
  fulfils the following four axioms: 
  $$\begin{array}{r@{~}c@{~}lcr@{~}c@{~}l}
    tu&\star&\pi &\eval_1& t&\star&u\cdot\pi \\
    (\Lam{x}{t})&\star&u\cdot\pi &\eval_1& t\{x:=u\}&\star&\pi \\
    \cc&\star&t\cdot\pi &\eval_1& t&\star&\k_{\pi}\cdot\pi \\
    \k_{\pi}&\star&t\cdot\pi' &\eval_1& t&\star&\pi \\
  \end{array}\leqno\begin{array}{@{}l}
  \textsc{(Push)}\\\textsc{(Grab)}\\
  \textsc{(Save)}\\\textsc{(Restore)}\\
  \end{array}$$
  The reflexive-transitive closure of~$\eval_1$ is written $\eval$.
\end{definition}

One of the specificities of the $\lambda_c$-calculus is that it comes
with a binary relation of (one step) evaluation~$\eval_1$ that is not
\emph{defined}, but \emph{axiomatized} via the rules (\textsc{Push}),
(\textsc{Grab}), (\textsc{Save}) and (\textsc{Restore}).
In practice, the binary relation~$\eval_1$ is simply another parameter
of the definition of the calculus, just like the sets~$\C$ and~$\B$.
Strictly speaking, the $\lambda_c$-calculus is not a particular
extension of the $\lambda$-calculus, but a family of extensions of the
$\lambda$-calculus parameterized by the sets~$\B$, $\C$ and the
relation of one step evaluation~$\eval_1$.
(The set~$\V_{\lambda}$ of $\lambda$-variables---that is
interchangeable with any other denumerable set of symbols---does not
really constitute a parameter of the calculus.)

\subsection{Adding new instructions}
\label{ss:ExtraInstr}

The main interest of keeping open the definition of the sets~$\B$,
$\C$ and of the relation evaluation~$\eval_1$ (by axiomatizing rather
than defining them) is that it makes possible to enrich the calculus
with extra instructions and evaluation rules, simply by putting
additional axioms about~$\C$, $\B$ and~$\eval_1$.
On the other hand, the definitions of classical
realizability~\cite{Kri09} as well as its main properties do not
depend on the particular choice of~$\B$, $\C$ and~$\eval_1$, although
the fine structure of the corresponding realizability models is of
course affected by the presence of additional instructions and
evaluation rules.

For the needs of the discussion in Section~\ref{s:subst}, we shall
sometimes consider the following extra instructions in the set~$\C$:
\begin{itemize}
\item The instruction $\Quote$, which comes with the evaluation
  rule
  $$\Quote\star t\cdot\pi~\eval_1~t\star\barr{n}_{\pi}\cdot\pi\,,
  \leqno(\textsc{Quote})$$
  where $\pi\mapsto n_{\pi}$ is a recursive injection from~$\Pi$
  to~$\N$.
  Intuitively, the instruction $\Quote$ computes the `code' $n_{\pi}$
  of the stack~$\pi$, and passes it (using the encoding
  $n\mapsto\barr{n}$ described in Section~\ref{ss:TermsStacks}) to the
  term~$t$.
  This instruction was originally introduced  to realize the axiom
  of dependent choices~\cite{Kri03}.
  \medbreak
\item The instruction $\Eq$, which comes with the evaluation
  rule
  $$\Eq\star t_1\cdot t_2\cdot u\cdot v\cdot\pi
  ~\eval_1~\begin{cases}
    u\star\pi & \text{if}~t_1\equiv t_2 \\
    v\star\pi & \text{if}~t_1\not\equiv t_2 \\
  \end{cases}\leqno(\textsc{Eq})$$
  Intuitively, the instruction $\Eq$ tests the syntactic equality of
  its first two arguments~$t_1$ and~$t_2$ (up to $\alpha$-conversion),
  giving the control to the next argument~$u$ if the test succeeds,
  and to the second next argument~$v$ otherwise.
  In presence of the $\Quote$ instruction, it is possible to implement
  a closed~$\lambda_c$-term $\Eq'$ that has the very same
  computational behaviour as~$\Eq$, by letting
  $$\Eq'~\equiv~
  \Lam{x_1x_2}{\Quote\,(\Lam{n_1y_1}{\Quote\,(\Lam{n_2y_2}
      {\mathsf{eq\_nat}\,n_1\,n_2})\,x_2})\,x_1}\,,$$
  where $\mathsf{eq\_nat}$ is any closed $\lambda$-term that tests the
  equality between two numerals (using the encoding $n\mapsto\barr{n}$).
  \medbreak
\item The instruction $\Fork$ (`fork'), which comes with the two
  evaluation rules
  $$\Fork\star t_0\cdot t_1\cdot\pi\eval_1 t_0\star\pi
  \qquad\text{and}\qquad
  \Fork\star t_0\cdot t_1\cdot\pi\eval_1 t_1\star\pi\,.
  \leqno(\textsc{Fork})$$
  Intuitively, the instruction~$\Fork$ behaves as a non deterministic
  choice operator, that indifferently selects its first or its second
  argument.
  The main interest of this instruction is that it makes evaluation
  non deterministic, in the following sense:
\end{itemize}

\begin{definition}[Deterministic evaluation]\label{d:Determinism}
 We say that the relation of evaluation $\eval_1$ is
  \emph{deterministic} when the two conditions $p\eval_1p'$ and
  $p\eval_1p''$ imply $p'\equiv p''$ (syntactic identity) for all
  processes $p$, $p'$ and~$p''$.
  Otherwise, $\eval_1$ is said to be \emph{non deterministic}.
\end{definition}

The smallest relation of evaluation, that is defined as the
union of the four rules (\textsc{Push}), (\textsc{Grab}),
(\textsc{Save}) and (\textsc{Restore}), is clearly deterministic.
The property of determinism still holds if we enrich the calculus with
an instruction~$\Eq\,({\not\equiv}\,\cc)$ together with the
aforementioned evaluation rules or with the
instruction~$\Quote\,({\not\equiv}\,\cc)$.

On the other hand, the presence of an instruction~$\Fork$ with the
corresponding evaluation rules definitely makes the relation of
evaluation non deterministic.

\subsection{The thread of a process and its anatomy}
Given a process~$p$, we call the \emph{thread} of~$p$ and
write~$\Th(p)$ the set of all processes~$p'$ such that $p\eval p'$:
$$\Th(p)~=~\{p'\in\Lambda\star\Pi~:~p\eval p'\}\,.$$
This set has the structure of a finite or infinite (di)graph whose
edges are given by the relation~$\eval_1$ of one step evaluation.
In the case where the relation of evaluation is deterministic, the
graph $\Th(p)$ can be either:
\begin{itemize}
\item\emph{Finite and cyclic from a certain point}, because the
  evaluation of~$p$ loops at some point.
  A typical example is the process
  $\textbf{I}\star\delta\delta\cdot\alpha$ (where
  $\textbf{I}\equiv\Lam{x}{x}$ and $\delta\equiv\Lam{x}{xx}$),
  that enters into a 2-cycle after one evaluation step:
  $$\textbf{I}\star\delta\delta\cdot\alpha
  ~\eval_1~\delta\delta\star\alpha
  ~\eval_1~\delta\star\delta\cdot\alpha~
  ~\eval_1~\delta\delta\star\alpha
  ~\eval_1~\cdots$$
\item\emph{Finite and linear}, because the evaluation of~$p$ reaches a
  state where no more rule applies.
  For example:
  $$\textbf{I}\textbf{I}\star\alpha
  ~\eval_1~\textbf{I}\star\textbf{I}\cdot\alpha
  ~\eval_1~\textbf{I}\star\alpha\,.$$
\item\emph{Infinite and linear}, because~$p$ has an infinite execution
  that never reaches twice the same state.
  A typical example is given by the process
  $\delta'\delta'\star\alpha$, where
  $\delta'\equiv\Lam{x}{x\,x\,\textbf{I}}$:
  $$\delta'\delta'\star\alpha
  ~\eval_3~\delta'\delta'\star\textbf{I}\cdot\alpha
  ~\eval_3~\delta'\delta'\star\textbf{I}\cdot\textbf{I}\cdot\alpha
  ~\eval_3~\delta'\delta'\star\textbf{I}\cdot\textbf{I}
  \cdot\textbf{I}\cdot\alpha
  ~\eval_3~\cdots$$
\end{itemize}

% \subsection{Substituting term and stack constants}

% In some situations, it is desirable to substitute a closed term~$u$ to
% a particular constant~$c\in\C$ throughout the structure of a term~$t$
% or of a stack~$\pi$.
% Formally, these substitutions are defined as follows:
% $$\begin{array}{rcl}
%   x\{c:=u\} &\equiv& x \\
%   (\Lam{x}{t})\{c:=u\} &\equiv& \Lam{x}{t\{c:=u\}} \\
%   (t_1t_2)\{c:=u\} &\equiv& t_1\{c:=u\}t_2\{c:=u\} \\
%   \k_{\pi}\{c:=u\} &\equiv& \k_{\pi\{c:=u\}} \\
% \end{array}\qquad\begin{array}{rcl}
%   c\{c:=u\} &\equiv& u \\
%   c'\{c:=u\} &\equiv& c'\hfill(\text{if}~c'\not\equiv c) \\
%   \alpha\{c:=u\} &\equiv& \alpha \\
%   (t\cdot\pi)\{c:=u\} &\equiv& t\{c:=u\}\cdot\pi\{c:=u\} \\
% \end{array}$$
% Similarly, we also define two operations of substitutions
% $t\{\alpha:=\pi_0\}$ and $\pi\{\alpha:=\pi_0\}$ where a stack
% constant~$\alpha\in\B$ is replaced by a given stack~$\pi_0$
% throughout the structure of a term~$t$ or of a stack~$\pi$, by
% letting:
% $$\begin{array}{@{}rcl}
%   x\{\alpha:=\pi_0\} &\equiv& x \\
%   (\Lam{x}{t})\{\alpha:=\pi_0\} &\equiv&
%   \Lam{x}{t\{\alpha:=\pi_0\}} \\
%   (t_1t_2)\{\alpha:=\pi_0\} &\equiv&
%   t_1\{\alpha:=\pi_0\}t_2\{\alpha:=\pi_0\} \\
%   \k_{\pi}\{\alpha:=\pi_0\} &\equiv& \k_{\pi\{\alpha:=\pi_0\}} \\
% \end{array}\begin{array}{rcl@{}}
%   c\{\alpha:=\pi_0\} &\equiv& c \\
%   \alpha\{\alpha:=\pi_0\} &\equiv& \pi_0 \\
%   \alpha'\{\alpha:=\pi_0\} &\equiv& \alpha'\hfill
%   (\text{if}~\alpha'\not\equiv\alpha) \\
%   (t\cdot\pi)\{\alpha:=\pi_0\} &\equiv&
%   t\{\alpha:=\pi_0\}\cdot\pi\{\alpha:=\pi_0\} \\
% \end{array}$$

\subsection{Interaction constants}
The two examples of extra instructions $\Quote$ and $\Eq$ we gave in Section \ref{ss:ExtraInstr}
have a strong impact on the potential behaviour of processes.
Indeed, they are able to distinguish syntactically different terms that are computationally equivalent,
such as the terms $\I$ and $\I\I$.
To better understand the consequence of the presence of such extra instructions in the $\lambda_c$-calculus,
we need to introduce the important notion of interaction constant.   
This definition relies on the notions of substitution over terms and stacks, that are defined in Fig.~\ref{fig:bnf}.
Unlike the traditional form of substitution $t\{x:=u\}$ (which is only
defined for terms), the substitutions $t\{c:=u\}$ and $\pi\{c:=u\}$ also
propagate through the continuation constants $\k_{\pi}$.

\begin{definition}
 A constant $\kappa\in\C$ is said to be
 \begin{itemize}
  \item \emph{inert} if for all $\pi\in\Pi$, there is no process $p$ such that $\kappa\star\pi\eval_1 p$;
  \item \emph{substitutive} if for all $u\in\Lambda$ and for all processes $p,p'\in\Lambda\star\Pi$, 
  $p\eval_1 p'$ implies $p\{\kappa:=u\}\eval_1 p'\{\kappa:=u\}$;
  \item \emph{non generative} if for all processes $p,p'\in\Lambda\star\Pi$, 
  $p\eval_1 p'$, the constant $\kappa$ cannot occur in $p'$ unless it already occurs in $p$.
 \end{itemize}
 A constant $\kappa\in\C$ that is inert, substitutive and non
 generative is then called an \emph{interaction constant}.  
 Similarly, we say that a stack constant $\alpha\in\B$ is:
\begin{itemize}
    \item \emph{substitutive} if for all $\pi\in\Pi$ and for all
      processes $p,p'\in\Lambda\star\Pi$,  
  $p\eval_1 p'$ implies $p\{\alpha:=\pi\}\eval_1 p'\{\alpha:=\pi\}$;
  \item \emph{non generative} if for all processes $p,p'\in\Lambda\star\Pi$, 
  $p\eval_1 p'$, the constant $\alpha$ cannot occur in $p'$ unless it already occurs in $p$.
 \end{itemize}
\end{definition}

The main observation is that substitutive constants are incompatible
with both instruction $\Quote$ and $\Eq$ (see \cite{Peirce} for a proof): 
\begin{prop}\label{prop:ExtraInstructions}
 If the calculus of realizers contains one of both instructions $\Quote$ or $\Eq$, 
 then none of the constants $\kappa\in\C$ is substitutive.
\end{prop}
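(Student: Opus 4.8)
The plan is to argue by contradiction. Assume that some $\kappa\in\C$ is substitutive; I will use the instruction $\Eq$ (the $\Quote$ case is treated in the same spirit) to exhibit a one-step evaluation $p\eval_1 p'$ whose outcome genuinely depends on the \emph{name}~$\kappa$ --- that is, such that $p\{\kappa:=u\}\eval_1 p'\{\kappa:=u\}$ fails for a well-chosen $u\in\Lambda$ --- contradicting substitutivity. Throughout I fix a stack constant $\alpha\in\B$ and two distinct constant-free closed terms, say $a\equiv\I$ and $b\equiv\I\I$, which are syntactically distinct (an abstraction versus an application) and contain no element of $\C$ or $\B$.

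Suppose first $\kappa\not\equiv\Eq$. A constant is never an abstraction, so $\kappa\not\equiv\I$, and rule~(\textsc{Eq}) gives $p\equiv\Eq\star\kappa\cdot\I\cdot a\cdot b\cdot\alpha\eval_1 b\star\alpha\equiv p'$. Since $\kappa$ occurs neither in $\Eq$ (because $\kappa\not\equiv\Eq$) nor in $\I,a,b$ or $\alpha$, applying $\{\kappa:=\I\}$ yields $p\{\kappa:=\I\}\equiv\Eq\star\I\cdot\I\cdot a\cdot b\cdot\alpha$ and $p'\{\kappa:=\I\}\equiv b\star\alpha$; but~(\textsc{Eq}) now forces $p\{\kappa:=\I\}\eval_1 a\star\alpha$, and $a\star\alpha\not\equiv b\star\alpha$, so substitutivity fails. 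For the degenerate case $\kappa\equiv\Eq$, start instead from $p\equiv\Eq\star\I\cdot\I\cdot a\cdot b\cdot\alpha\eval_1 a\star\alpha\equiv p'$: substituting $\Eq:=\I$ turns $p$ into $\I\star\I\cdot\I\cdot a\cdot b\cdot\alpha$, which by~(\textsc{Grab}) reduces only to $\I\star\I\cdot a\cdot b\cdot\alpha$, a process whose stack is strictly longer than that of $p'\{\Eq:=\I\}\equiv a\star\alpha$, again impossible. The $\Quote$ case follows the same pattern, exploiting that $\Quote$ returns the code of a stack \emph{in which $\kappa$ may occur}: for $\kappa\not\equiv\Quote$ one takes $p\equiv\Quote\star\I\cdot\kappa\cdot\alpha\eval_1\I\star\barr{n_{\kappa\cdot\alpha}}\cdot\kappa\cdot\alpha\equiv p'$, so that $p\{\kappa:=\I\}\equiv\Quote\star\I\cdot\I\cdot\alpha$ is forced to $\I\star\barr{n_{\I\cdot\alpha}}\cdot\I\cdot\alpha$ whereas $p'\{\kappa:=\I\}\equiv\I\star\barr{n_{\kappa\cdot\alpha}}\cdot\I\cdot\alpha$, and $n_{\I\cdot\alpha}\neq n_{\kappa\cdot\alpha}$ by injectivity of $\pi\mapsto n_\pi$; the case $\kappa\equiv\Quote$ is dispatched exactly like $\kappa\equiv\Eq$.

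The only mildly delicate point, and the one I expect to be the real obstacle, is the repeated claim that the reduct of $p\{\kappa:=\I\}$ is \emph{forced}. This rests on the observation that the rules~(\textsc{Push}), (\textsc{Grab}), (\textsc{Save}), (\textsc{Restore}) --- and~(\textsc{Quote}), (\textsc{Fork}) --- are each triggered by a specific head symbol (an application, an abstraction, $\cc$, a continuation constant $\k_\pi$, $\Quote$, $\Fork$), so that on a process headed by the instruction $\Eq$ with at least four arguments on its stack the rule~(\textsc{Eq}) is the only one that can fire and its result is uniquely determined; for $\Quote$ one further invokes the injectivity of the stack-coding $\pi\mapsto n_\pi$ and of the numeral encoding $n\mapsto\barr n$. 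Everything else amounts to unfolding the substitution clauses of Figure~\ref{fig:bnf}, which is immediate here since none of the auxiliary terms or stacks involved contains $\kappa$.
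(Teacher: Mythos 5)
Your proof is correct and is essentially the argument the paper delegates to~\cite{Peirce}: place an occurrence of $\kappa$ where $\Eq$ (resp.\ $\Quote$) can see it, perform one step, substitute, and observe that the reduct imposed by substitutivity disagrees syntactically with the one imposed by the (\textsc{Eq})/(\textsc{Quote}) rule. You are also right to isolate the ``forcing'' step as the only delicate point: the argument needs $\eval_1$ to be generated by the head-symbol-triggered rules (so that an $\Eq$- or $\Quote$- or $\I$-headed process has a unique reduct), and this hypothesis, though left implicit in the statement, is genuinely necessary --- for an arbitrary relation satisfying the axioms (e.g.\ the total relation on $\Lambda\star\Pi$) every constant is vacuously substitutive and the proposition fails.
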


The very same argument can be applied to prove the incompatibility of substitutive stack constants with the instruction $\Quote$.
On the other hand, it is clear that if the relation of evaluation $\eval_1$ is only defined from the rules
(\textsc{Grab}),(\textsc{Push}),(\textsc{Save}) and (\textsc{Restore}) -and possibly: the rule (\textsc{Fork})- then all the remaining
constants $\kappa$ in $\C$ (i.e. $\kappa\nequiv \cc,\Fork$) are interaction constants (and thus substitutive),
whereas all the stack constants in $\B$ are substitutive and non generative.
Substitutive (term and stack) constants are useful to analyze the computational behaviour of realizers in 
a uniform way. For instance, if we know that a closed term $t\in\Lambda$ is such that
$$t\star \kappa_1\cdots\kappa_n\cdot\alpha\eval p$$
where $\kappa_1,\ldots,\kappa_n$ are substitutive constants that do not occur in $t$, and where $\alpha$ is a substitutive
stack constant that does not occur in $t$ too,
then we more generally know that
$$t\star u_1\cdots u_n\cdot\pi \eval p\{\kappa_1:=u_1,\ldots,\kappa_n:=u_n,\alpha:=\pi\}$$
for all terms $u_1,\ldots,u_n\in\Lambda$ and for all stacks $\pi\in\Pi$. 
Intuitively, substitutive constants play in the $\lambda_c$-calculus the same role as free variables in the pure $\lambda$-calculus.

% Finally, the following simple property of interaction constants will be useful later in this paper:
% \begin{lem}\label{lm:cste}
% Let $p\in\Lambda\star\Pi$ be a process.
% If $\kappa,\kappa'\in\C$ are two interaction constants and $\pi,\pi'\in\Pi$ are two stacks such that 
% both processes $\kappa\star\pi$ and $\kappa'\star\pi'$ are in $\Th(p)$, then $\kappa\equiv\kappa'$ and $\pi\equiv\pi'$
% \end{lem}
% \begin{proof}
%  Trivial since interaction constant are inert, hence $\kappa\star\pi\eval\kappa'\star\pi'$ implies $\kappa\equiv\kappa'$ and $\pi\equiv\pi'$.
% \end{proof}

\section{Classical second-order arithmetic}
\label{s:PA2}

In Section~\ref{s:Lamc} we delt with the \emph{computing facet}
of the theory of classical realizability.
In this section, we will now present its \emph{logical facet} by
introducing the language of classical second-order logic with the
corresponding type system.
In section~\ref{ss:PA2}, we will focus to the particular case of
\emph{second-order arithmetic} and present its axioms.

\subsection{The language of second-order logic}
\label{s:LangSOL}

The language of second-order logic distinguishes two kinds of
expressions:
\emph{first-order expressions} %
% \footnote{Here we prefer the terminology of a \emph{first-order
%     expression} to the more standard terminology of a
%   \emph{first-order term} to avoid a possible confusion with
%   $\lambda_c$-terms.},
 representing individuals, and \emph{formul{\ae}}, representing
propositions about individuals and sets of individuals (represented
using second-order variables as we shall see below).

\subsubsection{First-order expressions}\label{sss:FOExpr}
First-order expressions are formally defined (see Fig.~\ref{fig:bnf}) from the following sets of
symbols:
\begin{itemize}
\item A \emph{first-order signature}~$\Sigma$ defining \emph{function
  symbols} with their arities, and considering \emph{constant symbols}
  as function symbols of arity~$0$.
  We assume that the signature~$\Sigma$ contains a
constant symbol~$0$ (`zero'), a unary function symbol~$s$ (`successor')
as well as a function symbol~$f$ for every primitive recursive
function (including symbols $+$, $\times$, etc.), each of them
being given its standard interpretation in~$\N$ (see Section \ref{ss:PA2}).

\item A denumerable set~$\V_1$ of \emph{first-order variables}.
  For convenience, we shall still use the lowercase letters $x$, $y$,
  $z$, etc.\ to denote first-order variables, but these variables
  should not be confused with the $\lambda$-variables introduced in
  Section~\ref{s:Lamc}.
\end{itemize}
% $$e_1,e_2::=x\mid f(e_1,\ldots,e_k)\eqno (x\in\V_1,f\in\Sigma)$$

% \begin{definition}[First-order expressions]\label{d:FOExpr}
%   --- First-order expressions are inductively defined from the
%   following two rules:
%   \begin{enumerate}
%   \item If $x\in\V_1$ is a first-order variable, then $x$ is a
%     first-order expression.
%   \item If $f\in\Sigma$ is a function symbol of arity $k\ge 0$
%     and if $e_1,\ldots,e_k$ are first-order expressions, then
%     $f(e_1,\ldots,e_k)$ is a first-order expression.
%   \end{enumerate}
% \end{definition}

The set~$\FV(e)$ of all (free) variables of a first-order
expression~$e$ is defined as expected, as well as the corresponding
operation of substitution, that we still write $e\{x:=e'\}$.

\subsubsection{Formul\ae}\label{sss:Formulas}
Formul{\ae} of second-order logic are defined (see Fig.~\ref{fig:bnf}) from an additional set
of symbols~$\V_2$ of \emph{second-order variables} (or \emph{predicate
  variables}), using the uppercase letters~$X$, $Y$, $Z$, etc.\ to
represent such variables:
$$A,B::=X(e_1,\ldots,e_k)\mid A\limp B \mid \forall x A\mid \forall X A\eqno(X\in\V_2)$$
We assume that each second-order variable~$X$ comes with an arity
$k\ge 0$ (that we shall often leave implicit since it can be easily
inferred from the context), and that for each arity $k\ge 0$, the
subset of~$\V_2$ formed by all second-order variables of arity~$k$ is
denumerable.

Intuitively, second-order variables of arity~$0$ represent (unknown)
propositions, unary predicate variables represent predicates over
individuals (or \emph{sets} of individuals) whereas binary predicate
variables represent binary relations (or sets of pairs), etc.

% \begin{definition}[Formulas]\label{d:Formulas}
%   --- Formulas of second-order logic are inductively defined from the
%   following four rules:
%   \begin{enumerate}
%   \item If $X\in\V_2$ is a predicate variable of arity $k\ge 0$
%     and if $e_1,\ldots,e_k$ are first-order expressions, then
%     $X(e_1,\ldots,e_k)$ is a formula.
%   \item If~$A$ and~$B$ are formulas, then $A\limp B$ is a formula.
%   \item If~$x\in\V_1$ is a first-order variable and if~$A$ is
%     a formula, then $\forall x\,A$ is a formula.
%   \item If~$X\in\V_2$ is a second-order variable and if~$A$ is
%     a formula, then $\forall X\,A$ is a formula.
%   \end{enumerate}
% \end{definition}

The set of free variables of a formula~$A$ is written~$\FV(A)$.
(This set may contain both first-order and second-order variables.)
As usual, formul{\ae} are identified up to $\alpha$-conversion,
neglecting differences in bound variable names.
Given a formula~$A$, a first-order variable~$x$ and a closed
first-order expression~$e$, we denote by $A\{x:=e\}$ the formula
obtained by replacing every free occurrence of~$x$ by the first-order
expression~$e$ in the formula~$A$, possibly renaming some bound
variables of~$A$ to avoid name clashes.

Lastly, although the formul{\ae} of the language of second-order logic are
constructed from atomic formul{\ae} only using implication and first- and
second-order universal quantifications, we can define other logical
constructions (negation, conjunction disjunction, first- and
second-order existential quantification as well as Leibniz equality)
using the so called second-order encodings (cf Fig.~\ref{fig:bnf}).

\subsubsection{Predicates and second-order substitution}
% \etienne{raccourcir}
We call a \emph{predicate of arity~$k$} any expression of the form
$P\equiv\Lam{x_1\cdots x_k}{C}$ where $x_1,\ldots,x_k$ are $k$
pairwise distinct first-order variables and where~$C$ is an arbitrary
formula. (Here, we (ab)use the $\lambda$-notation to indicate which variables
$x_1,\ldots,x_k$ are abstracted in the formula~$C$).

The set of free variables of a $k$-ary predicate
$P\equiv\Lam{x_1\cdots x_k}{C}$ is defined by
$\FV(P)\equiv\FV(C)\setminus\{x_1;\ldots;x_k\}$, and the application
of the predicate $P\equiv\Lam{x_1\cdots x_k}{C}$ to a $k$-tuple
of first-order expressions $e_1,\ldots,e_k$ is defined by letting
$$P(e_1,\ldots,e_k)~\equiv~
(\Lam{x_1\cdots x_k}{C})(e_1,\ldots,e_k)~\equiv~
C\{x_1:=e_1;\ldots;x_k:=e_k\}$$
(by analogy with $\beta$-reduction).
Given a formula~$A$, a $k$-ary predicate variable~$X$ and an actual
$k$-ary predicate~$P$, we finally define the operation of
\emph{second-order substitution} $A\{X:=P\}$ as follows:
$$\begin{array}{rcl}
  X(e_1,\ldots,e_k)\{X:=P\}
  &\equiv& P(e_1,\ldots,e_k) \\
  Y(e_1,\ldots,e_m)\{X:=P\} &\equiv& Y(e_1,\ldots,e_m) \\
  (A\limp B)\{X:=P\} &\equiv&
  A\{X:=P\}\limp
  B\{X:=P\} \\
  (\forall x\,A)\{X:=P\} &\equiv&
  \forall x\,A\{X:=P\} \\
  (\forall X\,A)\{X:=P\} &\equiv&
  \forall X\,A \\
  (\forall Y\,A)\{X:=P\} &\equiv&
  \forall Y\,A\{X:=P\} \\
\end{array}\eqno\begin{array}{r@{}}
\\(Y\not\equiv X)\\\\(x\notin\FV(P))\\\\
(Y\not\equiv X,~Y\notin\FV(P))\\
\end{array}$$

% \subsubsection{Second-order encodings}
% \label{sss:Encodings}

\subsection{A type system for classical second-order logic}

Through the formul{\ae}-as-types correspondence~\cite{How69,Gir89}, we
can see any formula~$A$ of second-order logic as a type, namely, as
the type of its proofs.
We shall thus present the deduction system of classical second-order
logic as a type system based on a typing judgement of the form
$\TYP{\Gamma}{t}{A}$, where
\begin{itemize}
\item $\Gamma$ is a typing context of the form
  $\Gamma\equiv x_1:B_1,\ldots,x_n:B_n$, where $x_1,\ldots,x_n$ are
  pairwise distinct $\lambda$-variables and where $B_1,\ldots,B_n$ are
  arbitrary propositions;
\item $t$ is a proof-like term, i.e.\ a $\lambda_c$-term
  containing no continuation constant~$\k_{\pi}$;
\item $A$ is a formula of second-order logic.
\end{itemize}

The type system of classical second-order logic is then defined from
the typing rules of Fig.~\ref{fig:bnf}.
These typing rules are the usual typing rules of AF2~\cite{Kri93},
plus a specific typing rule for the instruction~$\cc$ which permits to
recover the full strength of classical logic.

Using the encodings of second-order logic, we can derive from
the typing rules of Fig.~\ref{fig:bnf} the usual introduction and
elimination rules of absurdity, conjunction, disjunction, (first- and
second-order) existential quantification and Leibniz
equality~\cite{Kri93}.
The typing rule for call$/$cc (law of Peirce) allows us to
construct proof-terms for classical reasoning principles such as the
excluded middle, \emph{reductio ad absurdum}, de Morgan laws, etc.

\subsection{Classical second-order arithmetic (PA2)}
\label{ss:PA2}

From now on, we consider the particular case of \emph{second-order
  arithmetic} (PA2), where first-order expressions are intended to
represent natural numbers.
For that, we assume that every $k$-ary function symbol $f\in\Sigma$
comes with an interpretation in the standard model of arithmetic as a
function $\Int{f}:\N^k\to\N$, so that we can give a denotation
$\Int{e}\in\N$ to every closed first-order expression~$e$.
Moreover, we assume that each function symbol associated to a primitive
recursive definition (cf Section~\ref{sss:FOExpr}) is given its standard interpretation
in~$\N$.
In this way, every numeral $n\in\N$ is represented in the world of
first-order expressions as the closed expression $s^n(0)$ that we
still write~$n$, since $\Int{s^n(0)}=n$.

\subsubsection{Induction}
\label{sss:Induction}

Following Dedekind's construction of natural numbers, we consider the
predicate $\Nat(x)$~\cite{Gir89,Kri93} defined by
$$\Nat(x)~\equiv~\forall Z\,(Z(0)\limp
\forall y\,(Z(y)\limp Z(s(y)))\limp Z(x))\,,$$
that defines the smallest class of individuals containing zero and
closed under the successor function.
One of the main properties of the logical system presented above is
that the axiom of induction, that we can write $\forall x~\Nat(x)$, is
not derivable from the rules of Fig.~\ref{fig:bnf}.
As Krivine proved~\cite[Theorem~12]{Kri09}, this axiom is not even
(universally) realizable in general.
To recover the strength of arithmetic reasoning, we need to relativize
all first-order quantifications to the class $\Nat(x)$ of Dedekind
numerals using the shorthands for \emph{numeric quantifications}%
% \footnote{From a computational point of view, the numeric
%   quantifications 
%   $\forallN x\,A(x)$ and $\existsN x\,A(x)$ play the same
%   role as the dependent product
%   $\Pi x\,{:}\,\Nat\,.\,A(x)$ and
%   the dependent sum $\Sigma x\,{:}\,\Nat\,.\,A(x)$ in type
%   theory~\cite{ML98}, putting aside the subtleties coming from the
%   fact that we work here in a system that is both classical and
%   impredicative.}
$$\begin{array}{rcl}
  \forallNat x\,A(x) &\equiv& \forall x\,(\Nat(x)\limp A(x)) \\
  \existsNat x\,A(x) &\equiv&
  \forall Z\,(\forall x(\Nat(x)\limp A(x)\limp Z)\limp Z) \\
\end{array}$$
so that the \emph{relativized induction axiom} becomes
provable in second-order logic~\cite{Kri93}:
$$\forall Z\,(Z(0)\limp\forallNat x\,(Z(x)\limp Z(s(x)))\limp
\forallNat x Z(x))\,.$$

\subsubsection{The axioms of PA2}
\label{sss:AxiomsPA2}

Formally, a formula~$A$ is a \emph{theorem} of second-order arithmetic
(PA2) if it can be derived (using the rules of Fig.~\ref{fig:bnf})
from the two axioms
\begin{itemize}
\item$\forall x\,\forall y\,(s(x)=s(y)\limp x=y)$\hfill
  (Peano 3rd axiom)
\item$\forall x\,\lnot(s(x)=0)$\hfill (Peano 4th axiom)
\end{itemize}
expressing that the successor function is injective and not
surjective, and from the definitional equalities attached to the
(primitive recursive) function symbols of the signature:
\begin{itemize}
\item$\forall x\,(x+0=x)$,\quad
  $\forall x\,\forall y\,(x+s(y)=s(x+y))$
\item$\forall x\,(x\times 0=0)$,\quad
  $\forall x\,\forall y\,(x\times s(y)=(x\times y)+x)$
\item etc.
\end{itemize}
Unlike the non relativized induction axiom---that requires a special
treatment in~PA2---we shall see in Section~\ref{ss:RealizAxioms} that
all these axioms are realized by simple proof-like terms.

\section{Classical realizability semantics}
\label{s:Realiz}

\subsection{Generalities}
\label{ss:Generalities}

Given a particular instance of the $\lambda_c$-calculus (defined from
particular sets~$\B$, $\C$ and from a particular relation of
evaluation~$\eval_1$ as described in Section~\ref{s:Lamc}), we shall
now build a classical realizability model in which every closed
formula~$A$ of the language of~PA2 will be interpreted as a set of
closed terms $|A|\subseteq\Lambda$, called the \emph{truth value}
of~$A$, and whose elements will be called the \emph{realizers} of~$A$.

\subsubsection{Poles, truth values and falsity values}
\label{sss:Poles}
Formally, the construction of the realizability model is parameterized
by a \emph{pole}%
% \footnote{In~\cite{Kri09,Kri03,GuiPhD}, \emph{poles} are also called
%   \emph{models}.
%   The reason is that each pole~$\Bot$ defines a
%   theory~$\mathcal{T}_{\Bot}$ which is formed by all the closed
%   formulas realized by a proof-like term.
%   The theory~$\mathcal{T}_{\Bot}$---which is an extension of~PA2 by
%   Theorem~\ref{th:RealizPA2} p.~\pageref{th:RealizPA2}---is
%   consistent if and only if the formula~$\bot$ is realized by no
%   proof-like term, in which case we say that the pole~$\Bot$ is
%   \emph{consistent}.
%   In this case, the theory $\mathcal{T}_{\Bot}$ induced by~$\Bot$
%   has (by completeness) at least a model in the sense of Tarski, which
%   is also a particular model of PA2.}
$\Bot$ in the sense of the following definition:
\begin{definition}[Poles]
  --- A \emph{pole} is any set of processes
  $\Bot\subseteq\Lambda\star\Pi$ which is closed under
  anti-evaluation, in the sense that both conditions $p\eval p'$ and
  $p'\in\Bot$ together imply that $p\in\Bot$ for all processes
  $p,p'\in\Lambda\star\Pi$.
\end{definition}
We will mainly use one method to define a pole~$\Bot$.
From an arbitrary
set of processes $P$,
we can define $pole$ as the complement 
% \begin{itemize}
% \item The first method is to define the pole~$\Bot$ as the set of all
%   processes that reach an element of~$P$ after zero, one or several
%   evaluation steps, that is:
%   $$\Bot~\equiv~\{p\in\Lambda\star\Pi~:~
%   \exists p'\in P~(p\eval p')\}\,.$$
%   By definition, the set~$\Bot$ is the smallest pole that contains
%   the set of processes~$P$ as a subset.
%   In what follows, we shall say that this definition is
%   \emph{goal-oriented}.
% \item The second method is to define the pole~$\Bot$ as the complement
  set of the union of all threads starting from an element of~$P$,
  that is:
  $$\Bot~\equiv~\biggl(\bigcup_{p\in P}\Th(p)\biggr)^c
  ~\equiv~\bigcap_{p\in P}\bigl(\Th(p)\bigr)^c\,.$$
  It is indeed quite easy to check that $\pole$ is closed by anti-reduction,
  and it is also the largest pole that does not
  intersect~$P$. We shall say that such a definition is
  \emph{thread-oriented}.
% \end{itemize}

Let us now consider a fixed pole~$\Bot$.
We call a \emph{falsity value} any set of stacks $S\subseteq\Pi$.
Every falsity value $S\subseteq\Pi$ induces a \emph{truth value}
$S^{\Bot}\subseteq\Lambda$ that is defined by
$$S^{\Bot}~=~\{t\in\Lambda~:~
\forall\pi\in S~(t\star\pi)\in\Bot\}\,.$$
Intuitively, every falsity value $S\subseteq\Pi$ represents a
particular set of \emph{tests}, while the corresponding truth value
$S^{\Bot}$ represent the set of all \emph{programs} that passes all
tests in~$S$ (w.r.t.\ the pole~$\Bot$, that can be seen as the
\emph{challenge}).
From the definition of $S^{\Bot}$, it is clear that the larger the
falsity value~$S$, the smaller the corresponding truth
value~$S^{\Bot}$, and vice-versa.

% In classical realizability, the semantics of a closed formula~$A$ is
% primarily given by a falsity value $\|A\|\subseteq\Pi$ that defines
% the set of all tests that should be passed by all the realizers
% of~$A$.
% The corresponding truth value $|A|\subseteq\Lambda$ (i.e.\ the
% set of all realizers of~$A$) is then defined indirectly from the
% equation $|A|=\|A\|^{\Bot}$.

\subsubsection{Formul{\ae} with parameters}
In order to interpret second-order variables that occur in a given
formula~$A$, it is convenient to enrich the language of PA2 with a new
predicate symbol $\dot{F}$ of arity~$k$ for every \emph{falsity value
  function~$F$} of arity~$k$, that is, for every function
$F:\N^{k}\to\Pow(\Pi)$ that associates a falsity value
$F(n_1,\ldots,n_k)\subseteq\Pi$ to every $k$-tuple
$(n_1,\ldots,n_k)\in\N^k$.
A formula of the language enriched with the predicate
symbols~$\dot{F}$ is then called a \emph{formula with parameters}.
Formally, this correspond to the formul\ae\ defined by:  
$$A,B::=X(e_1,\ldots,e_k)\mid A\limp B \mid \forall x A\mid \forall X A\mid \dot{F}(e_1,\ldots,e_k)\eqno X\in\V_2,F\in\Pow(\Pi)^{\N^k}$$

The notions of a \emph{predicate with parameters} and of a
\emph{typing context with parameters} are defined similarly.
The notations $\FV(A)$, $\FV(P)$, $\FV(\Gamma)$, $\dom(\Gamma)$,
$A\{x:=e\}$, $A\{X:=P\}$, etc.\ are extended to all formul{\ae}~$A$ with
parameters, to all predicates~$P$ with parameters and to all typing
contexts~$\Gamma$ with parameters in the obvious way.

\subsection{Definition of the interpretation function}
\label{ss:Interp}

The interpretation of the closed formul{\ae} with parameters is defined
as follows:
\begin{definition}[Interpretation of closed formul{\ae} with parameters]
  --- The falsity value $\|A\|\subseteq\Pi$ of a closed formula~$A$
  with parameters is defined by induction on the number of
  connectives$/$quantifiers in~$A$ from the equations
  $$\begin{array}{rcl}
    \|\dot{F}(e_1,\ldots,e_k)\| &=&
    F(\Int{e_1},\ldots,\Int{e_k}) \\
    \noalign{\medskip}
    \|A\limp B\| &=&
    |A|\cdot\|B\|\quad={\quad}
    \bigl\{t\cdot\pi~:~ t\in|A|,~\pi\in\|B\|\bigr\} \\
    \noalign{\medskip}
    \|\forall x\,A\| &=& \ds
    \bigcup_{n\in\N}\|A\{x:=n\}\| \\
    \noalign{\medskip}
    \|\forall X\,A\| &=& \ds
    \bigcup_{\!\!\!F:\N^k\to\Pow(\Pi)\!\!\!}\|A\{X:=\dot{F}\}\|
    \qquad(\text{if}~X~\text{has arity}~k)\\
  \end{array}$$
  whereas its truth value $|A|\subseteq\Lambda$ is defined by
  $|A|=\|A\|^{\Bot}$. 
  Finally, defining $\top\equiv\dot{\emptyset}$ (recall that we have $\bot\equiv\forall X.X$), one can check that we have :
  $$\|\top\|=\emptyset \qquad\quad |\top|=\Lambda \qquad\qquad \|\bot\| = \Pi  $$

\end{definition}

Since the falsity value~$\|A\|$ (resp.\ the truth value~$|A|$) of~$A$
actually depends on the pole~$\Bot$, we shall write it sometimes
$\|A\|_{\Bot}$ (resp.\ $|A|_{\Bot}$) to recall the dependency.
Given a closed formula~$A$ with parameters and a closed
term~$t\in\Lambda$, we say that:
\begin{itemize}
\item\emph{$t$ realizes~$A$} and write $t\real A$ when
  $t\in|A|_{\Bot}$.\\
  (This notion is relative to a particular pole~$\Bot$.)
\item\emph{$t$ universally realizes~$A$} and write
  $t\ureal A$ when $t\in|A|_{\Bot}$ for all poles~$\Bot$.
\end{itemize}

From these definitions, 
we have

\begin{lem}[Law of Peirce]\label{l:Peirce}
  --- Let~$A$ and~$B$ be two closed formul{\ae} with parameters:
  \begin{enumerate}
  \item If $\pi\in\|A\|$, then $\k_{\pi}\real A\limp B$.
  \item $\cc\ureal((A\limp B)\limp A)\limp A$.
  \end{enumerate}
\end{lem}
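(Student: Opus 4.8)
The two statements are classical facts about the realizability interpretation of Peirce's law; I will prove them directly by unfolding the definitions of $\|\cdot\|$, $|\cdot|$, and the evaluation rules (\textsc{Save}) and (\textsc{Restore}), together with the closure of the pole $\Bot$ under anti-evaluation.

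For part (1), I would fix a stack $\pi\in\|A\|$ and show $\k_\pi\in|A\limp B|$, i.e.\ that $\k_\pi\star\rho\in\Bot$ for every $\rho\in\|A\limp B\|$. By the clause $\|A\limp B\|=|A|\cdot\|B\|$, any such $\rho$ has the form $t\cdot\pi'$ with $t\in|A|$ and $\pi'\in\|B\|$. Applying the rule (\textsc{Restore}), $\k_\pi\star t\cdot\pi'\eval_1 t\star\pi$; since $t\in|A|$ and $\pi\in\|A\|$, the process $t\star\pi$ lies in $\Bot$, and by closure of $\Bot$ under anti-evaluation we get $\k_\pi\star t\cdot\pi'\in\Bot$, as desired.

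For part (2), I need $\cc\in|((A\limp B)\limp A)\limp A|$ for every pole, so I fix $\Bot$ and take an arbitrary stack in $\|((A\limp B)\limp A)\limp A\|=|(A\limp B)\limp A|\cdot\|A\|$, i.e.\ of the form $t\cdot\pi$ with $t\in|(A\limp B)\limp A|$ and $\pi\in\|A\|$. By (\textsc{Save}), $\cc\star t\cdot\pi\eval_1 t\star\k_\pi\cdot\pi$, so by anti-evaluation closure it suffices to show $t\star\k_\pi\cdot\pi\in\Bot$. Since $t\in|(A\limp B)\limp A|$ and $\pi\in\|A\|$, this will follow once I check that $\k_\pi\cdot\pi\in\|((A\limp B)\limp A)\limp A\|=|A\limp B|\cdot\|A\|$: the second component $\pi$ is already in $\|A\|$ by assumption, and the first component $\k_\pi$ is in $|A\limp B|$ precisely by part (1). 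This closes the argument.

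There is no real obstacle here; the only point requiring a little care is the bookkeeping of which falsity values the stacks $t\cdot\pi$ and $\k_\pi\cdot\pi$ belong to, and making sure the implication clause $\|A\limp B\|=|A|\cdot\|B\|$ is applied in the right direction. The essential mechanism is simply that (\textsc{Save}) hands the current continuation $\k_\pi$ to the term as an argument of type $A\limp B$, and (\textsc{Restore}) is exactly what makes $\k_\pi$ behave as such an argument; everything else is anti-evaluation closure of $\Bot$.
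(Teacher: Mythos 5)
Your proof is correct and is exactly the standard argument the paper intends (the lemma is stated there without proof, as an immediate consequence of the definitions): part (1) by (\textsc{Restore}) plus anti-evaluation, part (2) by (\textsc{Save}) plus part (1). One cosmetic slip: in part (2) you write that you check $\k_\pi\cdot\pi\in\|((A\limp B)\limp A)\limp A\|$, but the falsity value you actually need (and whose correct decomposition $|A\limp B|\cdot\|A\|$ you then use) is $\|(A\limp B)\limp A\|$; the reasoning itself is sound.
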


% \begin{proof}
%   (1)\quad Let $\pi\in\|A\|$.
%   To prove that $\k_{\pi}\in|A\limp B|$, we need to check that
%   $\k_{\pi}\star t\cdot\pi'\in\Bot$ for all $t\in|A|$ and
%   $\pi'\in\|B\|$.
%   By applying the rule \textsc{(Restore)} we get
%   $\k_{\pi}\star t\cdot\pi'\eval_1 t\star\pi\in\Bot$
%   (since $t\in|A|$ and $\pi\in\|A\|$), hence
%   $\k_{\pi}\star t\cdot\pi'\in\Bot$ by anti-evaluation.\par
%   (2)\quad To prove that $\cc\real((A\limp B)\limp A)\limp A$ (for an
%   arbitrary pole~$\Bot$), we need to check that
%   $\cc\star t\cdot\pi\in\Bot$ for 
%   all $t\in|(A\limp B)\limp A|$ and $\pi\in\|A\|$.
%   By applying the rule \textsc{(Save)} we get
%   $\cc\star t\cdot\pi\eval_1t\star\k_{\pi}\cdot\pi$.
%   But since $\k_{\pi}\in|A\limp B|$ (from (1)) and $\pi\in\|A\|$, we
%   have $\k_{\pi}\cdot\pi\in\|(A\limp B)\limp A\|$, so that
%   $t\star\k_{\pi}\cdot\pi\in\Bot$.
%   Hence $\cc\star t\cdot\pi\in\Bot$ by anti-evaluation.
% \end{proof}

\subsection{Valuations and substitutions}
In order to express the soundness invariants relating the type system
of Section~\ref{s:PA2} with the classical realizability semantics
defined above, we need to introduce some more terminology.

\begin{definition}[Valuations]
  --- A \emph{valuation} is a function~$\rho$ that associates a
  natural number $\rho(x)\in\N$ to every first-order variable~$x$ and
  a falsity value function $\rho(X):\N^k\to\Pow(\Pi)$ to every
  second-order variable~$X$ of arity~$k$.
  \begin{itemize}
  \item Given a valuation~$\rho$, a first-order variable~$x$ and a
    natural number $n\in\N$, we denote by $\rho,x\gets n$ the
    valuation defined by:
    $$(\rho,x\gets n)~=~
    \rho_{|\dom(\rho)\setminus\{x\}}\cup\{x\gets n\}\,.$$
  \item Given a valuation~$\rho$, a second-order variable~$X$ of
    arity~$k$ and a falsity value function $F:\N^k\to\Pow(\Pi)$,
    we denote by $\rho,X\gets F$ the valuation defined by:
    $$(\rho,X\gets F)~=~
    \rho_{|\dom(\rho)\setminus\{X\}}\cup\{X\gets F\}\,.$$
  \end{itemize}
\end{definition}

To every pair $(A,\rho)$ formed by a (possibly open) formula~$A$
of~PA2 and a valuation~$\rho$, we associate a \emph{closed} formula
with parameters $A[\rho]$ that is defined by
$$A[\rho]~\equiv~A\{x_1:=\rho(x_1);\ldots;x_n:=\rho(x_n);
X_1:=\dot{\rho}(X_1);\ldots;X_m:=\dot{\rho}(X_m)\}$$
where $x_1,\ldots,x_n,X_1,\ldots,X_m$ are the free variables of~$A$,
and writing $\dot{\rho}(X_i)$ the predicate symbol associated to the
falsity value function $\rho(X_i)$.
This operation naturally extends to typing contexts by letting
$(x_1:A_1,\ldots,x_n:A_n)[\rho]\equiv
x_1:A_1[\rho],\ldots,x_n:A_n[\rho]$.

\begin{definition}[Substitutions]
  --- A \emph{substitution} is a finite function~$\sigma$ from
  $\lambda$-variables to closed $\lambda_c$-terms.
  Given a substitution~$\sigma$, a $\lambda$-variable~$x$ and a closed
  $\lambda_c$-term~$u$, we denote by $\sigma,x:=u$ the substitution
  defined by
  $(\sigma,x:=u)\equiv\sigma_{|\dom(\sigma)\setminus\{x\}}\cup\{x:=u\}$.
\end{definition}

Given an open $\lambda_c$-term~$t$ and a substitution~$\sigma$, we
denote by $t[\sigma]$ the term defined by
$$t[\sigma]~\equiv~t\{x_1:=\sigma(x_1);\ldots;x_n:=\sigma(x_n)\}$$
where $\dom(\sigma)=\{x_1,\ldots,x_n\}$.
Notice that $t[\sigma]$ is closed as soon as
$\FV(t)\subseteq\dom(\sigma)$.
We say that a substitution~$\sigma$ \emph{realizes} a closed
context~$\Gamma$ with parameters and write $\sigma\real\Gamma$ if:
\begin{itemize}
\item$\dom(\sigma)=\dom(\Gamma)$;
\item$\sigma(x)\real A$ for every declaration $(x:A)\in\Gamma$.
\end{itemize}

\subsection{Adequacy}
\label{ss:Adequacy}

Given a fixed pole~$\Bot$, we say that:
\begin{itemize}
\item A typing judgement $\TYP{\Gamma}{t}{A}$ is \emph{adequate}
  (w.r.t.\ the pole~$\Bot$) if for all valuations $\rho$ and for all
  substitutions $\sigma\real\Gamma[\rho]$ we have
  $t[\sigma]\real A[\rho]$.
\item More generally, we say that an inference rule
  $$\infer{J_0}{J_1 &\cdots& J_n}$$
  is adequate (w.r.t.\ the pole~$\Bot$) if the adequacy of all typing
  judgements $J_1,\ldots,J_n$ implies the adequacy of the typing
  judgement $J_0$.
\end{itemize}
From the latter definition, it is clear that a typing judgement that is
derivable from a set of adequate inference rules is adequate too.

\begin{prop}[Adequacy~\cite{Kri09}]\label{p:Adequacy}
  The typing rules of Fig.~\ref{fig:bnf} are adequate
  w.r.t.\ any pole~$\Bot$, as well as all the judgements
  $\TYP{\Gamma}{t}{A}$ that are derivable from these rules.
\end{prop}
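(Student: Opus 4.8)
The plan is to prove the Adequacy Proposition by induction on the derivation of the typing judgement $\TYP{\Gamma}{t}{A}$, showing that each of the typing rules of Fig.~\ref{fig:bnf} is adequate with respect to an arbitrary pole $\Bot$. Fix a pole $\Bot$, a valuation $\rho$, and a substitution $\sigma \real \Gamma[\rho]$; we must show $t[\sigma] \real A[\rho]$ in each case. First I would dispatch the \emph{axiom rule} $(x:A)\in\Gamma$: here $t[\sigma] = \sigma(x)$, and since $\sigma\real\Gamma[\rho]$ we have $\sigma(x)\real A[\rho]$ directly. For the \emph{abstraction rule}, given $\pi\in\|(A\limp B)[\rho]\|$, unfold the definition of the falsity value of an arrow: $\pi = u\cdot\pi'$ with $u\in|A[\rho]|$ and $\pi'\in\|B[\rho]\|$. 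By (\textsc{Grab}), $(\Lam{x}{t[\sigma]})\star u\cdot\pi' \eval_1 t[\sigma][x:=u]\star\pi' = t[\sigma,x:=u]\star\pi'$; since $\sigma,x:=u \real (\Gamma,x:A)[\rho]$, the induction hypothesis gives $t[\sigma,x:=u]\real B[\rho]$, hence that process is in $\Bot$, and by anti-evaluation closure of $\Bot$ so is the original one. For the \emph{application rule}, given $\pi\in\|B[\rho]\|$, use (\textsc{Push}): $(tu)[\sigma]\star\pi \eval_1 t[\sigma]\star u[\sigma]\cdot\pi$; the induction hypothesis for $u$ gives $u[\sigma]\in|A[\rho]|$, so $u[\sigma]\cdot\pi\in\|(A\limp B)[\rho]\|$, and the induction hypothesis for $t$ gives $t[\sigma]\star u[\sigma]\cdot\pi\in\Bot$; conclude again by anti-evaluation.

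Next I would handle the four quantifier rules. For the \emph{first-order generalization} rule ($x\notin\FV(\Gamma)$), note that $\|(\forall x\,A)[\rho]\| = \bigcup_{n\in\N}\|A[\rho,x\gets n]\|$, so it suffices to show $t[\sigma]\real A[\rho,x\gets n]$ for every $n$; since $x\notin\FV(\Gamma)$ we have $\Gamma[\rho]=\Gamma[\rho,x\gets n]$, so $\sigma\real\Gamma[\rho,x\gets n]$ and the induction hypothesis applies. For the \emph{first-order instantiation} rule, one checks the substitution lemma $A[\rho]\{x:=e[\rho]\} = (A\{x:=e\})[\rho]$ (with $\Int{e[\rho]}$ the relevant numeral), so that $\|(A\{x:=e\})[\rho]\| = \|A[\rho,x\gets \Int{e[\rho]}]\| \subseteq \|(\forall x\,A)[\rho]\|$, and the result follows since $|\cdot|$ is antitone in the falsity value. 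The two \emph{second-order} rules are entirely analogous, using $\|(\forall X\,A)[\rho]\| = \bigcup_F \|A[\rho,X\gets F]\|$ for generalization, and for instantiation the second-order substitution lemma $(A\{X:=P\})[\rho] = A[\rho,X\gets \overline{P[\rho]}]$ where $\overline{P[\rho]}$ is the falsity value function $(n_1,\ldots,n_k)\mapsto \|P[\rho](n_1,\ldots,n_k)\|$; here one must be slightly careful that the arity-$k$ predicate $P$ with parameters indeed induces such a function, but this is routine.

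Finally, the \emph{call/cc rule} is not proved by an induction step at all: it has no premises, so adequacy reduces to showing $\cc\ureal ((A\limp B)\limp A)\limp A$ for all closed $A,B$ with parameters, which is exactly part~(2) of Lemma~\ref{l:Peirce}, applied with $A[\rho]$ and $B[\rho]$. That lemma in turn rests on part~(1), that $\k_\pi\real A\limp B$ whenever $\pi\in\|A\|$ — proved by taking $u\cdot\pi'\in\|(A\limp B)\|$, applying (\textsc{Restore}) to get $\k_\pi\star u\cdot\pi' \eval_1 u\star\pi$, which lies in $\Bot$ since $u\in|A|$ and $\pi\in\|A\|$ — and then (\textsc{Save}) handles $\cc$ directly. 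I do not expect any genuine obstacle here: the proof is the standard bookkeeping verification, and the only points demanding care are the first- and second-order substitution lemmas relating syntactic substitution with the action on valuations, and the routine check in the arrow cases that anti-evaluation closure of $\Bot$ is invoked in the right direction (from the reduct back to the redex). Once all individual rules are shown adequate, the closing remark — that any judgement derivable from adequate rules is itself adequate — gives the full statement for all derivable $\TYP{\Gamma}{t}{A}$.
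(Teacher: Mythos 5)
The paper gives no proof of this proposition, delegating it to Krivine's work, and your argument is precisely the standard one found there: induction on the typing derivation, with anti-evaluation closure of the pole handling the (\textsc{Push})/(\textsc{Grab}) steps, the substitution lemmas handling the quantifier rules, and Lemma~\ref{l:Peirce} handling $\cc$. The proof is correct as written.
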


% (The proof of this result can be found in Krivine's article~\cite{Kri09}.)

Since the typing rules of Fig.~\ref{fig:bnf} involve no continuation
constant, every realizer that comes from a proof of second order logic
by Prop.~\ref{p:Adequacy} is thus a proof-like term.

\subsection{Realizing the axioms of PA2}
\label{ss:RealizAxioms}

Let us recall that in PA2, Leibniz equality $e_1=e_2$ is defined by
$e_1=e_2\equiv\forall Z\,(Z(e_1)\limp Z(e_2))$.

\begin{prop}[Realizing Peano axioms~\cite{Kri09}]
  \label{p:PeanoAxioms}~:
  \begin{enumerate}
  \item $\Lam{z}{z}~\ureal~
    \forall x\,\forall y\,(s(x)=s(y)\limp x=y)$
  \item $\Lam{z}{zu}~\ureal~
    \forall x\,(s(x)=0\limp\bot)$\hfill
    (where~$u$ is any term such that $\FV(u)\subseteq\{z\}$).
  \item $\Lam{z}{z}~\ureal~
    \forall x_1\cdots\forall x_k\,
    (e_1(x_1,\ldots,x_n)=e_2(x_1,\ldots,x_k))$\\
    for all arithmetic expressions
    $e_1(x_1,\ldots,x_n)$ and $e_2(x_1,\ldots,x_k)$ such that\\
    $\N\models \forall x_1\cdots\forall x_k\,
    (e_1(x_1,\ldots,x_n)=e_2(x_1,\ldots,x_k))$.
  \end{enumerate}
\end{prop}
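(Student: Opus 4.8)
The plan is to prove each of the three items by unfolding the realizability interpretation of Leibniz equality and checking the required membership in the pole directly, exploiting that $\Lam{z}{z}$ (resp.\ $\Lam{z}{zu}$) behaves uniformly for \emph{every} pole $\Bot$.

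First I would recall that by definition $e_1=e_2\equiv\forall W\,(W(e_1)\limp W(e_2))$, so that, fixing an arbitrary pole $\Bot$ and an arbitrary falsity value function $F:\N\to\Pow(\Pi)$, we have
$$\|s(x)=s(y)\limp x=y\|\;=\;\bigcup_{F,F'}\, |\dot F(s(x))|\cdot|\dot F'(x)|\cdot\|\dot F'(y)\|\,,$$
after instantiating the universally quantified first-order variables by numerals $m,n\in\N$ and the predicate variables by $F,F'$. A stack in this set has the shape $t\cdot u\cdot\pi$ with $t\real \dot F(\overline{s(m)})$, $u\real \dot F'(\overline m)$ and $\pi\in F'(n)$. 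For item~(1), I would observe that by Peano's third axiom we only need to worry about the case $m=n$ (otherwise the outer formula is realized trivially because $\Nat$-style reasoning is not even needed here: actually, more carefully, when $m\neq n$ the premise $s(m)=s(n)$ has empty truth value-behaviour is vacuous, so I would instead argue that for $t\star t\cdot u\cdot\pi$ to matter we may assume $\N\models s(m)=s(n)$, i.e.\ $m=n$). Then $F(\overline{s(m)})=F'(\overline m)$ can be chosen so that realizing $\dot F(\overline{s(m)})$ is the same as realizing $\dot F'(\overline m)$; evaluating $(\Lam{z}{z})\star t\cdot u\cdot\pi \eval t\star u\cdot\pi$, and since $t$ realizes $W(s(m))\limp W(n)$ with $u$ realizing $W(s(m))$ (after the appropriate choice of $W$) and $\pi\in W(n)$-ish, we get $t\star u\cdot\pi\in\Bot$, hence $(\Lam{z}{z})\star t\cdot u\cdot\pi\in\Bot$ by anti-evaluation. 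I would write this cleanly by choosing, for each test stack, the instantiation of the two predicate variables that makes the two occurrences of $W$ agree, which is legitimate since $s(m)=s(n)$ holds in $\N$ iff $m=n$.

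For item~(2), the formula is $\forall x\,(s(x)=0\limp\bot)$; here $\|s(x)=0\limp\bot\|$ contains stacks $t\cdot\pi$ with $t\real s(m)=0$ and $\pi\in\Pi$ (since $\|\bot\|=\Pi$). The point is that $\N\not\models s(m)=0$, so realizing $s(m)=0\equiv\forall W(W(s(m))\limp W(0))$ forces, by choosing $W$ to be the falsity value function sending $s(m)$ to $\emptyset$ (hence $|W(s(m))|=\Lambda$) and $0$ to an arbitrary nonempty falsity value, that $t\star v\cdot\pi'\in\Bot$ for a well-chosen $v$ and $\pi'$; concretely I would pick $v\equiv u$ (which is why $u$ is allowed to mention $z$ only) and the stack so that $(\Lam{z}{zu})\star t\cdot\pi\eval t\star u\cdot\pi\in\Bot$. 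The standard trick, due to Krivine, is to instantiate $W$ by the constant falsity value function equal to $\|\bot\|=\Pi$ on $0$ and equal to $\|\top\|=\emptyset$ on all other integers; then $|W(s(m))|=\Lambda$ so $u\real W(s(m))$ whatever $u$ is, and $\|W(0)\|=\Pi\ni\pi$, so $t\star u\cdot\pi\in\Bot$.

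Item~(3) is the direct generalization: for arithmetic expressions $e_1,e_2$ with $\N\models \forall\vec x\,(e_1=e_2)$, instantiating the first-order quantifiers by any numerals $\vec n$ yields $\Int{e_1(\vec n)}=\Int{e_2(\vec n)}$ as the \emph{same} integer, call it $p$; then $\|e_1(\vec n)=e_2(\vec n)\|=\bigcup_F |\dot F(p)|\cdot\|\dot F(p)\|$, a test stack is $t\cdot\pi$ with $t\real\dot F(p)$ and $\pi\in F(p)=\|\dot F(p)\|$, so $t\star\pi\in\Bot$ and $(\Lam{z}{z})\star t\cdot\pi\eval t\star\pi\in\Bot$. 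I expect the only mildly delicate point to be the careful juggling of the two predicate-variable instantiations in item~(1) so that the $W$ appearing on the hypothesis side and the $W$ appearing on the conclusion side of $s(x)=s(y)$ vs.\ $x=y$ are made to coincide using $m=n$; everything else is a routine computation with \textsc{(Grab)} and closure of $\Bot$ under anti-evaluation, and the uniformity over $\Bot$ gives the $\ureal$ statements for free since no step used any property specific to the pole.
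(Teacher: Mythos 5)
The paper itself does not prove this proposition---it is imported from Krivine~\cite{Kri09}---so there is no in-paper argument to compare against; judged on its own, your proof is the standard one and its essential content is correct. Two slips are worth fixing. First, your displayed falsity value for item~(1) is wrong as written: $\|s(m)=s(n)\limp m=n\|$ equals $|s(m)=s(n)|\cdot\|m=n\|$, so the first component of a test stack $t\cdot u\cdot\pi$ is a realizer $t$ of the \emph{whole} formula $\forall W\,(W(s(m))\limp W(s(n)))$, not of $\dot F(s(m))$ for one fixed $F$. This is not cosmetic: the proof hinges precisely on being allowed to instantiate $W$ \emph{after} seeing $u\in|\dot F'(m)|$ and $\pi\in F'(n)$ (you do exploit this correctly in the subsequent prose, so only the display is at fault). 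Second, the case split on $m=n$ versus $m\neq n$ is avoidable: choosing $F$ with $F(s(j)):=F'(j)$ for all $j$ gives $u\cdot\pi\in|\dot F(s(m))|\cdot F(s(n))\subseteq\|s(m)=s(n)\|$ uniformly, hence $t\star u\cdot\pi\in\Bot$ with no reference to whether $m=n$. If you keep the split, beware that for $m\neq n$ the truth value $|s(m)=s(n)|=(\Lambda\cdot\Pi)^{\Bot}$ is \emph{not} empty in general; the correct observation is that every element of it sends every stack of the form $u\cdot\pi$ into the pole, which settles that case. Items~(2) and~(3) are fine; in~(2) just record that the machine actually reaches $t\star u\{z:=t\}\cdot\pi$, which is harmless since your choice of $F$ makes $|\dot F(s(m))|=\Lambda$, and note that defining $F$ separately at $0$ and at $s(m)$ is legitimate exactly because $s(m)\neq 0$ in $\N$.
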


% (The proof of this proposition can be found in~\cite{Kri09}).

From this we deduce the main theorem:
\begin{thm}[Realizing the theorems of~PA2]\label{th:RealizPA2}
  --- If~$A$ is a theorem of PA2 (in the sense defined in
  Section~\ref{sss:AxiomsPA2}), then there is a closed proof-like
  term~$t$ such that $t\ureal A$.
\end{thm}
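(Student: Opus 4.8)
The plan is to proceed by induction on the derivation showing that $A$ is a theorem of PA2. The statement to prove is that any theorem of PA2 has a universal realizer, i.e.\ a proof-like term $t$ with $t\ureal A$. Recall that by the definition in Section~\ref{sss:AxiomsPA2}, a theorem of PA2 is derived using the typing rules of Fig.~\ref{fig:bnf} from a fixed finite list of axioms: Peano's 3rd and 4th axioms, plus the definitional equalities for the primitive recursive function symbols. So the first step is to observe that it suffices to (i) exhibit, for each axiom $A_0$ in this list, a proof-like term realizing it universally, and (ii) propagate universal realizability through the typing rules.

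For step (i), I would simply invoke Proposition~\ref{p:PeanoAxioms}: items (1) and (2) give universal realizers $\Lam{z}{z}$ and $\Lam{z}{zu}$ for Peano's 3rd and 4th axioms respectively, while item (3) covers all the definitional equalities attached to the primitive recursive function symbols, since each such equality holds in the standard model $\N$. Thus every axiom of PA2 has a proof-like universal realizer. For step (ii), I would note that if $A$ is derivable in the type system of Fig.~\ref{fig:bnf} from hypotheses $B_1,\dots,B_n$ (the axioms), then by repeatedly applying implication introduction we get a closed proof-like term $t_0$ with $\vdash t_0 : B_1 \limp \cdots \limp B_n \limp A$. By Proposition~\ref{p:Adequacy} (adequacy), $t_0 \ureal B_1 \limp \cdots \limp B_n \limp A$ for every pole. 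Then, fixing an arbitrary pole $\Bot$ and taking proof-like universal realizers $u_1 \real B_1, \dots, u_n \real B_n$ supplied by step (i), the definition of $\|B\limp C\|$ as $|B|\cdot\|C\|$ gives immediately that $t_0 u_1 \cdots u_n \real A$; since $\Bot$ was arbitrary, $t_0 u_1 \cdots u_n \ureal A$, and this term is proof-like because $t_0$ and the $u_i$ are.

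The only mild subtlety—and the step I would flag as requiring a little care rather than being a genuine obstacle—is that the typing rules of Fig.~\ref{fig:bnf} have no rule for introducing the axioms $B_i$ as hypotheses with a \emph{term}: the axioms are used as closed formulas that one is allowed to assume. The clean way to handle this is to say that "derivable from the axioms" means: there is a context $\Gamma = z_1:B_1,\dots,z_n:B_n$ and a proof-like term $t$ with $\TYP{\Gamma}{t}{A}$ derivable from the rules of Fig.~\ref{fig:bnf}; then set $t_0 \equiv \Lam{z_1\cdots z_n}{t}$ and apply the argument above. Adequacy (Proposition~\ref{p:Adequacy}) is stated exactly for judgements with nonempty contexts, so one could alternatively argue directly: for any pole $\Bot$, the substitution $\sigma = (z_1 := u_1, \dots, z_n := u_n)$ realizes $\Gamma$, hence by adequacy $t[\sigma] \real A$, and $t[\sigma]$ is proof-like and closed. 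Either formulation works; I would present the second, since it avoids even mentioning the $\lambda$-closure and uses Proposition~\ref{p:Adequacy} in the form in which it is stated.

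Putting these together yields the theorem: given a theorem $A$ of PA2, fix a derivation $\TYP{z_1:B_1,\dots,z_n:B_n}{t}{A}$ with the $B_i$ among the PA2 axioms, let $u_1,\dots,u_n$ be the proof-like universal realizers of $B_1,\dots,B_n$ from Proposition~\ref{p:PeanoAxioms}, and let $t^* \equiv t\{z_1:=u_1;\dots;z_n:=u_n\}$. Then $t^*$ is a closed proof-like term, and for every pole $\Bot$ the substitution $z_i := u_i$ realizes the context, so adequacy gives $t^* \real A$; hence $t^* \ureal A$, which is what we wanted.
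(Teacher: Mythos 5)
Your proposal is correct and follows exactly the route the paper takes: the paper's proof is the one-liner "Immediately follows from Prop.~\ref{p:Adequacy} and~\ref{p:PeanoAxioms}", and your argument is precisely the spelled-out version of that (realize the axioms via Proposition~\ref{p:PeanoAxioms}, then use adequacy with a substitution realizing the context of axiom hypotheses). The care you take with how the axioms enter the derivation is a sensible elaboration of a detail the paper leaves implicit.
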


\begin{proof}
  Immediately follows from Prop.~\ref{p:Adequacy}
  and~\ref{p:PeanoAxioms}.
\end{proof}

\subsection{The full standard model of PA2 as a degenerate case}
\label{ss:Degenerate}

It is easy to see that when the pole~$\Bot$ is empty, the classical
realizability model defined above collapses to the \emph{full standard
  model of PA2}, that is: to the model (in the sense of Tarski)
where individuals are interpreted by the elements of~$\N$ and where
second-order variables of arity~$k$ are interpreted by all the subsets
of~$\N^k$.
For that, we first notice that when
$\Bot=\varnothing$, the truth value~$S^{\Bot}$ associated to an
arbitrary falsity value $S\subseteq\Pi$ can only take two different
values: $S^{\Bot}=\Lambda_c$ when $S=\varnothing$, and
$S^{\Bot}=\varnothing$ when $S\neq\varnothing$.
Moreover, we easily check that the realizability interpretation of
implication and universal quantification mimics the standard truth
value interpretation of the corresponding logical construction
in the case where~$\Bot=\varnothing$.
Writing~$\M$ for the full standard model of PA2, we thus easily show
that:
\begin{prop}\label{p:Degenerated}
  --- If $\Bot=\varnothing$, then for every closed formula~$A$ of PA2
  we have
  $$|A|=\begin{cases}
    \Lambda &\text{if}~\M\models A \\
    \varnothing &\text{if}~\M\not\models A \\
  \end{cases}$$
\end{prop}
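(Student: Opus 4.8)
The plan is to prove Proposition~\ref{p:Degenerated} by induction on the structure of the closed formula $A$ of PA2, after first establishing the preliminary ``collapse'' observation already sketched in the text: when $\Bot=\varnothing$, for any falsity value $S\subseteq\Pi$ one has $S^{\Bot}=\Lambda$ if $S=\varnothing$ and $S^{\Bot}=\varnothing$ if $S\neq\varnothing$. This is immediate from the definition $S^{\Bot}=\{t\in\Lambda:\forall\pi\in S,\ t\star\pi\in\Bot\}$, since the universally quantified condition is vacuously true exactly when $S$ is empty. Consequently $|A|$ can only be $\Lambda$ or $\varnothing$, and the whole content of the proposition is the claim that $|A|=\Lambda$ iff $\|A\|=\varnothing$ iff $\M\models A$.

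Because $|A|$ is determined by $\|A\|$, I will actually prove the sharper statement, by induction on the number of connectives/quantifiers in $A$: for every closed formula $A$ with parameters of the form $\dot F$ where each $F$ takes only the values $\varnothing$ and $\Pi$ (equivalently, $\dot F$ encodes an honest subset of $\N^k$), we have $\|A\|=\varnothing$ iff $\M\models A$ (reading $\dot F$ as the corresponding relation), and otherwise $\|A\|$ is all of $\Pi$, or at least nonempty. For the atomic case $\dot F(e_1,\dots,e_k)$ we have $\|\dot F(\vec e)\|=F(\Int{e_1},\dots,\Int{e_k})$, which is $\varnothing$ exactly when the tuple lies in the set coded by $F$, i.e.\ when $\M$ satisfies the atom. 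For $A\limp B$ we use $\|A\limp B\|=|A|\cdot\|B\|$; by the induction hypothesis and the collapse lemma, $|A|$ is $\Lambda$ or $\varnothing$ and $\|B\|$ is $\varnothing$ or nonempty, and one checks $|A|\cdot\|B\|=\varnothing$ iff ($|A|=\varnothing$ or $\|B\|=\varnothing$) iff ($\M\not\models A$ or $\M\models B$) iff $\M\models A\limp B$. For $\forall x\,A$ we use $\|\forall x\,A\|=\bigcup_{n\in\N}\|A\{x:=n\}\|$, which is empty iff every $\|A\{x:=n\}\|$ is empty iff $\M\models A\{x:=n\}$ for all $n$ iff $\M\models\forall x\,A$. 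For $\forall X\,A$ (arity $k$) we use $\|\forall X\,A\|=\bigcup_{F:\N^k\to\Pow(\Pi)}\|A\{X:=\dot F\}\|$; the subtlety is that $F$ ranges over \emph{all} falsity value functions, not just the two-valued ones, but by the collapse lemma $\|A\{X:=\dot F\}\|$ for arbitrary $F$ has the same emptiness status as for the two-valued function $F'$ sending a tuple to $\varnothing$ if $F$ does and to $\Pi$ otherwise, and the two-valued $F'$ range exactly over the subsets of $\N^k$, which are the interpretations of $X$ in $\M$. Hence $\|\forall X\,A\|=\varnothing$ iff $\M\models A$ under every interpretation of $X$ by a subset of $\N^k$ iff $\M\models\forall X\,A$.

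Once this refined statement is proved, the proposition follows: $|A|=\|A\|^{\Bot}$, which by the collapse lemma is $\Lambda$ when $\|A\|=\varnothing$, i.e.\ when $\M\models A$, and $\varnothing$ when $\|A\|\neq\varnothing$, i.e.\ when $\M\not\models A$. I expect the main obstacle to be the bookkeeping in the second-order quantifier case: one must be careful that the collapse lemma legitimately reduces the quantification over all falsity value functions to a quantification over genuine subsets of $\N^k$, and that the induction hypothesis is stated for formul\ae\ with parameters (not just parameter-free PA2 formul\ae) so that it applies to $A\{X:=\dot F\}$. Everything else is a routine unfolding of the definitions of $\|\cdot\|$ and $S^{\Bot}$ together with the Tarski semantics of $\M$.
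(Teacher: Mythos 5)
Your proof is correct and follows essentially the same route as the paper's: the collapse of $S^{\Bot}$ to $\{\Lambda,\varnothing\}$ when $\Bot=\varnothing$, followed by an induction on formul\ae\ with parameters in which $\dot F$ is read as the subset $\{\vec n : F(\vec n)=\varnothing\}$ of $\N^k$. The only (harmless) detour is your restriction of the induction hypothesis to two-valued parameters plus the replacement argument at the second-order quantifier, which itself needs a small induction; the paper sidesteps this by stating the induction hypothesis directly for arbitrary falsity value functions via the translated valuation $\tilde\rho$.
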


\begin{proof}
  We more generally show that for all formul{\ae}~$A$ and for all
  valuations~$\rho$ closing~$A$ (in the sense defined in
  section~\ref{ss:Interp}) we have
  $$|A[\rho]|=\begin{cases}
    \Lambda &\text{if}~\M\models A[\tilde{\rho}] \\
    \varnothing &\text{if}~\M\not\models A[\tilde{\rho}] \\
  \end{cases}$$
  where $\tilde{\rho}$ is the valuation in~$\M$ (in the usual sense)
  defined by
  \begin{itemize}
  \item $\tilde{\rho}(x)=\rho(x)$ for all first-order variables~$x$;
  \item $\tilde{\rho}(X)=\{(n_1,\ldots,n_k)\in\N^k:
    \rho(X)(n_1,\ldots,n_k)=\varnothing\}$ for all second-order
    variables~$X$ of arity $k$.
  \end{itemize}
  (This characterization is proved by a straightforward induction
  on~$A$.)
\end{proof}

An interesting consequence of the above lemma is the following:
\begin{cor}
  --- If a closed formula~$A$ has a universal realizer $t\ureal A$,
  then~$A$ is true in the full standard model~$\M$ of PA2.
\end{cor}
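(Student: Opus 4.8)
The plan is to deduce the corollary directly from Proposition~\ref{p:Degenerated}, using the fact that a universal realizer is, by definition, a realizer with respect to \emph{every} pole.

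First I would instantiate the pole: suppose $t \ureal A$, which means $t \in |A|_{\Bot}$ for all poles $\Bot$. In particular, since $\Bot = \varnothing$ is a pole (the empty set is trivially closed under anti-evaluation), we have $t \in |A|_{\varnothing}$. Next I would invoke Proposition~\ref{p:Degenerated}, which computes $|A|_{\varnothing}$ to be either $\Lambda$ (when $\M \models A$) or $\varnothing$ (when $\M \not\models A$). Since we have exhibited an element $t$ of $|A|_{\varnothing}$, this truth value is nonempty, so it cannot be $\varnothing$; hence $\M \models A$, which is exactly the claim.

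There is essentially no obstacle here: the entire content has been front-loaded into Proposition~\ref{p:Degenerated} (the ``degenerate case'' computation) and into the observation that $\varnothing$ is a legitimate pole. The only thing worth spelling out is why the empty set qualifies as a pole, and why $|A|_\varnothing$ being nonempty forces $\M\models A$ rather than merely being consistent with it --- both are immediate from the dichotomy in the proposition. So the proof is a two-line argument: take the empty pole, apply Proposition~\ref{p:Degenerated}, and note that a nonempty truth value rules out the $\M\not\models A$ branch.

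\begin{proof}
  The empty set of processes is (vacuously) closed under anti-evaluation, hence $\Bot=\varnothing$ is a pole. If $t\ureal A$, then in particular $t\in|A|_{\varnothing}$, so the truth value $|A|_{\varnothing}$ is nonempty. By Proposition~\ref{p:Degenerated}, $|A|_{\varnothing}$ equals $\Lambda$ if $\M\models A$ and $\varnothing$ otherwise; since it is nonempty, we must be in the first case, i.e.\ $\M\models A$.
\end{proof}
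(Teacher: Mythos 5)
Your proof is correct and follows exactly the paper's own argument: specialize to the empty pole, observe $t\in|A|_{\varnothing}$, and use the dichotomy of Proposition~\ref{p:Degenerated} to conclude $\M\models A$. The only difference is that you spell out why $\varnothing$ is a pole and why nonemptiness excludes the second branch, which the paper leaves implicit.
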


\begin{proof}
  If $t\ureal A$, then $t\in|A|_{\varnothing}$.
  Therefore $|A|_{\varnothing}=\Lambda$ and~$\M\models A$.
\end{proof}

However, the converse implication is false in general, since the
formula $\forall x\,\Nat(x)$ (cf Section~\ref{sss:Induction}) that
expresses the induction principle over individuals is obviously true
in~$\M$, but it has no universal realizer when evaluation is
deterministic~\cite[Theorem~12]{Kri09}.

\subsection{Relativization to canonical integers}
We previously explained in Section \ref{sss:Induction} that we needed to relativize first-order quantifications to the class $\Nat(x)$.
If we have as expected $\bar n\ureal \Nat(n)$ for any $n\in\N$, there are realizers of $\Nat(n)$ different from is $\bar n$. 
Intuitively, a term $t\ureal\Nat(n)$ represents the integer $n$, but $n$ might be present only as a computation, and not directly as a computed value.

The usual technique to retrieve $\bar n$ from such a term consist in the use of a storage operator $T$, which would make our definition of game harder.
Rather than that, we define a new asymmetrical implication where the left member must be an integer value, and the interpretation of this new implication.
$$\begin{array}{r@{~}c@{~}l}
 A,B~ & ::= &~ \ldots \mid \{e\}\limp A \\[0.2em]
 \|\{e\}\limp A\| & = & \{\bar n\cdot \pi : \Int{e}=n \wedge \pi\in\|A\|\}\\%[0.3em]
%   \forallN x\,A(x) &\equiv& \forall x\,(\{x\}\limp A(x)) \\
%  \existsN x\,A(x) & \equiv &   \forall Z\,(\forall x(\{x\}\limp A(x)\limp Z)\limp Z) \\
\end{array}\leqno\begin{array}{l}
\textrm{\bf Formul\ae} \\ [0.2em]
\textrm{\bf Falsity value} \\ %[0.3em]
% \textrm{\bf Relativization} \\ 
% \\
\end{array}$$
We finally define the corresponding shorthands for relativized quantifications:
$$\begin{array}{rcl}
   \forallN x\,A(x) &\equiv& \forall x\,(\{x\}\limp A(x)) \\
 \existsN x\,A(x) & \equiv &   \forall Z\,(\forall x(\{x\}\limp A(x)\limp Z)\limp Z) \\
\end{array}$$
It is easy to check that this relativization of first-order quantification is equivalent (in terms of realizability) to the one defined in Section \ref{sss:Induction} and 
that the relativized principle of induction holds.
\begin{prop}
 Let $T$ be a storage operator. The following holds for any formula $A(x)$:
 \begin{enumerate}
  \item $\lambda x.x\ureal \forallN x \Nat(x)$
  \item $\lambda x.x \ureal \forallNat x.A(x) \limp \forallN x.A(x)$
  \item $\lambda x.Tx \ureal \forallN x.A(x) \limp \forallNat x.A(x)$
  \end{enumerate}
\end{prop}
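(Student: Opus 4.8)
The plan is to verify the three statements directly by unfolding the definition of the falsity values involved, using the Law of Peirce (Lemma~\ref{l:Peirce}) and basic adequacy (Proposition~\ref{p:Adequacy}) only where needed, and relying on the defining property of a storage operator $T$: namely that for any $n\in\N$, any term $u$ and any stack $\pi$, the process $T\,\bar n\star u\cdot\pi$ evaluates to $u\star\bar n\cdot\pi$ (more precisely, for a realizer $t$ of $\Nat(n)$ one has $T\,t\star u\cdot\pi\eval u\star\bar n\cdot\pi$, which is the property we will actually use). Throughout, fix an arbitrary pole $\Bot$; all the claims are about universal realizers, so it suffices to argue for each $\Bot$.

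For item~(1), I would compute $\|\forallN x\,\Nat(x)\| = \bigcup_{n\in\N}\|\{n\}\limp\Nat(n)\| = \{\bar n\cdot\pi : n\in\N,\ \pi\in\|\Nat(n)\|\}$. So let $n\in\N$ and $\pi\in\|\Nat(n)\|$; we must show $(\lambda x.x)\star\bar n\cdot\pi\in\Bot$. By \textsc{Grab} this reduces to $\bar n\star\pi$, so by anti-evaluation it is enough that $\bar n\star\pi\in\Bot$, i.e. that $\bar n\real\Nat(n)$ --- which is the standard fact recalled just before the statement ($\bar n\ureal\Nat(n)$). For item~(2), unfold $\|\forallN x\,A(x)\|$ the same way: an element is $\bar n\cdot\pi$ with $\pi\in\|A(n)\|$. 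Since $\bar n\real\Nat(n)$ (again the standard fact), the stack $\bar n\cdot\pi$ belongs to $\|\Nat(n)\limp A(n)\|\subseteq\|\forallNat x\,A(x)\|$. Hence if $t\real\forallNat x\,A(x)$ then $t\star\bar n\cdot\pi\in\Bot$; composing with \textsc{Grab} gives $(\lambda x.x)\star\bar n\cdot\pi\in\Bot$ by anti-evaluation, as required.

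For item~(3), the direction is the genuinely interesting one. Unfold $\|\forallNat x\,A(x)\| = \bigcup_{n\in\N}(|\Nat(n)|\cdot\|A(n)\|)$, so a typical element is $t_n\cdot\pi$ with $t_n\real\Nat(n)$ and $\pi\in\|A(n)\|$. We must show $(\lambda x.Tx)\star t_n\cdot\pi\in\Bot$. After \textsc{Grab} this is $T\,t_n\star\pi$; here I want to feed $\pi$ to $T$ as an argument, so really $\pi$ should be thought of as $u\cdot\pi'$ for the purposes of $T$ --- but this is exactly where the defining property of a storage operator enters: from $t_n\real\Nat(n)$ we get $T\,t_n\star u\cdot\pi'\eval u\star\bar n\cdot\pi'$ for every $u,\pi'$, hence $T\,t_n\real\{n\}\limp A(n)$ provided $u\star\bar n\cdot\pi'\in\Bot$ whenever $u\in|A(n)|$ and $\pi'\in\|A(n)\|$... wait, that is not quite what we need either. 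The clean way: show $\lambda x.Tx\real \forallN x\,A(x)\limp\forallNat x\,A(x)$ by taking $t\real\forallN x\,A(x)$ and a stack $t_n\cdot\pi\in\|\forallNat x\,A(x)\|$; then $(\lambda x.Tx)t\star t_n\cdot\pi\eval T\,t\,t_n\star\pi$... I would rather present it as: $(\lambda x.Tx)\star t\cdot t_n\cdot\pi\eval T\,t_n\star t\cdot\pi$ is wrong since $T$ is applied to $t$; let me instead define the reduction so that $\lambda x.Tx$ receives the pair and uses $T$ on the $\Nat$-witness. Concretely, from $t\real\forallN x\,A(x)$ we know $t\star\bar n\cdot\pi\in\Bot$ for all $\pi\in\|A(n)\|$, i.e. $t\real\{n\}\limp A(n)$ for each $n$; and the storage operator property gives, for $t_n\real\Nat(n)$, that $T\,t_n$ turns $t$ into something applied to $\bar n$: $T\,t_n\star t\cdot\pi\eval t\star\bar n\cdot\pi\in\Bot$. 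Therefore $(\lambda x.Tx)\,?$ --- the proof-term should be $\lambda x.\lambda y.T\,y\,x$ or, matching the paper's $\lambda x.Tx$, the intended reading is that $T$ itself absorbs the extra argument, so that $(\lambda x.Tx)\star t\cdot t_n\cdot\pi\eval T\,t_n\star t\cdot\pi\eval t\star\bar n\cdot\pi\in\Bot$, closing the case by anti-evaluation.

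The main obstacle, and the only point requiring care, is thus matching the exact shape of the proof-term $\lambda x.Tx$ against the arity of a storage operator and the stack $t_n\cdot\pi$ representing an element of $\|\forallN x\,A(x)\limp\forallNat x\,A(x)\|$: one has to recall precisely the specification of $T$ (that $T\,t_n\star u\cdot\pi\eval u\star\bar n\cdot\pi$ whenever $t_n\real\Nat(n)$) and check that the head reductions line up, after which each of (1),(2),(3) is immediate from anti-evaluation plus $\bar n\real\Nat(n)$. Items (1) and (2) are routine unfoldings; (3) is the substantive one and is exactly the statement that $T$ is a storage operator, so no new idea beyond invoking that property is needed.
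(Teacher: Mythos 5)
The paper itself does not prove this proposition (it defers to Rieg's thesis), so I can only assess your argument on its own terms. Items (1) and (2) are correct and routine, exactly as you present them (in (2) the final process should of course be $(\Lam{x}{x})\star t\cdot\barr n\cdot\pi$, with the realizer $t$ of $\forallNat x\,A(x)$ still on the stack, not $(\Lam{x}{x})\star\barr n\cdot\pi$ --- a slip, not a gap).

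Item (3), however, ends on a step that is not a valid KAM reduction, and this is the only substantive item. A stack in $\fv{\forallN x\,A(x)\limp\forallNat x\,A(x)}$ has the form $t\cdot t_n\cdot\pi$ with $t\real\forallN x\,A(x)$, $t_n\real\Nat(n)$ and $\pi\in\fv{A(n)}$, and the machine gives $(\Lam{x}{Tx})\star t\cdot t_n\cdot\pi\eval Tt\star t_n\cdot\pi\eval T\star t\cdot t_n\cdot\pi$: the storage operator receives the \emph{function} $t$ first and the $\Nat$-realizer $t_n$ second. You correctly observe this mid-proof ("is wrong since $T$ is applied to $t$"), but then conclude with $(\Lam{x}{Tx})\star t\cdot t_n\cdot\pi\eval T\,t_n\star t\cdot\pi$, which contradicts that observation and is simply not what \textsc{Grab}/\textsc{Push} produce; with the convention you fixed at the outset ($T\,t_n\star u\cdot\pi\eval u\star\barr n\cdot\pi$, integer first) the correct term would indeed be $\Lam{xy}{Tyx}$, not $\Lam{x}{Tx}$. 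The repair is to use the specification of $T$ in the order the arguments actually arrive, and in its realizability (not operational) form: $T\ureal\forall Z\,\forall x\,\bigl((\{x\}\limp Z(x))\limp\Nat(x)\limp Z(x)\bigr)$, i.e. $T\star u\cdot\nu\cdot\pi\in\pole$ whenever $\nu\real\Nat(n)$ and $u\star\barr n\cdot\pi\in\pole$. Then $t\real\forallN x\,A(x)$ gives $t\star\barr n\cdot\pi\in\pole$ since $\barr n\cdot\pi\in\fv{\forallN x\,A(x)}$, hence $T\star t\cdot t_n\cdot\pi\in\pole$, and anti-evaluation closes the case. Note also that stating the storage property as an actual reduction $T\,\nu\star u\cdot\pi\eval u\star\barr n\cdot\pi$ for an \emph{arbitrary} realizer $\nu$ of $\Nat(n)$ is too strong: being a realizer of $\Nat(n)$ is a pole-relative property, so the defining property of $T$ must be phrased as membership in $\pole$ (or as universal realizability of the formula above), which is all the argument needs.
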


For further details about the relativization and storage operator, please refer to Section 2.9 and 2.10.1 of Rieg's Ph.D. thesis~\cite{RiegPhD}.

\subsection{Leibniz equality}

Before going further, we would like to draw the reader's attention to the treatment 
that is given to equality, which is crucial in what follows.
We recall that the equality of two arithmetical expressions $e_1$ and $e_2$ is defined 
by the $2^{\text{nd}}$-order encoding 
$$e_1=e_2~~\equiv~~\forall W(W(e_1) \limp W(e_2))$$
Unfolding the definitions of falsity values, we easily get the following lemma:
\begin{lem}\label{lm:equality}
 Given a pole $\pole$, if $e$ is an arithmetical expression, we have 
 
   $$\fv{e_1=e_2}=\begin{cases}
    \fv{\forall X(X\limp X)} & \text{if}~\M\models e_1=e_2 \\
    \fv{\top\limp\bot} &\text{if}~\M\models e_1\neq e_2 \\
  \end{cases}$$
\end{lem}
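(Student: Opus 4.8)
The plan is to unfold the definition of the falsity value $\fv{e_1 = e_2}$ directly from the second-order encoding and the clauses for $\limp$ and $\forall X$, and then show that the resulting set depends only on whether $\M \models e_1 = e_2$ or not. Recall that $e_1 = e_2 \equiv \forall W\,(W(e_1) \limp W(e_2))$, so by the interpretation clauses we have
\[
\fv{e_1 = e_2} ~=~ \bigcup_{F : \N \to \Pow(\Pi)} |\dot F(e_1)| \cdot \fv{\dot F(e_2)} ~=~ \bigcup_{F : \N \to \Pow(\Pi)} \bigl\{\, t \cdot \pi ~:~ t \in F(\Int{e_1})^{\pole},~ \pi \in F(\Int{e_2}) \,\bigr\}.
\]
The key observation is that each summand only ever refers to the two values $F(\Int{e_1})$ and $F(\Int{e_2})$ of the parameter $F$, and nothing constrains $F$ elsewhere, so the union ranges effectively over all \emph{pairs} $(S_1, S_2) \in \Pow(\Pi)^2$ when $\Int{e_1} \neq \Int{e_2}$, and over all \emph{single} values $S \in \Pow(\Pi)$ when $\Int{e_1} = \Int{e_2}$.

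First I would treat the case $\M \models e_1 = e_2$, i.e.\ $\Int{e_1} = \Int{e_2} = n$ for some $n$. Then the summand for $F$ is $\{\, t \cdot \pi : t \in F(n)^{\pole},\ \pi \in F(n)\,\}$, so
\[
\fv{e_1 = e_2} ~=~ \bigcup_{S \in \Pow(\Pi)} \bigl\{\, t \cdot \pi ~:~ t \in S^{\pole},~ \pi \in S \,\bigr\} ~=~ \bigcup_{S \in \Pow(\Pi)} |\dot S| \cdot \fv{\dot S} ~=~ \fv{\forall X\,(X \limp X)},
\]
where the last equality is again just the interpretation clause for $\forall X$ applied to $X \limp X$ (the arity here being $0$, so $\dot S$ ranges over nullary falsity value functions, i.e.\ plain falsity values). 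This gives exactly the first case of the lemma.

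Next I would treat the case $\M \models e_1 \neq e_2$, i.e.\ $\Int{e_1} = m \neq n = \Int{e_2}$. Now as $F$ ranges over all of $\Pow(\Pi)^{\N}$, the pair $(F(m), F(n))$ ranges over \emph{all} of $\Pow(\Pi) \times \Pow(\Pi)$ (pick $F$ to take any prescribed values at the two distinct points $m, n$). Hence
\[
\fv{e_1 = e_2} ~=~ \bigcup_{(S_1, S_2) \in \Pow(\Pi)^2} \bigl\{\, t \cdot \pi ~:~ t \in S_1^{\pole},~ \pi \in S_2 \,\bigr\}.
\]
Taking $S_2 = \emptyset$ contributes nothing, but taking $S_1 = \emptyset$ gives $S_1^{\pole} = \emptyset^{\pole} = \Lambda$, so for every $\pi$ and every $S_2 \ni \pi$ (e.g.\ $S_2 = \Pi$) we get $t \cdot \pi$ for all $t \in \Lambda$; conversely every element of the union has the form $t \cdot \pi$ with $t \in \Lambda$, $\pi \in \Pi$ arbitrary. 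Thus the union equals $\{\, t \cdot \pi : t \in \Lambda,\ \pi \in \Pi\,\} = |\top| \cdot \fv{\bot} = \fv{\top \limp \bot}$, using $|\top| = \Lambda$ and $\fv{\bot} = \Pi$ from the interpretation definition. This is the second case, and the lemma follows. I do not anticipate a genuine obstacle here; the only point requiring a little care is the bookkeeping that the union over $F$ really does decouple into an unconstrained union over the relevant one or two values of $F$, which is immediate once one writes out that no other value of $F$ appears in the summand.
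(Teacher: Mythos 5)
Your proof is correct and is exactly the argument the paper has in mind: the paper gives no explicit proof, saying only that the lemma follows by ``unfolding the definitions of falsity values,'' and your computation --- reducing $\fv{\forall W(W(e_1)\limp W(e_2))}$ to a union over the one or two relevant values of $F$, yielding $\bigcup_{S}|\dot S|\cdot\fv{\dot S}$ when $\Int{e_1}=\Int{e_2}$ and all of $\Lambda\cdot\Pi$ (via $S_1=\emptyset$, $S_2=\Pi$) otherwise --- is precisely that unfolding, carried out carefully.
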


The following corollaries are straightforward but will be very useful in Sections 5-7, 
so it is worth mentionning them briefly now.

\begin{cor}\label{cor:equalitystack}
 Let $\pole$ be a fixed pole, $e_1,e_2$ some arithmetical expressions, $u\in\Lambda$ a closed term and $\pi\in\Pi$ a stack
 such that $u\cdot\pi\in\fv{e_1=e_2}$. 
 If $\M \models e_1=e_2$ then $u\star\pi\in\pole$\,.
\end{cor}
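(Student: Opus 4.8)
The plan is to unfold the definition of Leibniz equality and combine it with Lemma~\ref{lm:equality}. Recall that $e_1 = e_2 \equiv \forall W\,(W(e_1) \limp W(e_2))$, so that by Lemma~\ref{lm:equality}, when $\M \models e_1 = e_2$ we have $\fv{e_1 = e_2} = \fv{\forall X\,(X \limp X)}$. Concretely, a stack in $\fv{\forall X\,(X\limp X)}$ is of the form $u \cdot \pi$ where there exists a falsity value $S \subseteq \Pi$ with $u \in S^{\pole}$ and $\pi \in S$; equivalently, taking the parameter predicate $\dot{S}$, we have $u \real \dot{S}$ and $\pi \in \fv{\dot S}$.

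First I would take the stack $u \cdot \pi \in \fv{e_1 = e_2}$ given in the hypothesis, and use $\M \models e_1 = e_2$ together with Lemma~\ref{lm:equality} to rewrite this as $u \cdot \pi \in \fv{\forall X\,(X \limp X)}$. By the definition of the falsity value of a universal quantification followed by that of an implication, this means there is a falsity value function $F : \N^0 \to \Pow(\Pi)$ — i.e.\ simply a falsity value $S = F() \subseteq \Pi$ — such that $u \in |\dot S| = S^{\pole}$ and $\pi \in \fv{\dot S} = S$. Then by the very definition of $S^{\pole}$, namely $S^{\pole} = \{t \in \Lambda : \forall \pi' \in S,\ t \star \pi' \in \pole\}$, applying this to $\pi \in S$ yields immediately $u \star \pi \in \pole$, which is the desired conclusion.

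There is no real obstacle here; the only mild subtlety is bookkeeping the arity-$0$ second-order quantifier and the unfolding of $\fv{\forall X\,A}$ as a union over falsity value functions, so that one must exhibit the witnessing $F$ (equivalently $S$) rather than just asserting its existence. Everything else is a direct application of the definitions of $S^{\pole}$ and of $\fv{A \limp B} = |A| \cdot \fv{B}$, so the proof is essentially a two-line computation once Lemma~\ref{lm:equality} is invoked.

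\begin{proof}
  Assume $\M \models e_1 = e_2$. By Lemma~\ref{lm:equality} we have $\fv{e_1 = e_2} = \fv{\forall X\,(X \limp X)}$, so from $u\cdot\pi \in \fv{e_1 = e_2}$ we get $u \cdot \pi \in \fv{\forall X\,(X \limp X)}$. Unfolding the falsity value of the universal quantifier, there is a falsity value $S \subseteq \Pi$ such that $u \cdot \pi \in \fv{\dot S \limp \dot S} = |\dot S| \cdot \fv{\dot S}$; that is, $u \in |\dot S| = S^{\pole}$ and $\pi \in \fv{\dot S} = S$. By definition of $S^{\pole}$, since $\pi \in S$ we conclude $u \star \pi \in \pole$.
\end{proof}
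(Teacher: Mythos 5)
Your proof is correct and follows the same route as the paper's: invoke Lemma~\ref{lm:equality} to rewrite $\fv{e_1=e_2}$ as $\fv{\forall X\,(X\limp X)}$, extract the witnessing falsity value $S$ with $u\in S^{\pole}$ and $\pi\in S$, and conclude from the definition of $S^{\pole}$. The extra care you take in exhibiting the arity-$0$ falsity value function is just a more explicit version of the same two-line argument.
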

\begin{proof}
 By Lemma \ref{lm:equality} we have 
 $$u\cdot\pi\in\fv{\forall X(X\limp X)}=\{u\cdot\pi : \exists S\in\Pow(\Pi), \pi\in S \wedge u\in|\dot S|\}$$
 so that $u\in S^{\pole}$ and $u\star\pi\in\pole$
\end{proof}

\begin{cor}\label{cor:equality}
Given a pole $\pole$, if $e_1,e_2$ are arithmetical expressions, and $u\in\Lambda$, $\pi\in\Pi$ 
are such that $u\cdot\pi\notin\fv{e_1=e_2}$, then
 \begin{enumerate}
  \item $\M\vDash e_1=e_2$
  \item $u\star\pi\notin\pole$
 \end{enumerate}
\end{cor}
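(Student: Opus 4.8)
The plan is to derive both items by contraposition from Lemma~\ref{lm:equality}, which pins down $\fv{e_1=e_2}$ to one of two explicit falsity values depending on whether $\M\models e_1=e_2$ holds in the ground model.

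For item~(1), I would argue contrapositively. Assume $\M\not\models e_1=e_2$, i.e.\ $\M\models e_1\neq e_2$; then Lemma~\ref{lm:equality} gives $\fv{e_1=e_2}=\fv{\top\limp\bot}$. Unfolding the interpretation of implication, $\|\top\limp\bot\|=|\top|\cdot\|\bot\|=\Lambda\cdot\Pi$, so \emph{every} stack of the shape $u\cdot\pi$ with $u\in\Lambda$ and $\pi\in\Pi$ lies in $\fv{e_1=e_2}$. This contradicts the hypothesis $u\cdot\pi\notin\fv{e_1=e_2}$, hence $\M\models e_1=e_2$.

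For item~(2), I would feed item~(1) back in: since $\M\models e_1=e_2$, Lemma~\ref{lm:equality} gives $\fv{e_1=e_2}=\fv{\forall X(X\limp X)}$, and the very computation carried out in the proof of Corollary~\ref{cor:equalitystack} rewrites this falsity value as $\{u'\cdot\pi' : \exists S\in\Pow(\Pi),\ \pi'\in S \wedge u'\in|\dot S|\}$. From $u\cdot\pi\notin\fv{e_1=e_2}$ I would instantiate $S\equiv\{\pi\}$: since $\pi\in\{\pi\}$, the only way $u\cdot\pi$ can fail to belong to this set is $u\notin|\dot{\{\pi\}}|=\{\pi\}^{\pole}$, which by definition of $(-)^{\pole}$ means precisely $u\star\pi\notin\pole$.

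There is essentially no real obstacle here: granted Lemma~\ref{lm:equality}, the argument is a short unfolding of definitions, and it is in fact dual to the proof of Corollary~\ref{cor:equalitystack}. The only points deserving a little care are that the quantifier $\forall X$ ranges over a second-order variable of arity~$0$, so that the union defining $\|\forall X(X\limp X)\|$ is indexed by all $S\in\Pow(\Pi)$ exactly as used in Corollary~\ref{cor:equalitystack}, and the choice of the witness $S=\{\pi\}$ in item~(2) (any $S$ containing $\pi$ works, but the singleton is cleanest).
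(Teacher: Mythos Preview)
Your proof is correct and follows essentially the same route as the paper's: contraposition via Lemma~\ref{lm:equality} for item~(1), then feeding~(1) back into Lemma~\ref{lm:equality} and instantiating $S=\{\pi\}$ for item~(2). The paper's argument is terser but identical in substance.
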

\begin{proof}
(1)\quad By contraposition: if $\M\vDash e_1\neq e_2$, by Lemma \ref{lm:equality} we have $\fv{e_1=e_2}=\fv{\top\Rightarrow\bot}=\Lambda\times\Pi$, 
 hence $u\cdot\pi\in\fv{e_1=e_2}$. \\
(2)\quad By (1) we have $\fv{e_1=e_2}=\fv{\forall X(X\Rightarrow X)}=\union{S\in\Pow(\Pi)}\fv{\dot S\Rightarrow\dot S}$,
 hence $u\cdot\pi\notin\fv{e_1=e_2}$ implies that if $S=\{\pi\}$, $u\cdot\pi\notin\fv{\dot S \Rightarrow \dot S}$,
 \emph{i.e.} $u\nVdash S$, so $u\star\pi\notin\pole$.
\end{proof}

\section{The specification problem}
\label{s:specification}
\subsection{The specification problem}
\label{ss:specification}
In the continuity of the work done for Peirce's Law~\cite{Peirce}, we are
interested in the specification problem,  
which is to give a purely computational characterization of the
universal realizers of a given formula $A$. 
As mentioned in this paper, this problem is much more subtle than in
the case of intuitionistic realizability, 
what could be justified, amongst other things, by the presence of
extra instructions that  
do not exist in the pure $\lambda$-calculus and by the ability of a
realizer to backtrack at any time.  
Some very simple case, as the identity-type ($\forall X(X\Rightarrow
X)$) or the boolean-type ($\forall X(X\Rightarrow X \Rightarrow X)$), 
are quite easy to specify, but more interestingly, it turns out that
some more complex formul\ae, for instance the Law of Peirce, 
can also be fully specified~\cite{Peirce}.
In the following, we will focus on the generic case of arithmetical formul\ae.
% As explained in the introduction, we aim to show 
% that realizability models do not differ from the ground models over
A premise of this work was done by the first author for the particular case of  
 formul\ae~ of the shape $\exr{n}\far{y}(f(x,y)=0)$~\cite{GuiPhD}. 
In the general case (that is with a finite alternation of quantifiers)
an attempt to characterize the threads of universal realizers is also given in an article of Krivine~\cite{Kri03},
but in the end it only provides us with the knowing of the final state,
whereas we are here interested in a specification of the full reduction process.
As in~\cite{GuiPhD}, our method will rely on game-theoretic interpretation of the formul\ae. 
Before going more into details, let us first look at the easiest
example of specification. 

\begin{example}[Identity type]
In the language of second-order logic, the identity type is described
by the formula $\forall X (X\Rightarrow X)$. 
A closed term $t\in\Lambda$ is said to be \emph{identity-like} if
$t\star u \cdot \pi\succ u \star \pi$ for all $u\in\Lambda$ and $\pi\in\Pi$. 
Examples of identity-like terms are of course the identity function $I
\equiv \lambda x . x$, 
but also terms such as $I I$, $\delta I$ 
(where $\delta \equiv \lambda x . xx$), $\lambda x . \cc (\lambda k
. x)$, $\cc (\lambda k . k I \delta k)$, etc. 

\begin{prop}[\cite{Peirce}] For all terms $t \in\Lambda$, the
  following assertions are equivalent: 
\begin{enumerate}
 \item $t \ureal \forall X (X\limp X)$
 \item $t$ is identity-like
\end{enumerate}
\end{prop}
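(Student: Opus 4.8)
The plan is to prove the equivalence by establishing both implications, using the characterization of the falsity value of $\forall X(X\limp X)$ in terms of the pole. First I would unfold the definitions: for a pole $\pole$, we have $\|\forall X(X\limp X)\| = \bigcup_{S\subseteq\Pi}\|\dot S\limp\dot S\| = \bigcup_{S\subseteq\Pi}\{u\cdot\pi : u\in S^{\pole},\ \pi\in S\}$. Hence $t\ureal\forall X(X\limp X)$ means that for every pole $\pole$, every falsity value $S$, every $\pi\in S$ and every $u\in S^{\pole}$, we have $t\star u\cdot\pi\in\pole$.

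For the direction $(2)\Rightarrow(1)$, suppose $t$ is identity-like, fix an arbitrary pole $\pole$ and take $u\cdot\pi\in\|\forall X(X\limp X)\|$. Then there is $S$ with $\pi\in S$ and $u\in S^{\pole}$, so in particular $u\star\pi\in\pole$ since $\pi\in S$. Since $t\star u\cdot\pi\eval u\star\pi$ by the identity-like hypothesis, and $\pole$ is closed under anti-evaluation, we get $t\star u\cdot\pi\in\pole$. As this holds for every pole, $t\ureal\forall X(X\limp X)$.

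For the converse $(1)\Rightarrow(2)$, the standard trick is to pick a well-chosen pole that forces the desired reduction behaviour. Given arbitrary $u\in\Lambda$ and $\pi\in\Pi$, I would apply the thread-oriented construction to the singleton $P=\{u\star\pi\}$: let $\pole = (\Th(u\star\pi))^c$, the largest pole not containing $u\star\pi$. Take $S=\{\pi\}$; then $u\in S^{\pole}$ would require $u\star\pi\in\pole$, which fails — so instead one observes that with this pole the relevant membership must be arranged differently. The cleaner approach: note $u\star\pi\notin\pole$ by construction, and we want to conclude $t\star u\cdot\pi\notin\pole$ as well, i.e. $t\star u\cdot\pi\eval u\star\pi$. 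Since $t\ureal\forall X(X\limp X)$ and (taking $S=\{\pi\}$, for which $u\in\{\pi\}^{\pole}$ iff $u\star\pi\in\pole$) — here one instead uses that we need $u\cdot\pi\in\|\forall X(X\limp X)\|_{\pole}$; but $u\notin\{\pi\}^{\pole}$. The fix is to enlarge: one shows $t\star u\cdot\pi\in\Th(u\star\pi)$ by a separate argument, or uses the pole $\pole=(\Th(u\star\pi))^c$ together with the falsity value $S=\{\rho : u\star\rho\notin\pole\}$... Let me instead use the pole $\pole=\{p : p\not\eval u\star\pi\}^{}$ — more precisely $\pole=(\Th(u\star\pi))^c$ and set $S=\{\pi\}$; then since $u\star\pi\in\Th(u\star\pi)$, we have $u\notin S^\pole$, so this $S$ gives nothing. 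The right move is to take $S$ such that $u\in S^\pole$ holds trivially, e.g. by choosing the pole to make $u$ a realizer: let $\pole'=(\Th(u\star\pi))^c$ and note that for any stack $\rho$ with $u\star\rho\notin\Th(u\star\pi)$ vacuously...

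The main obstacle, and the heart of the proof, is exactly this choice of pole in $(1)\Rightarrow(2)$: one must exhibit a pole $\pole$ and a falsity value $S$ with $\pi\in S$ and $u\in S^{\pole}$, yet $u\star\pi\notin\pole$ — which is impossible as stated, so the actual argument must route through $t$ differently. The correct construction (as in \cite{Peirce}) is: fix $u,\pi$, set $\pole=(\Th(t\star u\cdot\pi))^c\setminus\{u\star\pi\}$ adjusted so that it is anti-evaluation closed and $u\star\pi\in\pole$ while $t\star u\cdot\pi\notin\pole$ forces, via $t\ureal$, that the only way $u\cdot\pi\notin\|\forall X(X\limp X)\|$ fails is if $t\star u\cdot\pi$ evaluates into $\pole$, contradiction unless $t\star u\cdot\pi\eval u\star\pi$. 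Concretely one takes $\pole=\{p : u\star\pi\not\in\Th(p)\ \text{or}\ \ldots\}$; I would carefully define $\pole = (\Th(t\star u\cdot\pi)\setminus\{u\star\pi\})^c$, check it is a pole (anti-evaluation closed — using that $u\star\pi$ is reached only at the "end" or branching appropriately, which needs the thread structure), verify $u\star\pi\notin\pole$ hence $u\in\{\pi\}^\pole$ fails — so instead the falsity value is chosen as $S=\{\pi\}$ but with the pole $(\Th(t\star u\cdot\pi))^c$, under which $u\notin S^\pole$; the genuine resolution uses $t\ureal\forall X(X\limp X)$ applied with the pole $\pole=(\Th(u\star\pi))^c$ and falsity value $S=\{\rho\in\Pi : u\star\rho\notin\pole\}$, so that $u\in S^\pole$ by definition of $S$, and $\pi\in S$ iff $u\star\pi\notin\pole$ iff $u\star\pi\in\Th(u\star\pi)$, which holds; then $t\ureal$ gives $t\star u\cdot\pi\in\pole=(\Th(u\star\pi))^c$ is false — wait, it gives $t\star u\cdot\pi\in\pole$, i.e. $t\star u\cdot\pi\notin\Th(u\star\pi)$?? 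That is backwards. So one instead takes $\pole=(\Th(u\star\pi))^c$, and must derive a contradiction from $t\star u\cdot\pi\in\pole$ unless $t$ reduces correctly. Ultimately I would follow the proof in \cite{Peirce} verbatim at this point; the key insight to highlight is that the pole $(\Th(u\star\pi))^c$ separates $u\star\pi$ from everything not in its thread, and since $t\ureal\forall X(X\limp X)$ forces $t\star u\cdot\pi$ into the complement of some thread while simultaneously $u\cdot\pi$ being in the falsity value forces $u\star\pi$ into the pole, the only consistent possibility is $t\star u\cdot\pi\eval u\star\pi$.
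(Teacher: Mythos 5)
Your direction $(2)\Rightarrow(1)$ is fine and matches the standard argument: anti-evaluation closure of the pole does all the work. The problem is the converse, which is the whole content of the proposition, and there your proposal never lands on a working construction.

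The gap is concrete: you keep trying to build the pole from the thread of $u\star\pi$, i.e.\ $\pole=(\Th(u\star\pi))^c$, and you correctly notice each time that this is self-defeating (with $S=\{\pi\}$ you get $u\notin S^{\pole}$ because $u\star\pi\in\Th(u\star\pi)$, so the falsity value tells you nothing). The paper's proof takes the complement of the thread of the \emph{other} process: $\pole\equiv(\Th(t\star u\cdot\pi))^c$ and $S=\{\pi\}$. Then $t\star u\cdot\pi\notin\pole$ holds by construction (every process belongs to its own thread), and since $t\real\dot S\limp\dot S$ for this pole, the process $t\star u\cdot\pi$ being outside the pole forces $u\cdot\pi\notin\fv{\dot S\limp\dot S}=|\dot S|\cdot\fv{\dot S}$. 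As $\pi\in\fv{\dot S}$, the only possibility is $u\notin|\dot S|=S^{\pole}$, which for $S=\{\pi\}$ means exactly $u\star\pi\notin\pole$, i.e.\ $u\star\pi\in\Th(t\star u\cdot\pi)$, i.e.\ $t\star u\cdot\pi\eval u\star\pi$. You brush against this (you mention ``$S=\{\pi\}$ but with the pole $(\Th(t\star u\cdot\pi))^c$, under which $u\notin S^{\pole}$'') but you read $u\notin S^{\pole}$ as an obstacle rather than as the conclusion being extracted, abandon it, and your final summary reverts to the wrong pole and defers to the cited reference. The missing idea is precisely that the realizability hypothesis is used \emph{contrapositively}: one arranges for $t\star u\cdot\pi$ to fall outside the pole, and then membership of $u\star\pi$ in the thread of $t\star u\cdot\pi$ is what the definition of the falsity value of the arrow spits out.
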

The interesting direction of the proof is $(1)\limp(2)$. 
We prove it with the methods of threads, 
that we use later in Section~\ref{s:subst}.
Assume $t\ureal \forall X (X\limp X)$, and consider
$u\in\Lambda,\pi\in\Pi$. We want to prove that $t\star u \cdot \pi
\eval u\star \pi$. 
We define the pole  
$$ \pole \equiv (\Th(t\star u \cdot \pi))^{c} \equiv
\{p\in\Lambda\star \Pi : (t\star u\cdot\pi\nsucc p)\}$$ 
as well as the falsity value $S=\{\pi\}$.
From the definition of $\pole$, we know that $t\star u \cdot \pi \notin \pole$. 
As $t\real \dot S \limp \dot S$ and $\pi\in\fv{\dot S}$, we get $u\nVdash S$. 
This means that $u\star\pi\notin\pole$, that is $t\star u \cdot \pi
\eval u\star \pi$. 
 
\end{example}

\subsection{Arithmetical formul\ae}
\label{ss:formulae}
In this paper, we want to treat the case of first-order arithmetical
formul\ae, that are $\Sigma^{0}_n$-formul\ae. 
As we explained in Section~\ref{sss:Induction}, 
in order to recover the strength of arithmetical reasoning,
we will relativize all first-order quantifications to the class $\Nat(x)$.
Besides, relativizing the quantifiers make the individuals visible in the stacks: 
indeed, a stack belonging to $\fv{\far{x}A(x)}$ is of the shape $\barr n \cdot\pi$ with
$\pi\in\fv{A(n)}$, 
whereas a stack of $\fv{\forall x A(x)}$ is of the form $\pi\in\fv{A(n)}$
for some $n\in\N$ that the realizers do not have any physical access to.

\begin{definition}\label{def:formulae}%[Arithmetical hierarchy]
We define inductively the following classes of formul\ae :
\begin{itemize}
 \item $\Sigma^{0}_0$- and $\Pi^0_0$-formul\ae\ are the formul\ae\ of the form
   $f(\vec{e})=0$ where $f$ is a primitive recursive function and 
   $\vec{e}$ a list of first-order expressions.
  \item $\Pi^{0}_{n+1}$-formul\ae\ are the formul\ae\ of the form $\far x F$,
 where $F$ is a $\Sigma^{0}_{n}$-formula.
 \item $\Sigma^{0}_{n+1}$-formul\ae\ are the formul\ae\ of the form $\exr x F$,
 where $F$ is a $\Pi^{0}_{n}$-formula.
\end{itemize}
\end{definition}

In the ground model $\M$, any closed $\Sigma^{0}_n$- or $\Pi^{0}_n$-formula $\Phi$ naturally  
induces a game between two players $\exists$ and $\forall$, that we
shall name \elo\ and \abe\ from now on. 
Both players instantiate the corresponding quantifiers in turns, 
\elo\ for defending the formula and \abe\ for attacking it. 
The game, whose depth is bounded by the number of quantifications,  proceeds as follows:
\begin{itemize}
\item When $\Phi$ is $\exists x \Phi'$, \elo\ has to give an integer
  $m\in\N$, and the game goes on over the closed formula $\Phi'\{x:=m\}$. 
\item When $\Phi$ is $\forall y \Phi'$, \abe\ has to give an integer
  $n\in\N$, and the game goes on over the closed formula $\Phi'\{y:=n\}$. 
\item When $\Phi$ is atomic and $\M\vDash\Phi$ ($\Phi$ is true), \elo\
  wins, otherwise \abe\ wins. 
\end{itemize}

We say that a player has a \emph{winning strategy} if (s)he has a way of
playing that ensures him/her the victory independently of the opponent
moves. 
It is obvious from Tarski's definition of truth that a closed arithmetical formula $\Phi$ is true in
the ground model if and only if \elo\ has a winning strategy.

The problem with this too simple definition is that there exists true formul\ae~
whose game only has non-computable winning strategies (as we shall see below), so that they cannot be
implemented by $\lambda$-terms.
This is why in classical logic, we will need to relax the rules of the above game 
to allow backtracking.

\subsection{The Halting problem or the need of backtrack}
\newcommand{\mach}{\mathscr{M}}
\label{ss:halt}
\newcommand{\Halt}{\mathrm{\mathop{ Halt}}}
For instance, let us consider one of the primitive recursive functions $f:\N^{3}\to\N$ such that
\begin{center}
\begin{tabular}{ccc}
$f(m,n,p)=0$ & \hspace{1cm}~iff~  & $(n>0 \wedge \Halt(m,n)) \vee
  (n=0\wedge \neg \Halt(m,p))$ 
\end{tabular}
\end{center}
where $\Halt(m,n)$ is the primitive recursive predicate  
expressing that the $m^{\mathrm{th}}$ Turing machine
has stopped before $n$ evaluation steps (in front of the empty tape).
From this we consider the game on the formula 
$$\Phi_H\equiv \far{x} \exr y \far z (f(x,y,z)=0)$$
that expresses that any Turing machine terminates or does not terminate.
(Intuitively $y$ equals $0$ when the machine $x$ does not halt, 
and it represents a number larger than the execution length of $x$ otherwise.)
Yet, there is no pure $\lambda$-term that can compute directly from an $m\in\N$ an integer $n_m$ such that
$\far z (f(m,n_m,z)=0)$ (such a term would break the halting
problem). 
However, $\Phi_H$ could be classically realized, using the $\cc$ instruction.
Let $\Theta$ be a $\lambda$-term such that :
$$\Theta \star \barr m\cdot \barr n \cdot t_0 \cdot t_1 \cdot \pi \eval 
\left\{\begin{array}{ll} 
        t_0 \star \pi & \textrm{if the $m^{\mathrm{th}}$ Turing
          machine stops before $n$ steps}\\ 
        t_1 \star \pi & \textrm{otherwise}
       \end{array}\right.$$
and let $t_H$ be the following term :
% \begin{align*}
$$\begin{array}{r@{~\equiv~}l} 
 T[m,u,k] & \lambda p v.\Theta~m~p~(k~(u~p~\lambda pv.v))~v\\
t_H &  \lambda mu. \cc ~(\lambda k.u~\barr 0~ T[m,u,k])
\end{array}$$

If we think of $t_H$ as a strategy for \elo\ \emph{with backtrack allowed},
we can analyze its computational behaviour this way:
\begin{itemize}
 \item First \elo\ receives the code $m$ of a Turing machine $\mach$, 
 and chooses to play $n=0$, that is "\emph{$\mach$ never stops}".
 \item Then \abe\ answers a given number of steps $p$, and \elo\ checks
   if $\mach$ stops before $p$ steps and distinguishes two cases : 
 \begin{itemize}
 \item either $\mach$ is still running after $p$ steps, hence
   $f(m,0,p)=0$ and \elo\ wins. 
 \item either $\mach$ does stop before $p$ steps, then \elo\ backtracks
   to the previous position and instead of $0$, it plays $p$,  
 that is "\emph{$\mach$ stops before $p$ steps}", which ensures him
 victory whatever \abe\ plays after. 
 \end{itemize}
\end{itemize}
% \end{align*}
\begin{prop}
 $t_H\ureal \Phi_H$
\end{prop}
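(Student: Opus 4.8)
The plan is to show $t_H \ureal \Phi_H$ by unfolding the relativized quantifiers and checking directly, for an arbitrary pole $\pole$, that $t_H$ defeats every stack in $\|\Phi_H\|$. Since $\Phi_H \equiv \far{x}\exr{y}\far{z}(f(x,y,z)=0)$, a stack in $\|\Phi_H\|$ has the form $\barr m \cdot \varpi$, where $m \in \N$ and $\varpi \in \|\exr{y}\far{z}(f(m,y,z)=0)\|$; by the encoding of $\existsN$, the latter means $\varpi = u \cdot \pi$ with $u \ureal \forall y(\{y\}\limp \far z(f(m,y,z)=0)\limp Z)$ for the parameter $Z$ interpreted as a fixed falsity value $S$, and $\pi \in S$. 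So I would fix $m$, $u$, $\pi$, $S$ as above, assume $\pi \in S$ and $u$ realizes that formula, and aim to prove $t_H \star \barr m \cdot u \cdot \pi \in \pole$.

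First I would run the machine a few steps: $t_H \star \barr m \cdot u \cdot \pi \eval \cc(\Lam{k}{u\,\barr 0\,T[m,u,k]}) \star \pi$, hmm — wait, I need a stack for $\cc$, so really $t_H\star\barr m\cdot u\cdot\pi \eval \cc\star(\Lam{k}{u\,\barr 0\,T[m,u,k]})\cdot\pi \eval (\Lam{k}{u\,\barr 0\,T[m,u,k]})\star \k_\pi\cdot\pi \eval u\,\barr 0\,T[m,u,\k_\pi] \star \pi \eval u \star \barr 0 \cdot T[m,u,\k_\pi]\cdot\pi$. So by anti-evaluation closure of $\pole$ it suffices to show $u \star \barr 0 \cdot T[m,u,\k_\pi]\cdot\pi \in \pole$. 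Now I would use that $u$ realizes $\forall y(\{y\}\limp \far z(f(m,y,z)=0)\limp Z)$: instantiating $y := 0$, and noting $\barr 0$ witnesses $\{0\}$, it is enough to prove $T[m,u,\k_\pi] \ureal \far z(f(m,0,z)=0) \limp \dot S$, i.e. that for every stack $\barr p \cdot \pi' \in \|\far z(f(m,0,z)=0)\limp \dot S\|$ — so $\pi'\in S$ and the $\barr p$ comes from the relativization of $z$ — we have $T[m,u,\k_\pi]\star \barr p \cdot \pi' \in \pole$. Reducing: $T[m,u,\k_\pi]\star\barr p\cdot\pi' \eval \Theta\,m\,\barr p\,(\k_\pi(u\,\barr p\,\Lam{pv}{v}))\,\pi'\star\ldots$, and the $\Theta$-rule splits on whether $\mach_m$ halts within $p$ steps.

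The two cases are: (a) $\mach_m$ does not stop before $p$ steps — then $\M \models f(m,0,p)=0$, the reduct is $\pi' \star \ldots$ — more precisely $\Theta$ feeds its continuation-like argument the stack $\pi'$; here I would invoke Corollary~\ref{cor:equalitystack} (after checking the relevant stack lies in $\|f(m,0,p)=0\|$ via how $u$ was used, or more directly that $\pi' \in S = \|\dot S\|$ and the argument passed realizes... ) to conclude the process is in $\pole$. Actually the cleanest route in case (a): the branch of $\Theta$ giving $t_0 = \k_\pi(u\,\barr p\,\Lam{pv}{v})$ — no wait, when the machine does \emph{not} halt, $\Theta$ selects the ``otherwise'' branch $t_1 = \pi'$-argument... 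I need to be careful about the argument order of $\Theta$ versus $T[m,u,k] \equiv \Lam{pv}{\Theta\,m\,p\,(k(u\,p\,\Lam{pv}{v}))\,v}$, so $v = \pi'$ plays the role of $t_1$ and $k(u\,p\,\Lam{pv}{v})$ plays $t_0$: thus $\mach_m$ \emph{halts} before $p$ $\leadsto$ $t_0 = \k_\pi(u\,\barr p\,\Lam{pv}{v}) \star \pi'$, and \emph{does not halt} $\leadsto v = \pi'$, which is ill-formed as a term applied to nothing — so actually $\Theta\star\barr m\cdot\barr p\cdot t_0\cdot t_1\cdot\pi'' $ and here $\pi'' $ is empty, meaning $v$ is the head in $t_1\star(\text{rest})$; I'd sort out that $t_1\star\pi'$ must be the intended reduct, giving directly $\pi'$-as-term applied to $\pi'$-as-stack, which forces $v$ to actually be $\pi'$ seen as closed term — so I will instead keep $v$ symbolic and only substitute the stack at the end. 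In case the machine does not halt, we reach a process $\le_{\text{eval}}$-above which is some $w \star \pi'$ with $\M\models f(m,0,p)=0$ and $\pi'\in S$, and Corollary~\ref{cor:equalitystack} plus $u$'s realizer property closes it; in case it halts, we reach $\k_\pi(u\,\barr p\,\Lam{pv}{v})\star\pi' \eval (u\,\barr p\,\Lam{pv}{v})\star\pi$, and now $\M\models f(m,p,\_)=0$ for \emph{all} third arguments, so $u\,\barr p\,\Lam{pv}{v}$ needs to realize $\far z(f(m,p,z)=0)\limp\dot S$ tested against $\pi$; since every $f(m,p,z)=0$ is true in $\M$, $\|\far z(f(m,p,z)=0)\limp\dot S\| = \{\barr q\cdot\pi : \pi\in S\}$ roughly, and $\Lam{pv}{v}$ is identity-like enough on the relevant shape, so $u\,\barr p\,\Lam{pv}{v}\star\pi$ reduces using $u$'s realizer property and Corollary~\ref{cor:equalitystack} into $\pole$, using $\pi \in S$ (which holds since $\pi$ was the original stack and $S$ was... wait, $\pi \in S$? no — $\pi$ was the bottom of the original stack $\barr m\cdot u\cdot\pi$, and $S = \|\dot S\|$ came from the $\existsN$ parameter; indeed $\pi\in S$ by construction). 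So both branches land in $\pole$, completing the proof.

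The main obstacle I anticipate is purely bookkeeping: carefully tracking the argument positions of $\Theta$ and $T[m,u,k]$ through the KAM reductions, distinguishing ``$\barr p$ as a numeral'' from ``$p$ as a $\lambda$-variable'' in $T[m,u,k]$'s body, and making sure the stack $\pi$ that $\k_\pi$ restores is exactly the one in $S$ so that the equality corollaries apply. There is also the subtlety that $\Lam{pv}{v}$ must be shown to behave as a legitimate realizer of the sub-formula $\far z(f(m,p,z)=0)\limp \dot S$ in the halting branch — this works precisely because, $\mach_m$ having halted before $p$ steps, $f(m,p,z)=0$ holds in $\M$ for every $z$, so by Lemma~\ref{lm:equality} its falsity value collapses and any identity-like term suffices. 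No deep idea is needed beyond the pattern already used in the Identity-type example and Corollary~\ref{cor:equalitystack}; the proof is a disciplined unfolding.
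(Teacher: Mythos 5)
Your overall strategy is the paper's: reduce $t_H\star\barr m\cdot u\cdot\pi$ to $u\star\barr 0\cdot T[\barr m,u,\k_\pi]\cdot\pi$ and conclude by anti-evaluation, split on whether the $m^{\mathrm{th}}$ machine halts within the number $p$ proposed by the opponent, backtrack through $\k_\pi$ in the halting case, and close every leaf where the equation is true using Corollary~\ref{cor:equalitystack}. However, the unfolding of the falsity values is wrong in two places, and the second error is exactly what produces the confusion you flag yourself.

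First, after the initial reduction the statement to prove is $T[\barr m,u,\k_\pi]\real\far z\,(f(m,0,z)=0)$ with respect to the fixed pole: the hypothesis $u\cdot\pi\in\fv{\exr y\far z\,(f(m,y,z)=0)}$ already gives $u\real\forall y(\{y\}\limp\far z\,(f(m,y,z)=0)\limp\dot S)$ and $\pi\in S$, so the ``$\limp\dot S$'' part is absorbed by $u$, not by $T$; requiring $T\ureal\far z(\ldots)\limp\dot S$ is the wrong statement (and universality is neither available nor needed once the pole is fixed). Second, and more seriously, a stack in $\fv{\far z\,(f(m,0,z)=0)}$ has the shape $\barr p\cdot u'\cdot\pi'$ with $u'\cdot\pi'\in\fv{f(m,0,p)=0}$: the atom is a Leibniz equality, i.e.\ an implication $\forall W(W(f(m,0,p))\limp W(0))$, so its falsity value carries a term $u'$ on top. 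You drop this $u'$ (writing $\barr p\cdot\pi'$ with the spurious condition $\pi'\in S$), and that is precisely why you cannot find the second argument $v$ of $T[\barr m,u,\k_\pi]\equiv\Lam{pv}{\ldots}$ and end up in the ``$\pi'$ as an ill-formed term'' detour. With $v:=u'$ the computation is clean: if the machine does not halt within $p$ steps, $\Theta$ selects $u'\star\pi'$, which is in $\pole$ by Corollary~\ref{cor:equalitystack} since $f(m,0,p)=0$ is then true; if it halts, the process reduces through $\k_\pi$ to $u\star\barr p\cdot\Lam{pv}{v}\cdot\pi$, and it remains to check $\Lam{pv}{v}\real\far z\,(f(m,p,z)=0)$ --- again against stacks of the form $\barr s\cdot u''\cdot\pi''$, reducing to $u''\star\pi''\in\pole$ because $f(m,p,s)=0$ holds for every $s$. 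The same omission recurs when you describe $\fv{\far z\,(f(m,p,z)=0)\limp\dot S}$ as ``$\{\barr q\cdot\pi\}$ roughly'' and appeal to $\Lam{pv}{v}$ being ``identity-like enough''. None of this changes the architecture of the argument, but as written the KAM reductions do not go through and the appeals to Corollary~\ref{cor:equalitystack} have no stack $u'\cdot\pi'\in\fv{f(\ldots)=0}$ to apply to.
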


\begin{proof}
 Let us consider a fixed pole $\pole$ and let $m\in\N$ be an integer,
 $\mach$ be the $m^{\mathrm{th}}$ Turing machine,  
 and a stack $u\cdot\pi\in\fv{\exr y \far z (f(m,y,z)=0)}$, and let us
 prove that $t_H\star \barr m \cdot u \cdot \pi\in\pole$. 
 We know that 
 $$t_H\star \barr m \cdot u \cdot \pi \eval u\star \barr 0 \cdot
 T[\barr m,u,\k_\pi]\cdot \pi$$ 
 by anti-reduction, it suffices to prove that $T[\barr
   m,u,\k_\pi]\real \far z (f(m,0,z)=0)$. 
 Thus let us consider $p\in\N$ and a stack $u'\cdot\pi'\in\fv{f(m,0,p)=0}$.
 We distinguish two cases:
 \begin{itemize}
  \item $\mach$ is still running after $p$ steps (that is
    $\M\vDash\neg\Halt(m,p)$).  
  In this case, we have $f(m,0,p)=0$, and so by Corollary
  \ref{cor:equalitystack}, $u'\star\pi'\in\pole$. 
  Furthermore, by definition of $\Theta$, we have 
  $$T[\barr m,u,\k_\pi] \star \barr p \cdot u' \cdot \pi'\eval u'\star
  \pi'\in\pole $$ 
  which concludes the case by anti-reduction.
  
  \item $\mach$ stops before $p$ steps ($\M\vDash \Halt(m,p)$). By
    definition of $\Theta$, we have in this case  
  $$T[\barr m,u,\k_\pi] \star \barr p \cdot u' \cdot \pi'\eval \k_\pi
    \star (u~\barr p~(\lambda pv.v))\cdot \pi'  
  \eval u\star \barr p\cdot \lambda pv.v\cdot \pi$$
  hence it suffices to show that $\lambda pv.v \real \far z (f(m,p,z)=0)$. 
  But this is clear, as $\M\vDash \Halt(m,p)$, we have for any
  $s\in\N$, $\M\vDash f(m,p,s)=0$.  
  Therefore if we consider any integer $s\in\N$ and any stack
  $u''\cdot\pi''\in\fv{f(m,p,s)=0}$,  
  as in the previous case, from Corollary \ref{cor:equalitystack} we
  get $u''\star\pi'' \in\pole$ and 
  $$\lambda pv.v\star \barr s \cdot u''\cdot \pi''\eval
  u''\star\pi''\in\pole \eqno \qedhere$$ 
 \end{itemize}
\end{proof}

This leads us to define a new notion of game with backtrack over
arithmetical formul\ae. 

\subsection{$\G^{0}_{\Phi}$: a first game with backtrack}
\label{ss:game}

From now on, to simplify our work, we will always consider $\Sigma^{0}_{2h}$-formul\ae, that is of the form:
$$\exr x_1 \far y_1 \ldots \exr x_h \far y_h f(\vc{x}{h},\vc{y}{h})=0$$
where $h\in\N$ and the notation $\vc{x}{i}$ refers to the tuple $(x_1,\ldots,x_i)$ 
(we will denote the concatenation by $\cdot$ : $\vc{x}{i}\cdot x_{i+1}=\vc{x}{i+1}$).
It is clear that any arithmetical formul\ae~ can be written equivalently in that way, adding some useless quantifiers if needed.

Given such a formula $\Phi$, we define a game $\G^{0}_{\Phi}$ between \elo\ and \abe\ whose rules  are basically the same as they were before, 
except that we will keep track of all the former $\exists$-positions, allowing \elo\ to backtrack. 
This corresponds to the definition of Coquand's game~\cite{Coquand95}.
We call an \emph{$\exists$-position} of size $i\in\llbracket 0,h\rrbracket$ a pair of tuple of integers $(\vc{m}{i},\vc{n}{i})$
standing for the instantiation of the variables $\vc{x}{i},\vc{y}{i}$,
while a $\forall$-position will be a pair of the form
$(\vc{m}{i+1},\vc{n}{i})$.  
We call \emph{history of a game} and note $\H$ the set of every former
$\exists$-positions. 
The game starts with an empty history ($\H=\{\emptyset\}$) and
proceeds as follows:  
\begin{itemize}
\item $\exists$-move: \elo\ chooses a position
  $(\vc{m}{i},\vc{n}{i})\in\H$ for some $i\in\llbracket
  0,h-1\rrbracket$, and proposes $m_{i+1}\in\N$, 
so that $(\vc{m}{i+1},\vc{n}{i})$ becomes the current $\forall$-position.
\item $\forall$-move: \abe\ has to answer with some $n_{i+1}\in\N$ to
  complete the position. 
\end{itemize}
If $i+1=h$ and $f(\vc{m}{h},\vc{n}{h})=0$, then \elo\ wins and the game stops. 
Otherwise, we simply add the new $\exists$-position
$(\vc{m}{i+1},\vc{n}{i+1})$ to $\H$, 
and the game goes on. 
We say that \abe\ wins if the game goes on infinitely, that is if \elo\
never wins. 

Given a set $\H$ of former $\exists$-positions, we will say that \elo\
has a \emph{winning strategy} and write $\H\in\W^{0}_\Phi$ if 
she has a way of playing that ensures her a victory, independently of
future \abe\ moves.  

Formally, we define the set $\W^{0}_\Phi$ by induction with the
two following rules: 

\begin{enumerate}
 \item If there exists $(\vc{m}{h},\vc{n}{h})\in\H$ such that $\M
   \vDash f(\vc{m}{h},\vc{n}{h})=0$: 
    \begin{prooftree}
       \AxiomC{}
       \RightLabel{\scriptsize(Win)}
       \UnaryInfC{$\H\in\W^0_{\Phi}$}
     \end{prooftree}
   
 \item For all $i<h$, $(\vc{m}{i},\vc{n}{i})\in\H$ and
   $m\in\N$
      \begin{prooftree}
     \AxiomC{$\H\cup \{(\vc{m}{i}\cdot
       m,\vc{n}{i}\cdot n)\} \in \W^{0}_\Phi \quad \forall n\in\N$}
     \RightLabel{\scriptsize(Play)}
     \UnaryInfC{$\H \in \W^{0}_{\Phi}$}
   \end{prooftree}
   
\end{enumerate}

Given a formula $\Phi$, the only difference between this game and the
one we defined in Section \ref{ss:formulae} 
is that this one allows \elo\ to make some wrong tries before moving to
a final position.  
Clearly, there is a winning strategy for $\G^{0}_\Phi$ if and only if
there was one in the previous  
game\footnote{It suffices to remove the "bad tries" to keep only the
  winning move}. 
It is even easy to see that for any formula $\Phi$, we have 
\begin{prop}\label{prop:mdl}
 $\M\vDash\Phi$ iff $\{\emptyset\}\in\W^{0}_\Phi$
\end{prop}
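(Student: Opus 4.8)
The plan is to prove both directions by connecting $\W^0_\Phi$ to the simple Tarski game of Section~\ref{ss:formulae}, which we already know characterizes truth in $\M$. Since the formula $\Phi$ is $\Sigma^0_{2h}$, i.e. $\exr{x_1}\far{y_1}\ldots\exr{x_h}\far{y_h}\, f(\vc{x}{h},\vc{y}{h})=0$, I first observe that $\{\emptyset\}\in\W^0_\Phi$ is, by the definition of the inductive rules (Win) and (Play), exactly the statement that \elo{} has a winning strategy in the backtracking game $\G^0_\Phi$. The footnote already observes informally that a backtracking strategy can be pruned to a non-backtracking one; the heart of the proof is to make this precise and to combine it with the remark, stated just after Definition~\ref{def:formulae}, that $\M\vDash\Phi$ iff \elo{} has a winning strategy in the simple game. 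So the real content is the equivalence: \elo{} wins $\G^0_\Phi$ from the empty history $\iff$ \elo{} wins the simple game on $\Phi$.

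For the direction ``$\M\vDash\Phi \Rightarrow \{\emptyset\}\in\W^0_\Phi$'', I would take a winning strategy for \elo{} in the simple game and use it verbatim in $\G^0_\Phi$: at each $\exists$-move \elo{} extends the \emph{current} maximal position (never backtracking), playing the witness prescribed by the simple-game strategy against the sequence of \abe{}-moves played so far. Since that strategy guarantees reaching a true atomic instance after $h$ rounds, rule (Win) applies, and one packages this into a derivation in the inductive definition of $\W^0_\Phi$ by induction on $h$ (more precisely, on the structure of the strategy tree, which has depth $2h$). For the converse ``$\{\emptyset\}\in\W^0_\Phi \Rightarrow \M\vDash\Phi$'', I argue by induction on the derivation of $\{\emptyset\}\in\W^0_\Phi$ in the inductive definition of $\W^0_\Phi$, but generalized to an arbitrary history: I would prove that for any history $\H$, if $\H\in\W^0_\Phi$ then there exists some position $(\vc{m}{i},\vc{n}{i})\in\H$ from which \elo{} wins the simple game on the residual formula $\exr{x_{i+1}}\far{y_{i+1}}\ldots f(\vc{m}{i},x_{i+1},\vc{n}{i},y_{i+1},\ldots)=0$ (the case $i=h$ being a true closed atomic formula). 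The (Win) rule gives the base case directly; for the (Play) rule, \elo{} played $m$ extending some $(\vc{m}{i},\vc{n}{i})\in\H$ and $\H\cup\{(\vc{m}{i}\cdot m,\vc{n}{i}\cdot n)\}\in\W^0_\Phi$ for every $n$, so by the induction hypothesis for each $n$ there is a winning residual position in that enlarged history; one then checks by a secondary argument that the relevant winning position can be taken to be the freshly added one, $(\vc{m}{i}\cdot m, \vc{n}{i}\cdot n)$ — here one uses that the earlier positions of $\H$ do not depend on $n$, so if infinitely many of the witnessing positions were old ones they would already give a win independent of $n$. Applying this with $\H=\{\emptyset\}$ and $i=0$ yields a winning strategy for \elo{} in the simple game on $\Phi$, hence $\M\vDash\Phi$.

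I expect the main obstacle to be exactly this bookkeeping in the converse direction: the induction hypothesis has to be stated for arbitrary histories and has to extract, uniformly in \abe{}'s reply $n$, a winning residual position that is the \emph{new} one rather than an old one, because only the new position records the latest $\exists$-move $m$ that \elo{} actually committed to. The subtlety is that a priori the induction hypothesis only hands back \emph{some} position in the enlarged history, and one must argue — using the fact that histories only grow and that the old positions are fixed before \abe{} answers — that without loss of generality it is the just-added position; otherwise one could already conclude a win from $\H$ itself and close the branch. A clean way to sidestep part of this is to prove directly, again by induction on the $\W^0_\Phi$-derivation, the stronger and more symmetric statement: $\H\in\W^0_\Phi$ iff there is $(\vc m i,\vc n i)\in\H$ such that \elo{} wins the simple game on the residual formula below $(\vc m i,\vc n i)$. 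Everything else — the translation of a simple-game winning strategy into a (backtrack-free) $\G^0_\Phi$ strategy, the unfolding of (Win)/(Play), and the final appeal to Tarski's characterization of truth — is routine.
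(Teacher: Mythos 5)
Your proof is correct and is essentially the paper's own argument: the paper gives no formal proof of this proposition, only the remark (and footnote) that \elo's ``bad tries'' can be pruned so that $\G^0_\Phi$ and the simple Tarski game have the same winning player, and your induction on the $\W^0_\Phi$-derivation, generalized to arbitrary histories, is exactly the rigorous version of that pruning combined with the Tarski characterization of $\M\vDash\Phi$. One small polish for the (Play) case: the right dichotomy is ``\emph{some} witnessing position handed back by the induction hypothesis is an old one'' (then you are done at once, since the residual game at an old position does not depend on $n$) versus ``for \emph{every} $n$ the witness is the freshly added position $(\vc{m}{i}\cdot m,\vc{n}{i}\cdot n)$'' (then you assemble a simple-game strategy at $(\vc{m}{i},\vc{n}{i})$ by playing $m$); a single old witness suffices, so the appeal to ``infinitely many'' is unnecessary and, as literally stated, your two cases would not be exhaustive.
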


Given a formula $\Phi$, in both games the existence of a winning strategy is equivalent 
to the truth in the model, hence such a definition does not carry anything new from an outlook of model theory,
the interest of this definition is fundamentally computational. For instance, for the halting problem, this will now allow \elo\ to use the
strategy we described in the previous section. 

Besides, it is worth noting that in general, the match somehow grows
among a tree of height $h$, 
as we shall see in the following example.

\begin{example}\label{ex:thread}
\newcommand{\dotminus}{{\raisebox{0.8ex}{.}\hspace{-1ex}-}}
We define the following function 
$$g:\left\{\begin{array}{ccl}
           \N^{2}&\to&\N \\
           (x,y)&\mapsto & x+(1~ \dotminus x)y
\end{array}\right.$$
where $\dotminus$ refers to the truncated subtraction.
Notice that $g(x,\cdot)$ is clearly bounded if $x\neq 0$.
Then we consider $f$ a function such that 
$$f(x_1,y_1,x_2,y_2)=0  \text{~~if and only if~~} (x_1=y_1 \vee
g(x_1,x_2)> g(y_1,y_2))$$ 
Finally, we define the formula $\varphi\in\Sigma^{0}_{4}$ 
$$\varphi \equiv \exr{x_1}\far y_1 \exr x_2 \far y_2 (f(x_1,y_1,x_2,y_2)=0)$$
which expresses that there exists $x_1$ (in fact 0) such that
$g(y_1,\cdot): z\mapsto g(y_1,z) $ is bounded for every $y_1\neq
x_1$. 
The shortest strategy for \elo\ to win that game would be to give 0 for
$x_1$, wait for an answer $m$ for $y_1$,  
and give $m+1$ for $x_2$. But we can also imagine that \elo\ might try
0 first, receive \abe\ answer, 
and then change her mind, start from the beginning with $1$, try
several possibilities before going back to the winning position. 
If we observe the positions \elo\ will reach for such a match, we
remark it draws a tree (see Figure \ref{fig:tree}). 
\newcommand{\xvec}[1]{
#1}
% \overset{\longrightarrow}{#1}}
\begin{figure}[htp]\label{fig:tree}
\begin{tabular}{c|c}
\hline\hline\noalign{\medskip} 
\begin{tabular}{c|c|c|c}
Start & \elo\ move & \abe\ & new $\exists$-position \\ \hline
$\emptyset,\emptyset$ &  0 & 1 & $\xvec{0},\xvec{1}$ \\
$\emptyset,\emptyset$ &  1 & 0 & $\xvec{1},\xvec{0}$ \\
$\xvec{1},\xvec{0}$ &  1 & 1 &
$\xvec{1{\cdot}1},\xvec{0{\cdot}1}$ \\ 
$\xvec{1},\xvec{0}$ &  2 & 2 & $\xvec{1{\cdot}2},
\xvec{0{\cdot}2}$ \\ 
$\emptyset,\emptyset$ &  2 & 0 & $\xvec{2},\xvec{0}$ \\
$\xvec{0},\xvec{1}$ &  2 & 1 & $\xvec{0{\cdot}2},
\xvec{1{\cdot}1}$ \\ 
$\xvec{0{\cdot}2},\xvec{1{\cdot}1}$ &
\elo\ wins & / & / \\ 
\end{tabular}
&
\begin{minipage}{0.325\textwidth}
\scalebox{0.8}{\begin{tikzpicture}[level/.style={sibling
        distance=20mm/#1}]{0pt}{0pt}{2pt}{40pt} 
\vspace{2cm}
\node [rectangle] (r){$\emptyset,\emptyset$}
child {node [rectangle] (b) {$\xvec{0},\xvec{1}$} 
  child {node [rectangle,draw] (e)
    {$\xvec{0{\cdot}2},\xvec{1{\cdot}1}$}}}
child {node [rectangle] (a) {$\xvec{1},\xvec{0}$}
  child {node [rectangle] (b)
    {$\xvec{1{\cdot}1},\xvec{0{\cdot}1}\quad$}} 
  child {node [rectangle] (d)
    {$\quad\xvec{1{\cdot}2},\xvec{0{\cdot}2}$}}
}
child {node [rectangle] (a) {$\xvec{2},\xvec{0}$}
}
  ;
\end{tikzpicture}}
\end{minipage}\\
\noalign{\medbreak}\hline\hline
\end{tabular}
\caption{Example of a match for $\G^{0}_{\varphi}$}
\end{figure}
We shall formalize this remark later, but we strongly advise the reader to keep this representation in mind all along the next section.
 \end{example}

\section{Implementing the game}
\label{s:subst}
\subsection{Substitutive Game: $\G^{1}_\Phi$}
\label{ss:g1}
Now that we have at our disposal a notion of game that seems to be suitable to
capture computational content of classical theorems,  
we shall  adapt it to play with realizers. 
Considering a formula 
$\Phi\equiv \exr x_1\far y_1\dots \exr x_h\forall y_h (f(\vec{x}_h, \vec{y}_h)=0)$ 
 we will have to consider sub-formul\ae\ of $\Phi$
to write down proofs about $\Phi$.  
Therefore we give the following abbreviations
that we will use a lot in the following:
$$\begin{array}{r@{~\equiv~}l}
%  \mauricio{\st{$E_0$}} & \mauricio{\st{$\forall X_1 (A_1 \Rightarrow
%      X_1)$}} &\mauricio{\st{$(E_0\equiv\Phi)$}}\\ 
 E_i & \forall X_{i+1} (A_{i+1} \Rightarrow X_{i+1})\\ 
 A_i & \far{x_i} (\far{y_i}E_i \Rightarrow X_i)\\   
 E_{h} &\forall W
 (W(f(\vc{x}{h},\vc{y}{h})) 
 \Rightarrow W(0)) \\ 
\end{array}\eqno\begin{array}{r}
(\forall i\in\Int{0,h-1})	\\ 
(\forall i\in\Int{1, h})\\
\\
\end{array}$$
One can easily check that $E_0\equiv\Phi$ 
and that the other definitions correspond to the unfolding of the quantifiers.

  In order to play using realizers, we will slightly change the setting of
$\G^{0}_{\Phi}$, adding processes. 
One should notice that we only add more information, so that the game
$\G^{1}_\Phi$ is somehow a ``decorated" version of $\G^{0}_\Phi$. 

  To describe the match, we use $\exists$-positions --which
  are just processes-- and 
  $\forall$-positions --which are $4$-uples of the shape
  $(\vc{m}{i}, \vc{n}{i}, u, \pi)\in\N^{\leq h}\times\N^{\leq h}
  \times\Lambda_c\times\Pi$. If $i=h$, we say that the move is
  \emph{final} or \emph{complete}. In a given time $j$, the set of all
  $\forall$-positions reached before is called \emph{the history} and is
  denoted as $\H_j$. At each time $j$, the couple given by the current
  $\exists$-position $p_j$ and the history $\H_j$ is called the $j$-th
  state. The state evolves throughout the match according to the
  following rules:
%   }
  \begin{enumerate}
  \item \elo\ proposes a term
    $t_0\in \Pl$ supposed to defend $\Phi$ and \abe\ proposes a stack
    $u_0\cdot \pi_0$ supposed to attack the formula $\Phi$. We say
    that at time $0$, the
    process $p_0:=t_0\star u_0\cdot \pi_0$ is the current
    $\exists$-position and $\H_0:=\{(\emptyset,\emptyset,u_0,\pi_0)\}$ is the current history. This
        step defines the initial state $\langle p_0, \H_0\rangle$.
  \item Assume $\langle p_j, \H_j\rangle$ is the $j^{\text{th}}$
        state. Starting from $p_j$ \elo\ evaluates $p_j$ in order to reach
    one of the following situations:
    \begin{itemize}
      \item $p_j\succ u\star\pi$ for some (final) $\forall$-position
        $(\vc{m}{h}, \vc{n}{h}, u, \pi)\in \H_j$. In this case, \elo\
        wins if $\M\models f(\vc{m}{h}, \vc{n}{h})=0$. 
      \item $p_j\succ u\star \overline{m}\cdot t\cdot \pi$ for
        some (not final) $\forall$-position $(\vc{m}{i}, \vc{n}{i}, u,
        \pi)\in\H_j$ where $i<h$. If so, \elo\ \emph{can} decide to play by
        communicating her answer $(t,m)$ to \abe\ and standing for his
        answer, and \abe\ \emph{must} answer a new integer $n$ together
        with a new stack $u'\cdot\pi'$. The $\exists$-position 
        becomes $p_{j+1}:=t\star\overline{n}\cdot u'\cdot \pi'$ 
        and we add the $\forall$-position to the history:
        $\H_{j+1}:=\H_j\cup \{(\vc{m}{i}\cdot m,
        \vc{n}{i}\cdot n, u', \pi')\}$. This step defines the next
      state $\langle p_{j+1},\H_{j+1}\rangle$   
    \end{itemize}
    If none of the above moves is possible, then \abe\ wins. 
  \end{enumerate}
Intuitively, a state $\langle p, \H\rangle$ is \emph{winning} for
\elo\ if and only if she \emph{can play} in such a way that \abe\ \emph{will lose anyway}, 
independently of the way he might play.

Start with a term $t$ is a ``good move'' for \elo\ if and only if,
proposed as a defender of the formula, $t$ defines an initial winning state (for \elo), 
independently from the initial stack proposed by \abe. In this case, adopting the
point of view of \elo, we just say
that $t$ is a \emph{winning strategy} for the formula $\Phi$.

Since our characterization of realizers will be in terms of
  winning strategies, we might formalize this notion. We define
  inductively the set of \emph{winning states} --which is a syntactic
  object-- by means of a deductive system:

\begin{itemize}
 \item if $\exists (\vc{m}{h},\vc{n}{h},u,\pi)\in\H$ s.t.  $p\eval
   u\star\pi$ and $\M \vDash f(\vc{m}{h},\vc{n}{h})=0$ : 
 $$\infer[{\mbox{\scriptsize (Win)}}]{\pair{p,\H} \in \W^{1}_\Phi}{}$$
 \vspace{-0.3cm}
 \item for every $(\vc{m}{i},\vc{n}{i},u,\pi)\in\H$, $m\in\N$ s.t.
   $p\eval u\star \overline{m} \cdot t \cdot \pi$ : 
 $$\infer[{\mbox{\scriptsize (Play)}}]{\pair{p,\H} \in \W^{1}_{\Phi}}
 {\pair{t\star \overline{n} \cdot u' \cdot \pi',
     \H\cup\{(\vc{m}{i}\cdot m,\vc{n}{i}\cdot n,u',\pi')\}} \in
   \W^{1}_\Phi  
& \forall (n',u',\pi')\in\N\times\Lambda\times\Pi}$$
\end{itemize}

A term $t$ is said to be a winning strategy for $\Phi$ if for any
handle $(u,\pi)\in\Lambda\times\Pi$,  
we have $\pair{t\star u\cdot\pi, \{(\emptyset,\emptyset,u,\pi)\}}\in\W^{1}_\Phi$.

\begin{prop}[Adequacy]\label{prop:adequacy1}
 If $t$ is a winning strategy for $\mathds{G}^{1}_\Phi$, then $t\ureal \Phi$
\end{prop}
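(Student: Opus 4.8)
The plan is to prove adequacy by induction on the derivation witnessing that $\langle p,\H\rangle\in\W^{1}_\Phi$, strengthening the statement so that the induction goes through. The key realization is that a state $\langle p,\H\rangle$ should be declared "sound" (w.r.t.\ a fixed pole $\pole$) when $p\in\pole$ provided every $\forall$-position in $\H$ is "compatible with $\pole$" in the appropriate sense. Concretely, I would fix an arbitrary pole $\pole$, fix the initial data $u\cdot\pi$, and for each $\forall$-position $(\vc{m}{i},\vc{n}{i},u',\pi')\in\H$ assume the hypothesis that $u'\cdot\pi'\in\fv{\far{y_{i+1}}E_{i+1}\Rightarrow X_{i+1}}$ for a suitable choice of the predicate parameters $X_1,\ldots$ instantiated along the branch of positions — equivalently, that $u'$ realizes $\far{y_{i+1}}E_{i+1}$ relative to a falsity value containing $\pi'$, where the $E_j,A_j$ are the formulas abbreviated just before the statement. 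The generalized claim is then: if $\langle p,\H\rangle\in\W^{1}_\Phi$ and every $\forall$-position of $\H$ satisfies its compatibility hypothesis, then $p\in\pole$. The proposition follows by taking $\H=\{(\emptyset,\emptyset,u,\pi)\}$ and $p=t\star u\cdot\pi$, since the single initial position's compatibility hypothesis is exactly that $u\cdot\pi\in\fv{\Phi}=\fv{E_0}$, which is what we get from an arbitrary stack in $\fv\Phi$; and concluding $p\in\pole$ for all such stacks and all poles is exactly $t\ureal\Phi$.

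For the induction itself there are two cases, matching the two rules. In the \textbf{(Win)} case we have $p\eval u'\star\pi'$ for some final $(\vc{m}{h},\vc{n}{h},u',\pi')\in\H$ with $\M\vDash f(\vc{m}{h},\vc{n}{h})=0$. The compatibility hypothesis on this position, unfolded down to level $h$, says $u'\cdot\pi'\in\fv{E_h}=\fv{W(f(\vc{m}{h},\vc{n}{h}))\Rightarrow W(0)}$ for the relevant parameter, i.e.\ $u'\cdot\pi'\in\fv{f(\vc{m}{h},\vc{n}{h})=0}$ (using the Leibniz-equality unfolding and that $E_h$ is a $\forall W$ of exactly that implication). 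Since $\M\vDash f(\vc{m}{h},\vc{n}{h})=0$, Corollary~\ref{cor:equalitystack} gives $u'\star\pi'\in\pole$, and then closure of $\pole$ under anti-evaluation yields $p\in\pole$. In the \textbf{(Play)} case there is $(\vc{m}{i},\vc{n}{i},u',\pi')\in\H$ with $p\eval u'\star\overline m\cdot t'\cdot\pi'$, and for every $(n,u'',\pi'')$ the state $\langle t'\star\overline n\cdot u''\cdot\pi'',\ \H\cup\{(\vc{m}{i}{\cdot}m,\vc{n}{i}{\cdot}n,u'',\pi'')\}\rangle$ is winning. Here I would use the compatibility hypothesis on $(\vc{m}{i},\vc{n}{i},u',\pi')$: it says $u'$ realizes $\far{y_{i+1}}E_{i+1}$ against a falsity value containing $\pi'$, hence feeding it the argument $\overline m$ (a realizer of $\Nat(m)$ via the relativization, using $\lambda x.x\ureal\far y\Nat(y)$-style facts and the $\{e\}\Rightarrow$ connective) and a term $t'$ reduces $u'\star\overline m\cdot t'\cdot\pi'$ into $\pole$ \emph{as soon as} $t'$ itself realizes $\far{y_{i+1}}E_{i+1}\{x_{i+1}:=m\}$... but that is precisely what we must establish. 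So instead I invert the reasoning: to show $p\in\pole$ it suffices (by anti-evaluation) to show $u'\star\overline m\cdot t'\cdot\pi'\in\pole$, and by the compatibility hypothesis this holds provided $t'\real\far{y_{i+1}}E_{i+1}\{x_{i+1}:=m\}$ for the parameters carried so far. To prove that, take any $n\in\N$ and any $u''\cdot\pi''\in\fv{E_{i+1}\{x_{i+1}:=m,y_{i+1}:=n\}}$; this stack, of the form $\overline n\cdot u''\cdot\pi''$ after relativization (a stack in $\fv{\far{y_{i+1}}(\ldots)}$ is headed by a numeral because of the $\{y_{i+1}\}\Rightarrow$ relativization, exactly as noted in Section~\ref{ss:formulae}), is handled by the induction hypothesis applied to the premise state with the enlarged history: the new $\forall$-position $(\vc{m}{i}{\cdot}m,\vc{n}{i}{\cdot}n,u'',\pi'')$ gets precisely the compatibility hypothesis $u''\cdot\pi''\in\fv{\far{y_{i+2}}E_{i+2}\Rightarrow X_{i+2}}$ that we just assumed for it, while the old positions keep theirs unchanged. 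So the induction hypothesis gives $t'\star\overline n\cdot u''\cdot\pi''\in\pole$, i.e.\ $t'$ meets this test, and quantifying over $n$ and the test stack gives $t'\real\far{y_{i+1}}E_{i+1}\{x_{i+1}:=m\}$, closing the case.

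The main obstacle I anticipate is getting the \emph{bookkeeping of the parameters} exactly right: each $\forall$-position in the history was generated at a specific depth $i$ and along a specific branch of earlier positions, and its compatibility hypothesis must refer to the correct instance $E_{i+1}\{\vc{x}{i}:=\vc{m}{i},\vc{y}{i}:=\vc{n}{i}\}$ with the right chain of second-order parameters $X_1,\ldots,X_i$ fixed by those earlier positions (the $X_j$ being instantiated to the falsity-value functions that make the earlier $A_j$'s true). Making the generalized induction hypothesis precise — essentially: "for some choice of parameters consistent with the tree structure of $\H$, every position is compatible" — and checking that the \textbf{(Play)} step extends this choice coherently, is the delicate part; the rest is routine unfolding of falsity values together with Corollary~\ref{cor:equalitystack}, closure of $\pole$ under anti-evaluation, and the relativization facts about $\Nat$ and $\{e\}\Rightarrow A$. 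A secondary, minor point is to justify that stacks in the falsity value of a relativized universal $\far{y}(\cdots)$ are indeed of the form $\overline n\cdot\pi''$, which is immediate from the definition $\|\{e\}\Rightarrow A\|=\{\overline n\cdot\pi:\Int e=n,\pi\in\|A\|\}$.
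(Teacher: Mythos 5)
Your proof is correct and is, in substance, the argument the paper itself relies on. The paper's stated proof of this proposition is by forward reference: it shows that every winning strategy for $\G^{1}_\Phi$ is also one for the more liberal game $\G^{2}_\Phi$ of Section~\ref{s:nsubst} (Proposition~\ref{prop:inclusion}) and then proves adequacy there (Proposition~\ref{prop:adequacy}); but it explicitly notes that the proof of Proposition~\ref{prop:adequacy} applies verbatim to $\G^{1}_\Phi$, and that proof is exactly your strengthened induction — the invariant $u_i\cdot\pi_i\in\fv{E_i\{x_j:=m_j,y_j:=n_j\}_{j=1}^{i}}$ imposed on every history entry, the (Win) case discharged by Corollary~\ref{cor:equalitystack} plus closure of the pole under anti-evaluation, and the (Play) case discharged by proving $t'\real\far{y_{i+1}}E_{i+1}\{x_{i+1}:=m\}$ from the induction hypothesis on the premise states (the contradiction detour of the $\G^{2}$ version is only needed there because $P$ is a set; for the singleton $P$ of $\G^{1}$ your direct form is cleaner). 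One simplification: the ``delicate bookkeeping of second-order parameters'' you anticipate as the main obstacle is not actually needed, because each $E_i\equiv\forall X_{i+1}(A_{i+1}\Rightarrow X_{i+1})$ binds its own predicate variable and the variables $X_1,\ldots,X_i$ of earlier levels do not occur in $E_i$; hence the compatibility invariant for a depth-$i$ position involves only the first-order substitution, and the choice of falsity value interpreting $X_{i+1}$ is made locally, inside the union defining $\fv{\forall X_{i+1}(\cdots)}$, rather than threaded coherently through the tree of positions.
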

\begin{proof}
 We will see a more general game in the following section for which we will prove the adequacy property (Proposition \ref{prop:adequacy}) 
 and which admits any winning strategy of this game as a winning strategy (Proposition \ref{prop:inclusion}), 
 thus proving the adequacy in the current case. Furthermore, the proof we give for Proposition \ref{prop:adequacy} is suitable for this game too.
\end{proof}

\subsection{Completeness of $\G^{1}_\Phi$ in presence of interaction constants}

In this section we will show the completeness of $\G^{1}_\Phi$ by substitution over the thread of execution of a universal realizer of $\Phi$.
As observed in section \ref{ss:game}, the successive $\exists$-positions form a tree. 
We give thereafter a formal statement for this observation, which will allow us to prove the completeness of this game.
We shall now give a formal definition of a tree.

\begin{definition}
 A (finite) \emph{tree} $\T$ is a (finite) subset\footnote{Observe that $|\T|$ (the cardinality of $\T$) 
 coincides with the usual definition of the size of $\T$.}
 of $\N^{<\omega}$ such that if $\tau\cdot c\in\T$ and $c\in\N$, then
 $\tau\in\T$ and $\forall c'<c, \tau\cdot c'\in\T$, where the $\cdot$ operator denotes the concatenation.
 If $\tau=c_0\cdots c_k$, we use the notation $\tau_{|i}=c_0\cdots c_i$, 
 and we note $\tau\sqsubset\sigma $ ( $\sigma$ \emph{extends} $\tau$) when :
 $$\tau\sqsubset \sigma ~\equiv~ \sigma_{|k}=c_0\cdots c_k=\tau$$

 We call \emph{characteristic function} of a tree $\T$ any partial function $\varphi:\N\to\P(\N^{<\omega})$ such that:
 \begin{enumerate}
 \item $\forall n\in\dom(\varphi), \{\varphi(m) : m\le n\}$ is a tree
 \item $\varphi(|\T|)=\T$
 \end{enumerate}
\end{definition}

\begin{lem}\label{lm:scheme}
 Assume the calculus of realizers is deterministic, and let $t_0$ be a universal realizer of $\Phi\in\Sigma^{0}_{2h}$. 
 \mauricio{Consider} $(n_j)_{j\in\N}$ an infinite sequence of integers, 
 $(\kappa_j )_{j\in\N}$ an infinite
sequence of (pairwise distinct) \mauricio{interaction constants} that do not occur in $t_0$ and if
$(\alpha_j )_{j\in\N}$ is an infinite sequence of \mauricio{substitutive and non-generative} stack constants. 
Then there exists two integers $f,s\in\N$,
two finite sequences $t_0,\ldots,t_f\in\Lambda$ and $m_1,\ldots,m_f\in\N$
as well as a tree characteristic function $\varphi:\Int{0,f}\to\N^{<\omega}$
such that:

\begin{center}
$\begin{array}{lr @{~\eval~}l @{\hspace{0.6cm}} r}

  &t_{0} \star \kappa_{0}\cdot\alpha_{0} & \kappa_{0}\star\overline{m}_1\cdot t_{1}\cdot\alpha_{0} & \\
\forall i\in\Int{1,f-1} & t_i\star \overline{n_i}\cdot\kappa_i\cdot\alpha_i & \kappa_{j}\star \overline{m_{i+1}}\cdot t_{i+1}\cdot \alpha_{j} 
& \left(\begin{array}{l}\textrm{with } j\le i\\ 
% c\in\N \\ 
\varphi(j)\sqsubset \varphi(i+1)\end{array}\right) \\ 
  &t_f\star \overline{n_f}\cdot\kappa_f\cdot\alpha_f & \kappa_s \star \alpha_s & \\
\end{array}$
\end{center}
where $ |\varphi(s)|=h$ and  $\M \vDash f(\vec{m}_{\varphi(s)},\vec{n}_{\varphi(s)})=0$
\end{lem}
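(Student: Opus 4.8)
## Proof plan for Lemma \ref{lm:scheme}

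The plan is to run the Krivine machine on the process $t_0 \star \kappa_0 \cdot \alpha_0$ and to track, by induction on the number of evaluation steps, the shape of every process that can be reached, extracting from this analysis the tree structure and the winning final configuration. The key point is that $t_0 \ureal \Phi$, so in particular $t_0$ realizes $\Phi$ with respect to a thread-oriented pole built from the process under examination; combining this with the substitutivity hypotheses on the $\kappa_j$ and $\alpha_j$ will force the machine to pass through configurations of exactly the announced form.

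First I would set up the pole. Fix the sequences $(n_j)$, $(\kappa_j)$, $(\alpha_j)$ as in the statement and define, along the lines of Section~\ref{sss:Poles} and the Identity-type example, a thread-oriented pole $\pole$ whose complement is the union of the threads of the relevant processes — concretely, a pole chosen so that each process $t_i \star \overline{n_i}\cdot\kappa_i\cdot\alpha_i$ we meet lies outside $\pole$, i.e. so that ``not yet stopped'' is equivalent to ``in the complement of $\pole$''. Then I would argue by induction on $i$, maintaining the invariant that the current process has the form $t_i \star \overline{n_i}\cdot \kappa_i\cdot\alpha_i$ for some $t_i\in\Lambda$, together with a partial tree characteristic function $\varphi$ defined on $\Int{0,i}$ recording which earlier $\exists$-position each step branches from. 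At the inductive step, one uses $t_0\ureal E_0\equiv\Phi$ and the decomposition of $\|E_i\|$, $\|A_i\|$ given by the abbreviations $E_i, A_i, E_h$: realizing $A_{i}\equiv\far{x_i}(\far{y_i}E_i\Rightarrow X_i)$ with the parameters instantiated by a suitable falsity-value function forces the realizer, when fed $\overline{n_i}\cdot\kappa_i\cdot\alpha_i$, either to reach a configuration $u\star\pi$ witnessing victory (the (Win) case, giving $\kappa_s\star\alpha_s$ with $\M\vDash f(\vec m_{\varphi(s)},\vec n_{\varphi(s)})=0$), or to reach a configuration $\kappa_j\star\overline{m_{i+1}}\cdot t_{i+1}\cdot\alpha_j$ for some $j\le i$ with $\varphi(j)\sqsubset\varphi(i+1)$ (the (Play) case). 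Here the inertness of $\kappa_j$ is what guarantees the machine is genuinely blocked and must have reached such a ``callback'' shape; the non-generativity guarantees that only the constants we introduced can appear as heads; and substitutivity of $\kappa_j$ and $\alpha_j$ is what lets us read off a uniform statement that will later give the winning strategy against arbitrary stacks.

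The termination argument — that the process $f, s$ exist, i.e. that this analysis halts after finitely many branchings — comes from the well-foundedness of $\W^0_\Phi$: since $\M\vDash\Phi$ (by the corollary to Proposition~\ref{p:Degenerated}, as $t_0$ is a universal realizer) we have $\{\emptyset\}\in\W^0_\Phi$ by Proposition~\ref{prop:mdl}, and the height of the game tree is bounded by $h$; each (Play) step strictly increases the length $|\varphi(i)|$ along the current branch, which is capped at $h$, and the branching at each node is handled by instantiating \abe's move with the fixed integers $n_j$ so that only finitely many branches are explored before a (Win) leaf is hit. I would make this precise by an induction on the ordinal rank of $\{\emptyset\}$ in $\W^0_\Phi$, reading each rule of that inductive definition off against the corresponding machine behaviour forced by the realizer. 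The tree-axioms for $\varphi$ (that $\{\varphi(m):m\le n\}$ is a tree and $\varphi(|\T|)=\T$) are then bookkeeping: each new position $\varphi(i+1)$ extends some already-present $\varphi(j)$ by one child, added in increasing order, exactly as in Example~\ref{ex:thread}.

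The main obstacle I expect is the precise management of parameters in the realizability interpretation: one must choose, for each node of the tree, a falsity-value function $F$ (to instantiate the second-order variables $X_i$, $W$ hidden in $E_i$, $A_i$, $E_h$) so that $\overline{n_i}\cdot\kappa_i\cdot\alpha_i$ lands in the right falsity value and so that membership of the continuation $\kappa_i$ (or a later term) in the corresponding truth value is exactly the statement ``the machine, continued from here, stays outside $\pole$''. Getting these choices to cohere across all branches of the tree simultaneously — rather than one branch at a time — while keeping the interaction constants pairwise distinct and disjoint from $t_0$, is the delicate part; once it is in place, each of the two deduction rules of $\W^1_\Phi$ (namely (Win) and (Play)) matches one of the two exit configurations of the machine, and the statement follows by assembling the finitely many steps produced by the induction.
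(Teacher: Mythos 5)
Your setup---a thread-oriented pole whose complement is the union of the threads of the processes $t_i\star\overline{n_i}\cdot\kappa_i\cdot\alpha_i$, built simultaneously with the tree, together with the decomposition of $\fv{E_i}$ and $\fv{A_i}$ to force the machine into the two exit shapes---matches the paper's proof. But your termination argument, which is the part that actually produces $f$ and $s$, does not work. You propose to derive it from $\M\vDash\Phi$, i.e.\ from $\{\emptyset\}\in\W^{0}_\Phi$, by induction on the ordinal rank of $\{\emptyset\}$, supported by the claim that each (Play) step strictly increases $|\varphi(i)|$. That claim is false as stated: the realizer may backtrack to a node $j$ with $|\varphi(j)|$ smaller than the current depth (in the thread scheme of Figure~\ref{fig:scheme} the successive depths are $1,1,2,2,1,2$), so the depths are not monotone along the execution, and the bound $h$ on the depth of any single branch does not bound the number of branches explored before a win. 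More fundamentally, the well-foundedness of $\W^{0}_\Phi$ only witnesses that \emph{some abstract strategy} of \elo\ wins; it gives no control over the moves that the particular realizer $t_0$ makes, so there is no quantity that descends along the machine's execution and the proposed rank induction has nothing to recurse on.

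The paper obtains the winning configuration in the opposite direction: a finite induction on the quantifier depth $i\in\Int{0,h}$ shows that for every $i$ there is a node $j_i$ of the tree with $|\varphi(j_i)|=i$ and $\kappa_{j_i}\cdot\alpha_{j_i}\notin\fv{E_i[\sigma(\varphi(j_i))]}$, the step from $i$ to $i+1$ using that the term $t$ played at that node realizes $\far{y_{i+1}}E_{i+1}[\ldots]\{x_{i+1}:=m\}$. At $i=h$, Corollary~\ref{cor:equality} yields \emph{both} $\M\vDash f(\vec{m}_{\varphi(s)},\vec{n}_{\varphi(s)})=0$ \emph{and} $\kappa_s\star\alpha_s\notin\pole$; since $\pole$ is the complement of a countable union of threads, the latter means $\kappa_s\star\alpha_s\in Q_f$ for some finite $f$, which is all the finiteness the lemma asserts. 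The truth of the atomic formula at the winning node is thus a \emph{conclusion} of the realizability descent, not an input, and no appeal to $\M\vDash\Phi$ or to the game $\W^{0}_\Phi$ is needed. You should replace your termination argument by this descent through the falsity values $\fv{E_0},\ldots,\fv{E_h}$.
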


\begin{example}
Before doing the proof, let us have a look at an example of such a thread scheme for a formula $\Phi\in\Sigma^{0}_{4}$ 
(as we considered in Example \ref{ex:thread}) and to the corresponding tree and characteristic function.
\begin{figure}[htp]
\begin{tabular}{c|c|c}
\hline\hline\noalign{\medskip}
$\begin{array}{r @{~\eval~}l}
t_{0} \star \kappa_{0}\cdot\alpha_{0} & \kappa_{0}\star\overline{m}_1\cdot t_{1}\cdot\alpha_{0}  \\               
t_1\star \overline{n_1}\cdot\kappa_1\cdot\alpha_1 & \kappa_{0}\star \overline{m_{2}}\cdot t_{2}\cdot \alpha_{0} \\
t_2\star \overline{n_2}\cdot\kappa_2\cdot\alpha_2 & \kappa_{2}\star \overline{m_{3}}\cdot t_{3}\cdot \alpha_{2} \\
t_3\star \overline{n_3}\cdot\kappa_3\cdot\alpha_3 & \kappa_{2}\star \overline{m_{4}}\cdot t_{4}\cdot \alpha_{2} \\
t_4\star \overline{n_4}\cdot\kappa_4\cdot\alpha_4 & \kappa_{0}\star \overline{m_{5}}\cdot t_{5}\cdot \alpha_{0} \\
t_5\star \overline{n_5}\cdot\kappa_5\cdot\alpha_5 & \kappa_{1}\star \overline{m_{6}}\cdot t_{6}\cdot \alpha_{1} \\
t_6\star \overline{n_6}\cdot\kappa_6\cdot\alpha_6 & \kappa_4 \star \alpha_4  %(\textrm{with } \M \vDash f(\vec{m}_{\varphi(s)},\vec{n}_{\varphi(s)})=0)
\end{array}$
&\begin{minipage}{0.3\textwidth}
\scalebox{0.75}{
\begin{tikzpicture}[level/.style={sibling distance=20mm/#1}]{0pt}{0pt}{2pt}{40pt}
\vspace{2cm}
\node [circle,draw] (r){$0$}
  child {node [circle,draw] (a) {$1$}
    child {node [circle,draw] (b) {$6$}
        } 
    }
  child {node [circle,draw] (b) {$2$} 
            child {node [circle,draw] (d) {$3$}}
            child {node [circle,draw] (e) {$4$}}
      } 
  child {node [circle,draw] (a) {$5$}
      };
\end{tikzpicture}}
\end{minipage}
&
$\begin{array}{c@{~\mapsto~}l}
  \varphi : 1 & 0\\
  \varphi : 2 & 1 \\
  \varphi : 3 & 1\cdot0 \\
  \varphi : 4 & 1\cdot1 \\
  \varphi : 5 & 2 \\
  \varphi : 6 & 0\cdot0 \\
 \end{array}$\\
\noalign{\medbreak}\hline\hline
\end{tabular}
\caption{A thread scheme for $\Phi\in\Sigma^{0}_{4}$}\label{fig:scheme}
\end{figure}
% \end{center}
\newcommand{\var}{\mathop{\mathrm{var}}}
\newcommand{\vvar}{\overset{\longrightarrow}{\mathop{\mathrm{var}}}}
\end{example}

We observe that we could actually labeled any node of the tree using its order 
of apparition in the enumeration of $\T$ with $\varphi$.
\begin{definition}
Given such a thread scheme and a path $\tau\in\T$, we define
$m_{\tau}=m_{\varphi^{-1}(\tau)}$ (integer $m$ at the node $\tau$),
$\vec{m}_\tau=(m_{\tau_{|1}},m_{\tau_{|2}},\ldots,m_\tau)$ (integers
$m$ along the path) and the substitution along $\tau$ is : 
$$\sigma(\tau)=\{x_i:=m_{\tau_{|i}}\}^{|\tau|}_{i=1}\{y_i:=n_{\tau_{|i}}\}^{|\tau|}_{i=1}$$
\end{definition}
For instance, in Figure \ref{fig:scheme}, for $\tau=1\cdot1$ 
(wich corresponds to the choosen final position
$\kappa_4\star\alpha_4$), we have : 
$$\sigma(\tau)\equiv\{x_1:=m_2,x_2:=m_4,y_1:=n_2,y_2:=n_4\}$$

\begin{proof}[Proof of Lemma \ref{lm:scheme}]
We build a sequence  $(Q_i)_{i\in\N}$ of sets of processes and a sequence 
of characteristic functions $(\varphi_i)_{i\in\N}$ for some trees $(\T_i)_{i\in\N}$,
such that at each step $i\in\N$, $Q_i$ is either empty either of the
form $\Th(p)$  
for some $p\in\Lambda\times\Pi$ :
\begin{itemize}
 \item $i=0$ : we set $Q_0=\Th(t_0\star\kappa_0\cdot\alpha_0)$ and
   $\varphi_0:0\mapsto \emptyset$ 
 \item $i\in\N$ : given $Q_i$ and $\varphi_i$, 
 if there exist\footnote{Note that as the calculus is deterministic and the constants $\kappa_j$ inert, if such $j, m_{i+1},t_{i+1}$ exist, they are unique}
 $j\in\N, m_{i+1}\in\N$ and $t_{i+1}\in\Lambda$ such that 
 $\kappa_j\star\overline{m_{i+1}}\cdot t_{i+1} \cdot \alpha_j \in Q_i$ 
  we set:
 \begin{center}
 $Q_{i+1}:=\Th(t_i\star
   \overline{n_{i+1}}\cdot\kappa_{i+1}\cdot\alpha_{i+1})$ \hspace{1cm} 
 $\varphi_{i+1}:=\left\{\begin{array}{ccl} k\le i & \mapsto & \varphi_i(k) \\
 i+1 &\mapsto& \varphi_i(j)\cdot c \end{array}\right.$
 \end{center}
 where $c:=\min\{n\in\N\mid \varphi_i(j)\cdot n \notin \T_i\}$. It is
 easy to check  
 that if $\varphi_i$ is a characteristic function for $\T_i$, then
 so is $\varphi_{i+1}$ for $\T_i\cup\{\varphi_i(j)\cdot c\}$;
\end{itemize}
otherwise $Q_{i+1}:=\emptyset$ and $\varphi_{i+1}:=\varphi_i$. 
% \begin{center}
% \begin{tabular}{lcl}
%  
%  $Q_{i+1}=\bigcup_{(t,\tau)\in\varepsilon(Q_i,T_i)}\Th(t\star
%  \overline{n_{\tau_+}}\cdot\kappa_{\tau_+}\cdot\alpha_{\tau_+})$\\ 
%  $\T_0=\emptystr$ & and &
%  $\T_{i+1}=\T_i\bigcup_{(t,\tau)\in\varepsilon(Q_i,T_i)}\{\tau_+\}$ 
% \end{tabular}
% \end{center}
%$\kappa$ non-generative implies that if $\kappa_\tau\star m \cdot t
%\cdot \alpha_\tau \in Q_i$ then $\tau\in\T_i$ 
We define $Q_\infty:=\bigcup_{i\in\N}Q_{i}$, $\pole:=Q_\infty^{c}$ and
$\varphi:=\lim_{i\in\omega}\varphi_i$. 
We prove by induction that for any $0\le i \le h$, the following
statement holds: 
$$\exists j\in\N, |\varphi(j)|=i \textrm{ such that }
\kappa_j\cdot\alpha_j\notin\fv{E_i[\sigma(\varphi(j))]} 
\eqno{(\text{IH}_i)}$$

% where for any $j\in\dom(\varphi_i)$, $\sigma(i,j)$ is the
% substitution $[x_k:= m_{\tau_k},n_k:= n_{\tau_k}]_{k=1}^{|\tau|}$. 
% $\tau\in\Im(\varphi_i)$, $\sigma_(\tau)$ is the substitution $[x_k:=
% m_{\tau_k},n_k:= n_{\tau_k}]_{k=1}^{|\tau|}$. 

\begin{paragraph}{\bf IH$_0$:}
From the definition of $\pole$, we have
$t_0\star\kappa_0\cdot\alpha_0\notin\pole$.  
Besides, we know that $t_0\real E_0$,
so that $\kappa_0\cdot\alpha_0\notin\fv{E_0}$.
\end{paragraph}
\begin{paragraph}{\bf IH$_{i+1}$:}
Assume we have IH$_i$, for $0\le i<h$, that is 
$\exists {j_i}\in\N, |\varphi({j_i})|=i$ such that 
$$\kappa_{j_i}\cdot\alpha_{j_i}\notin\fv{E_i[\sigma(\varphi(j_i))]}$$	
Recall that $E_i=\forall X_{i+1} (A_{i+1} \Rightarrow X_{i+1})$,
hence $\kappa_{j_i}\nVdash
A_{i+1}[X_{i+1}:=\dot\alpha_\tau][\sigma(\varphi_i({j_i})]$. 
Therefore there exists $m\in\N$ and
$t\real\far{y_{i+1}}E_{i+1}[\sigma(\varphi({j_i}))]\{x_{i+1}:=m\}$
such that 
$\kappa_{j_i}\star \overline{m}\cdot t \cdot \alpha_{j_i}\notin\pole$. 
By definition of $\pole$, it means that there is some $j\in\N$ such 
that this process belong to $Q_{j}$, so that by definition of $Q_{j+1}$
we have $t_{j+1}=t,m_{j+1}=m$, $\varphi(j+1)_{|i}=\varphi(j_i)$,
$$t_{j+1}\star \overline{n_{j+1}} \cdot \kappa_{j+1} \cdot \alpha_{j+1}\notin\pole$$
% We note $(m_{\tau_+}=m_{i+1}, t_{\tau_+}=t_{i+1})$.
Using the fact that $t_{j+1}\real\far{y_{i+1}}E_{i+1}[\sigma(\varphi(j_i))]\{x_{i+1}:=m\}$, we finally get that 
$$\kappa_{i+1} \cdot \alpha_{i+1}\notin \fv{E_{i+1}[\sigma(\varphi(j+1))]}$$
since $\sigma(\varphi(j+1))=\sigma(\varphi(j_i))\{x_{i+1}:=m_{j+1};y_{i+1}:=n_{j+1}\}$.
\end{paragraph}

We obtain then for IH$_h$ the following statement :
$$\exists s\in\N, |\varphi(s)|=h 
\textrm{ such that } \kappa_s\cdot\alpha_s\notin\fv{\forall W(f(\vc{m}{\varphi(s)},\vec{n}_{\varphi(s)}))\Rightarrow W(0)}$$
Applying the lemma \ref{cor:equality}, we get that $\M\vDash f(\vec{m_\sigma},\vec{n_\sigma})=0$ and $\kappa_s\star\alpha_s\notin\pole$.
Hence there exists $f\in\N$ such that $\kappa_s\star\alpha_s\in Q_f$, thus 
$$t_f\star \overline{n_f}\cdot\kappa_f\cdot\alpha_f \eval \kappa_s \star \alpha_s, 
\textrm{ with }\M \vDash f(\vec{m}_\sigma,\vec{n}_\sigma)=0$$
that is the last line of the expected thread scheme. 

Besides, by definition of $Q_f$ and $\varphi_f$, we clearly have that for any $i\in\Int{0,f-1}$,
there exists $j\in\N$ such that $j\le i$ and 
$$t_i\star \overline{n_i}\cdot\kappa_i\cdot\alpha_i \eval \kappa_{j}\star \overline{m_{i+1}}\cdot t_{i+1}\cdot \alpha_{j}\eqno \qedhere$$
\end{proof}

Note that, as the constants $\kappa_i$ and $\alpha_i$ are substitutive, the function $\varphi$ and the integers $f$ and $s$ only 
depend on the sequence $(n_i)_{i\in\N}$. In other words, the threads scheme is entirely defined by this sequence.

\begin{prop}[Completeness of $\mathds{G}^{1}_\Phi$ in presence of interaction constants]
If the calculus of realizers is deterministic and contains infinitely many interaction constants as well as infinitely
many substitutive and non generative stack constants, then every universal realizer of
an arithmetical formula $\Phi\in\Sigma^{0}_{h}$ is a winning strategy for the game $\mathds{G}^{1}_\Phi$
\end{prop}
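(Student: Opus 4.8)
The plan is to turn the "thread scheme" of Lemma~\ref{lm:scheme} into a derivation in the deductive system defining $\W^1_\Phi$. The starting point is that, given a universal realizer $t_0$ of $\Phi\in\Sigma^0_{2h}$ and any handle $(u,\pi)\in\Lambda\times\Pi$, I must show $\langle t_0\star u\cdot\pi,\{(\emptyset,\emptyset,u,\pi)\}\rangle\in\W^1_\Phi$. The idea is that the winning state to be proved corresponds to the node $\emptyset$ of the tree $\T$ produced by Lemma~\ref{lm:scheme}, and that the (Win) and (Play) rules let us build the derivation by following the branches of $\T$ from the root down to the distinguished winning leaf $\varphi(s)$. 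Since the substitutive constants $\kappa_j,\alpha_j$ behave like free variables (the remark following the definition of interaction constants), a single evaluation established on the constants transfers to \emph{all} instantiations the opponent may choose, which is exactly what the universal quantification over $(n',u',\pi')$ in the (Play) rule demands.

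First I would set up an induction that actually proves a stronger, more uniform statement: for every node $\tau\in\T$, writing $\vec m_\tau,\vec n_\tau$ for the integers collected along $\tau$ as in the definition just before the proof of Lemma~\ref{lm:scheme}, and for every $u\in\Lambda$, $\pi\in\Pi$, the state $\langle t_{\varphi^{-1}(\tau)}\star \overline{n_\tau}\cdot u\cdot\pi, \H\rangle$ is in $\W^1_\Phi$ whenever $\H$ contains the $\forall$-position $(\vec m_\tau,\vec n_\tau,u,\pi)$ together with all the $\forall$-positions attached to the ancestors of $\tau$ (suitably instantiated). I would do this induction on the height $h-|\tau|$ of $\tau$ inside the tree, i.e.\ from the leaves up. Because the leaves of $\T$ other than $\varphi(s)$ are irrelevant to winning, the key is to run the induction \emph{down the branch leading to $\varphi(s)$}, using at each internal node the fact that Lemma~\ref{lm:scheme} guarantees an evaluation $t_i\star\overline{n_i}\cdot\kappa_i\cdot\alpha_i\eval\kappa_j\star\overline{m_{i+1}}\cdot t_{i+1}\cdot\alpha_j$ with $\varphi(j)\sqsubset\varphi(i+1)$, which is precisely the premise-pattern of the (Play) rule once the constants are substituted by the concrete $\overline{n},u',\pi'$ supplied by the opponent.

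The substitution step is where the hypotheses on the calculus (infinitely many interaction constants and infinitely many substitutive, non-generative stack constants) are used. Concretely: Lemma~\ref{lm:scheme}, applied with a fresh enumeration $(\kappa_j)$, $(\alpha_j)$ of such constants not occurring in $t_0$, yields the generic thread; then, given any reply $(n',u',\pi')$ of \abe\ at a given node, I substitute $\kappa_{i+1}:=u'$, $\alpha_{i+1}:=\pi'$ in the corresponding generic evaluation, using substitutivity to conclude that the concrete process still evaluates the way the (Play) rule requires, and non-generativity to keep the later constants fresh. Iterating, I must be careful that the sequence $(n_j)$ fed to Lemma~\ref{lm:scheme} is exactly the sequence of integers \abe\ actually plays along the branch under consideration; since by the remark after the lemma the tree $\varphi$ and the indices $f,s$ depend only on $(n_j)$, this is coherent—each choice of opponent replies determines one branch, and the lemma describes that branch.

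I expect the main obstacle to be the bookkeeping that makes the induction hypothesis strong enough: I need the statement to carry along the \emph{whole} partial history of $\forall$-positions accumulated down a branch (not just the current one), because the (Play) rule adds a position to $\H$ and the eventual (Win) step refers to a final position $(\vec m_h,\vec n_h,u,\pi)\in\H$; and I must check that the position named by $\varphi(s)$, namely $(\vec m_{\varphi(s)},\vec n_{\varphi(s)},\kappa_s,\alpha_s)$ after substitution, is genuinely the one appended at the appropriate earlier (Play) step, so that when $t_f\star\overline{n_f}\cdot\kappa_f\cdot\alpha_f\eval\kappa_s\star\alpha_s$ and $\M\vDash f(\vec m_{\varphi(s)},\vec n_{\varphi(s)})=0$ the (Win) rule applies. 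The rest—anti-evaluation closure of $\W^1_\Phi$ through the initial reduction $t_0\star u\cdot\pi$, and the base case at $\tau=\varphi(s)$—is routine given Lemma~\ref{lm:scheme} and Corollary~\ref{cor:equality}.
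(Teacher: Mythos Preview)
Your high-level plan---use Lemma~\ref{lm:scheme} together with substitutivity to transfer the generic evaluations to concrete opponent replies---is the right idea, and it is the one the paper exploits. The gap is in how you organize the argument.

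The tree~$\T$ produced by Lemma~\ref{lm:scheme} is a tree of \emph{positions} (pairs $(\vec m_i,\vec n_i)$ visited during the match), not the tree of \emph{game steps}. Its depth is bounded by~$h$, but the number of (Play) steps in a derivation of $\langle p_0,\H_0\rangle\in\W^1_\Phi$ along a given branch is~$f$, which can be arbitrarily large because \elo\ may backtrack many times. Your proposed induction on $h-|\tau|$ would have to terminate in at most~$h$ steps, while the derivation you are trying to build has depth~$f{+}1$. Concretely: when the game moves from step~$i$ to step~$i{+}1$, the node $\varphi(i{+}1)$ may lie at the \emph{same} depth in~$\T$ as~$\varphi(i)$, or even at a shallower one (this is exactly what happens in Figure~\ref{fig:scheme}, where from step~4 at depth~2 we go to step~5 at depth~1). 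So $h-|\tau|$ is not a decreasing measure along the derivation, and the induction does not go through. There is a related difficulty: $\T$, $f$, $s$ and the~$t_i$ all depend on the \emph{entire} sequence $(n_j)_j$ of \abe's replies. In a direct proof you must quantify over all of \abe's replies at every (Play) node, so there is no single tree~$\T$ to do induction on; different branches of the $\W^1_\Phi$-derivation give rise to different trees of different sizes.

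The paper sidesteps both problems by taking the contrapositive. Assume $t_0$ is not a winning strategy; then, using at each stage the failure of the (Play) rule, one \emph{extracts} a single infinite sequence $(n_i,u_i,\pi_i)_i$ of bad replies for which the corresponding states $\langle t_i\{\kappa_j{:=}u_j,\alpha_j{:=}\pi_j\}_{j<i}\star\overline{n_i}\cdot u_i\cdot\pi_i,\H_i\rangle$ never enter~$\W^1_\Phi$. Now there \emph{is} a single sequence~$(n_i)$; Lemma~\ref{lm:scheme} applies to it and returns finite $f,s$; substituting $\kappa_j:=u_j$, $\alpha_j:=\pi_j$ into the last line of the scheme gives $t_f^{\text{subst}}\star\overline{n_f}\cdot u_f\cdot\pi_f\eval u_s\star\pi_s$ with $(\vec m_{\varphi(s)},\vec n_{\varphi(s)},u_s,\pi_s)\in\H_f$ and $\M\vDash f(\vec m_{\varphi(s)},\vec n_{\varphi(s)})=0$, so the (Win) rule fires at stage~$f$---contradiction. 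Your substitution idea is used exactly here; what changes is that the contrapositive linearizes the problem to a single branch, so no induction over~$\T$ is needed at all.
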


\begin{proof}
Consider $\Phi\in\Sigma^{0}_{h}$ and a closed term $t_0\ureal \Phi$. 
Given any infinite sequence of (pairwise distinct) non generative constants $(\kappa_i )_{i\in\N}$ 
that do not occur in $t_0$ and any sequence of stack constants $(\alpha_i )_{i\in\N}$, 
we have shown that for any sequence $(n_i)_{i\in\N}$ of integers, there exists 
two integers $f,s\in\N$,
two finite sequences of integers $m_0,\ldots,m_f\in\N$ and closed terms $t_0,\ldots,t_f\in\Lambda$ and 
a finite tree $\T$ whose characteristic function $\varphi$ verifies $|\varphi(s)|=h$: 
\begin{center}
$\begin{array}{lr @{~\eval~}l @{\hspace{0.4cm}} r}

  &t_{0} \star \kappa_{0}\cdot\alpha_{0} & \kappa_{0}\star\overline{m}_1\cdot t_{1}\cdot\alpha_{0} & \\
\forall i\in\Int{1,f-1}
& t_i\star \overline{n_i}\cdot\kappa_i\cdot\alpha_i 
& \kappa_{j}\star \overline{m_{i+1}}\cdot t_{i+1}\cdot \alpha_{j} 
& (\textrm{with } j\le i \textrm{ and } 
% c\in\N \\ 
\varphi(j)\sqsubset \varphi(i+1)) \\ 
  &t_f\star \overline{n_f}\cdot\kappa_f\cdot\alpha_f & \kappa_s \star \alpha_s 
  & (\textrm{with } \M \vDash f(\vec{m}_{\varphi(s)},\vec{n}_{\varphi(s)})=0)
\end{array}$
\end{center}

We assume $t_0$ is not a winning strategy, that is there exists a term $u_0$ and a stack $\pi_0$ such that
$$\pair{t_0\star u_0\cdot \pi_0,\emptyset}\notin\W^{1}_{\Phi}$$
and try to reach a contradiction.

We build by induction four infinite sequences $(n_i)_{i\in\N},(u_i)_{i\in\N},(\pi_i)_{i\in\N},(\H_i)_{i\in\N}$ 
such that for any index $i\in\N$, we have $\H_i=\bigcup_{j\le i}\{(\vec{m}_{\varphi_i(j)},\vec{n}_{\varphi_i(j)},u_j,\pi_j)\}$
and the following statement:
$$\pair{t_i\{\kappa_j:=u_j,\alpha_j:=\pi_j\}_{j=0}^{i-1}\star \overline{n_i} \cdot u_i\cdot \pi_i,\H_i}\notin\W^{1}_{\Phi}
 \eqno{(\text{IH}_i)}$$

where $t_i$ is the term taken from the thread scheme we obtain for the sequence $(n_i)_{i\in\N}$.

\begin{itemize}
 \item {\bf IH$_1$} : by substitution over the first line of the scheme, we get 
$$t_{0} \star u_{0}\cdot\pi_{0} \eval u_{0}\star\overline{m}_1\cdot t_{1}\{\kappa_0:=u_0, \alpha_0:=\pi_0\}\cdot\pi_{0}$$
As $\pair{t_0\star u_0\cdot \pi_0,\emptyset}\notin\W^{1}_{\Phi}$, that implies by the second rule of induction that there exists $n_1,u_1,\pi_1$ such that 
$$\pair{t_{1}\{\kappa_0:=u_0, \alpha_0:=\pi_0\}\star \overline{n_1}\cdot u_1\cdot \pi_1,(\emptyset,\emptyset,u_0,\pi_0)}\notin\W^{1}_{\Phi}$$
\item {\bf IH$_{i+1}$} : assume we have built $n_j,u_j,\pi_j,\H_j$ for all $0\le j \le i$, such that IH$_j$ holds.
Hence by hypothesis, we have 
$$\pair{t_i\{\kappa_j:=u_j,\alpha_j:=\pi_j\}_{j=0}^{i-1}\star \overline{n_i} \cdot u_i\cdot \pi_i,\H_i}\notin\W^{1}_{\Phi}$$
By substitution over the threads scheme, we get an index  $j\le i$ such that :
\small
$$ t_i\{\kappa_j:=u_j,\alpha_j:=\pi_j\}_{j=0}^{i-1}\star \overline{n_1} \cdot u_i\cdot \pi_i
\eval 
u_j\star \overline{m_{i+1}}\cdot t_{i+1}\{\kappa_j:=u_j,\alpha_j:=\pi_j\}_{j=0}^{i} \cdot \pi_j$$
\normalsize
Furthermore we know from the hypothesis IH$_i$ that there is a pair $(\vec{m}_{\varphi_i(j)},\vec{n}_{\varphi_i(j)})$ 
such that $(\vec{m}_{\varphi_i(j)},\vec{n}_{\varphi_i(j)},u_j,\pi_j)\in\H_i$.
As the second rule of induction fails, it implies the existence of $n_j,u_j,\pi_j$ such that :
$$\pair{t_{i+1}\{\kappa_j:=u_j,\alpha_j:=\pi_j\}_{j=0}^{i}\star \overline{n_{i+1}} \cdot u_{i+1}\cdot \pi_{i+1},\H_{i+1}}\notin\W^{1}_{\Phi}$$
where, taking the very same definition of $\varphi_{i+1}$ we used in the proof of lemma \ref{lm:scheme},
$\H_{i+1}=\H_{i}\cup\{((\vec{m}_{\varphi_{i+1}(i+1)},\vec{n}_{\varphi_{i+1}(i+1)}),u_{i+1},\pi_{i+1})\}$, so we prove IH$_{i+1}$.
\end{itemize}

Now, if we consider the sequence $(n_i)_{i\in\N}$ we built, and define $\varphi=\lim_{i\in\N} \varphi_i$,
it is clear that $\varphi$ is the very same function that we obtain by Lemma \ref{lm:scheme}.
Moreover, according to this Lemma we know there exists $f,s\in\N$ such that 
$$t_f\{\kappa_j:=u_j,\alpha_j:=\pi_j\}_{j=0}^{f}\star \overline{n_f}\cdot u_f\cdot \pi_f \eval u_s \star \pi_s$$
with $\M \vDash f(\vec{m}_{\varphi(s)},\vec{n}_{\varphi(s)})=0$.
As $(\vec{m}_{\varphi(s)},\vec{n}_{\varphi(s)},u_s,\pi_s)\in\H_f$, the first rule of $\G^{1}_\Phi$ applies, and
$$\pair{t_f\{\kappa_j:=u_j,\alpha_j:=\pi_j\}_{j=0}^{f-1}\star \overline{n_f}\cdot u_f\cdot \pi_f,\H_f}\in\W^{1}_{\Phi}$$
which is obviously a contradiction with IH$_f$.
\end{proof}

\subsection{A wild realizer}
\label{ss:wild}

The previous section gives a specification of arithmetical formul\ae~in the particular case where the language
of realizers is deterministic\footnote{Actually, this assumption is not necessary, and has been made only for convenience in the proof of Lemma \ref{lm:scheme}.
In fact, we could adapt this proof to a non-deterministic case, by defining $Q_{i+1}$ as the union of the threads 
$\Th(t_i\star \overline{n_{i+1}}\cdot\kappa_{i+1}\cdot\alpha_{i+1})$ for all 
 $j\in\N, m_{i+1}\in\N$ and $t_{i+1}\in\Lambda$ such that $\kappa_j\star\overline{m_{i+1}}\cdot t_{i+1} \cdot \alpha_j \in Q_i$. 
 But in this case the characteristic function of the tree describing the thread scheme is more subtle to construct.}
and provides infinitely many interaction constants and infinitely many substitutive and non generative stack constants.
These assumptions are actually incompatible with the presence of instructions such as $\Eq$ or $\Quote$, 
as stated by the Proposition \ref{prop:ExtraInstructions}, since this break the property of substitutivity.  
It would be pleasing to be able to extend such a characterization to a more general framework that would allow such instructions.
Nevertheless, we know from \cite{Peirce} that it was not possible for the Law of Peirce, 
and it is not possible either in this case, for the very same reason:
the instruction $\Eq$ (that could be simulated with $\Quote$, see Section \ref{ss:ExtraInstr}) allows to
define some \emph{wild} realizers for some formul\ae, 
that is realizers of some $\Phi$ that are not winning strategies for the game $\G^{1}_{\Phi}$.

If we consider $f_{\le}:\N^{2}\to\N$ such that $\forall x,y\in\N, (f_{\le}(x,y)=0 \Leftrightarrow x\le y)$, 
and the formula ${\Phi_\le}\equiv \exr{x}\far{y}(f_{\le}(x,y)=0)$, here is an example of such a wild realizer.
We define the following terms
$$\begin{array}{rcl}
T_2[y,m] & \equiv & \Quote~(\lambda nu.\texttt{eq\_nat}~ n~ m~ (\Eq~ u~ (y~ y)~ \I~ u)~ u)\\
T_1[u,m] & \equiv & \lambda y.u~ \barr 0 ~ T_2[y,m]\\
T_0[u,m] & \equiv & T_1[u,m] ~T_1[u,m]\\
t_\le & \equiv & \lambda u.\Quote ~(\lambda m.T_0[u,m])
\end{array}$$
From these definitions we get for all $u\in\Lambda$ and $\pi\in\Pi$:
$$t_\le \star u \cdot \pi \eval T_0[u,\barr {n_\pi}]\star \pi 
\eval u \star \barr 0 \cdot T_2[T_1[u,\barr {n_{\pi}}],\barr {n_{\pi}}]\cdot \pi$$
and moreover, for all $n\in \N, u'\in\Lambda$ and $\pi'\in\Pi$:
$$T_2[T_1[u,\barr {n_{\pi}}],\barr {n_{\pi}}]\star \barr n \cdot u' \cdot \pi' \eval 
\left\{\begin{array}{ll} 
  \I \star \pi' & \mathrm{if~} u'\equiv T_0[u,\barr {n_\pi}] \mathrm{~and~} \pi\equiv\pi' \\
  u' \star \pi' & \mathrm{otherwise}
\end{array}\right.$$ 

\begin{prop}\label{prop:wild}
 $t_\le\ureal \exr{x}\far{y}(f_{\le}(x,y)=0)$
\end{prop}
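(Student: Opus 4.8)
I would fix an arbitrary pole $\pole$ and an arbitrary stack $u\cdot\pi\in\fv{\exr{x}\far{y}(f_\le(x,y)=0)}$, and show $t_\le\star u\cdot\pi\in\pole$. The first step is to unfold the second-order encoding of the relativized existential: such a stack carries a falsity value $F\subseteq\Pi$ with $u\real\forall x(\{x\}\limp\far{y}(f_\le(x,y)=0)\limp\dot F)$ and $\pi\in F$, so that in particular, instantiating $x:=0$, $u\real\{0\}\limp A\limp\dot F$ where I abbreviate $A\equiv\far{y}(f_\le(0,y)=0)$. Using the reductions displayed right before the statement,
$$t_\le\star u\cdot\pi~\eval~T_0[u,\barr{n_\pi}]\star\pi~\eval~u\star\barr{0}\cdot T_2[T_1[u,\barr{n_\pi}],\barr{n_\pi}]\cdot\pi\,,$$
and the argument then splits according to whether $T_0[u,\barr{n_\pi}]\star\pi$ lies in $\pole$ or not.

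If $T_0[u,\barr{n_\pi}]\star\pi\in\pole$, then, since $t_\le\star u\cdot\pi\eval T_0[u,\barr{n_\pi}]\star\pi$ and $\pole$ is closed under anti-evaluation, $t_\le\star u\cdot\pi\in\pole$ and we are done. Otherwise I would prove $T_2[T_1[u,\barr{n_\pi}],\barr{n_\pi}]\real A$: given any stack $\barr{n}\cdot u'\cdot\pi'\in\fv{A}$, one has $u'\cdot\pi'\in\fv{f_\le(0,n)=0}$, and since $0\le n$ gives $\M\vDash f_\le(0,n)=0$, Lemma~\ref{lm:equality} yields $u'\cdot\pi'\in\fv{\forall X(X\limp X)}$. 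One then observes that $u'\equiv T_0[u,\barr{n_\pi}]$ and $\pi'\equiv\pi$ cannot hold simultaneously: otherwise $T_0[u,\barr{n_\pi}]\cdot\pi\in\fv{\forall X(X\limp X)}$ would provide a falsity value $S$ with $\pi\in S$ and $T_0[u,\barr{n_\pi}]\real\dot S$, hence $T_0[u,\barr{n_\pi}]\star\pi\in\pole$, contradicting the case assumption. Thus the reduction of $T_2[T_1[u,\barr{n_\pi}],\barr{n_\pi}]\star\barr{n}\cdot u'\cdot\pi'$ always takes the second branch and gives $u'\star\pi'$, which belongs to $\pole$ by Corollary~\ref{cor:equalitystack}; anti-evaluation closes this subcase, so indeed $T_2[T_1[u,\barr{n_\pi}],\barr{n_\pi}]\real A$. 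Then $\barr{0}\cdot T_2[T_1[u,\barr{n_\pi}],\barr{n_\pi}]\cdot\pi\in\fv{\{0\}\limp A\limp\dot F}$ (using $\pi\in F$), so $u\star\barr{0}\cdot T_2[T_1[u,\barr{n_\pi}],\barr{n_\pi}]\cdot\pi\in\pole$, and anti-evaluation along the displayed reductions gives $t_\le\star u\cdot\pi\in\pole$ in this case too.

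The main obstacle is the ``special'' branch of $T_2$'s evaluation, the one returning $\I\star\pi$ instead of forwarding the opponent's term; it cannot be handled by merely realizing $\forall X(X\limp X)$. The way around it is exactly the dichotomy above: that branch becomes reachable only when $T_2$ is attacked by a stack that is itself a legal attack on $f_\le(0,n)=0\equiv\forall X(X\limp X)$, and the legality of such an attack already forces $T_0[u,\barr{n_\pi}]\star\pi\in\pole$ --- which is precisely the situation the first case disposes of for free by anti-evaluation, and which is therefore excluded in the second. Identifying this case distinction is the only delicate point; the remainder is routine unfolding of the realizability clauses for $\limp$, $\forall$ and $\{\cdot\}$, of the encoding of the relativized existential, and of Leibniz equality through Lemma~\ref{lm:equality} and Corollary~\ref{cor:equalitystack}.
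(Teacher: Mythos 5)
Your proof is correct and takes essentially the same approach as the paper's: unfold the existential to get the falsity value $F$, split on whether $T_0[u,\barr{n_\pi}]\star\pi\in\pole$, dispose of the first case by anti-evaluation, and in the second case show $T_2[T_1[u,\barr{n_\pi}],\barr{n_\pi}]\real\far{y}(f_\le(0,y)=0)$ by using the case hypothesis to rule out the $\I\star\pi'$ branch. The only (cosmetic) difference is that you exclude $(u',\pi')\equiv(T_0[u,\barr{n_\pi}],\pi)$ by a small contradiction argument where the paper deduces it directly from $u'\star\pi'\in\pole$ versus $T_0[u,\barr{n_\pi}]\star\pi\notin\pole$; the content is identical.
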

\begin{proof}
 Let us consider a fixed pole $\pole$ and a stack $u\cdot \pi \in\fv{\exr{x}\far{y}(f_{\le}(x,y)=0)}$,
 that is a falsity value $S$ such that $\pi\in\fv{\dot S}$ and $u\in\tv{\far{x}(\far{y}(f_{\le}(x,y)=0)\limp \dot S)}$.
 We distinguish two cases:
 \begin{itemize}
  \item either $T_0[u,\barr {n_\pi}]\star \pi\in\pole $.
  As we have $t_\le\star u \cdot \pi \eval T_0[u,\barr {n_\pi}]\star \pi$, we get $t_\le\star u \cdot \pi \in\pole$ by anti-evaluation.
  \item either $T_0[u,\barr {n_\pi}]\star \pi\notin\pole $.
  In this case, we have $t_\le \star u \cdot \pi\eval u \star \barr 0 \cdot T_2[T_1[u,\barr {n_{\pi}}],\barr {n_{\pi}}]\cdot \pi$,
  hence it suffices to prove that $T_2[T_1[u,\barr {n_{\pi}}],\barr {n_{\pi}}]\real \far{y}(f_{\le}(0,y)=0)$.
  Let us then consider $n\in \N$ and a stack $u'\cdot\pi'\in\fv{\forall W (W (f_{\le}(0,n)) \limp W (0))}$.
  First remark that $f_{\le}(0,n)=0$, hence by Corollary \ref{cor:equality} $u'\star\pi'\in\pole$,
  thus by assumption, we know that $(u',\pi')\nequiv (T_0[u,\barr {n_\pi}],\pi)$.
  Thus we have $T_2[T_1[u,\barr {n_{\pi}}],\barr {n_{\pi}}]\star \barr n \cdot u' \cdot \pi' \eval u'\star\pi'\in\pole$, 
  which allows to conclude by anti-evaluation. \qedhere
 \end{itemize}
 \end{proof}
 
 Notice that the subterm $\I$ that appears in the definition of the term $T_2$ never comes to active position 
 in the proof of Proposition \ref{prop:wild}, so that we could actually have chosen any other closed $\lambda_c$-term instead.
 The point is that it can only occur if $(u',\pi')\equiv (T_0[u,\barr {n_\pi}],\pi)$, 
 and when it is the case, we are no more interested in the end of the execution of the process $T_0[u,\barr {n_\pi}]\star\pi$,
 that is in a way allowed to do anything in the rest of its execution.
 Before giving a game-theoretic interpretation of this phenomena, we first check that $t_\le$ is not a winning strategy for the game $\G^{1}_{\Phi_\le}$.
 
 \begin{prop}
 Let us assume that the relation of one step evaluation $\eval_1$ is only defined from the rules (\textsc{Grab}),
 (\textsc{Push}),(\textsc{Save}),(\textsc{Restore}),(\textsc{Quote}),(\textsc{Eq}). 
 Then the universal realizer $t_\le$ of $\Phi_\le$ is not a winning strategy for the game $\G^{1}_{\Phi_\le}$
 \end{prop}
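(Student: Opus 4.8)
The plan is to exhibit a single ``handle'' $(u_0,\pi_0)$ for which the initial state of $\G^1_{\Phi_\le}$ fails to be winning. Since $\Phi_\le\equiv\exr{x}\far{y}(f_{\le}(x,y)=0)$ carries only one pair of quantifiers ($h=1$), the game has depth one, so it is enough to examine the root move together with one of its immediate successors. Take $\pi_0\equiv\alpha$ for some stack constant $\alpha\in\B$ and $u_0\equiv\k_\alpha$. First I would read off the thread of $p_0\equiv t_\le\star\k_\alpha\cdot\alpha$ from the displayed behaviour of $t_\le$: it reaches $\k_\alpha\star\overline{0}\cdot T_2[T_1[\k_\alpha,\overline{n_\alpha}],\overline{n_\alpha}]\cdot\alpha$, and one further \textsc{(Restore)} step leads to the stuck process $\overline{0}\star\alpha$; moreover $\k_\alpha$ comes to head position in this thread \emph{only} at the configuration just displayed. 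As the unique $\forall$-position of $\H_0=\{(\emptyset,\emptyset,\k_\alpha,\alpha)\}$ has size $0<h$, it is not final, so rule (Win) cannot fire at $\pair{p_0,\H_0}$, and rule (Play) forces \elo's only legal move to be: play the pair $\bigl(T_2[T_1[\k_\alpha,\overline{n_\alpha}],\overline{n_\alpha}],\,0\bigr)$ from the empty position.

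Next I would let \abe\ answer with the integer $n_1\equiv0$ and the stack $T_0[\k_\alpha,\overline{n_\alpha}]\cdot\alpha$ --- an input he can build from his own initial choice and which matches, on both counts, the ``wild'' branch of $T_2$ (here $u'\equiv T_0[u,\overline{n_\pi}]$ and $\pi'\equiv\pi$ with $u\equiv\k_\alpha$, $\pi\equiv\alpha$). The resulting $\exists$-position $p_1\equiv T_2[T_1[\k_\alpha,\overline{n_\alpha}],\overline{n_\alpha}]\star\overline{0}\cdot T_0[\k_\alpha,\overline{n_\alpha}]\cdot\alpha$ then evaluates, by the second displayed equation, to $\I\star\alpha$, which is stuck, while the history becomes $\H_1=\H_0\cup\{(0,0,T_0[\k_\alpha,\overline{n_\alpha}],\alpha)\}$. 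It remains to check $\pair{p_1,\H_1}\notin\W^1_{\Phi_\le}$: rule (Win) would require $p_1\eval T_0[\k_\alpha,\overline{n_\alpha}]\star\alpha$ (the sole final position in $\H_1$) and rule (Play) would require $p_1\eval\k_\alpha\star\overline{m}\cdot t\cdot\alpha$ (through the sole non-final position); but throughout the thread $p_1\eval\cdots\eval\I\star\alpha$ the term $T_0[\k_\alpha,\overline{n_\alpha}]$ occurs only as an inert argument of the syntactic test $\Eq$ --- and is discarded as soon as the test succeeds --- so neither it nor its subterm $\k_\alpha$ ever reaches head position; hence neither rule applies. Thus \elo's unique root move is refuted by this response of \abe, whence $\pair{p_0,\H_0}\notin\W^1_{\Phi_\le}$ and $t_\le$ is not a winning strategy for $\G^1_{\Phi_\le}$.

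The only genuinely delicate point is the thread bookkeeping underlying the ``only at'' and ``never reaches head position'' claims: one must trace the reductions driven by \textsc{(Quote)}, \textsc{(Eq)} and the numeral arithmetic inside $T_2$ carefully enough to be certain that, once the equality test succeeds, control passes to $\I\star\pi'$ without ever bringing the compared terms --- in particular $u_0\equiv\k_\alpha$, which stays buried inside $u'$ --- to head position. That is exactly what guarantees that \elo\ has no alternative legal move at either state; everything else is routine unfolding of the KAM rules and of the deductive system defining $\W^1$. The obstruction is of course the very same one that already defeats the substitutive game for Peirce's Law in~\cite{Peirce}.
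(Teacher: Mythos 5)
Your proof is correct and follows essentially the same route as the paper's: force \elo's unique move $(0,T_2[\ldots])$, have \abe\ reply with $n_1=0$ and the self-referential stack $T_0[u_0,\barr{n_{\pi_0}}]\cdot\pi_0$ so as to trigger the $\Eq$-branch ending in the dead position $\I\star\pi_0$, and observe that neither (Win) nor (Play) can fire afterwards. The only difference is your choice of handle term $\k_\alpha$ in place of the paper's $\I$ --- a harmless (arguably slightly cleaner) variation, since $\I$ occurs as a subterm of $T_2$ while $\k_\alpha$ does not.
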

 \begin{proof}
  The following is a valid match for $\G^{1}_{\Phi_\le}$ that \elo\ loses :
  \begin{itemize}
   \item \abe\ starts with the initial handle $(\I,\alpha)$ for the empty position, where $\alpha$ is a stack constant. 
   \item The only pair $(m,t)$ such that $t_\le \star \I\cdot\alpha \eval \I \star \barr {m} \cdot t \cdot \alpha$ is 
  $t_1\equiv T_2[T_1[\I,\barr {n_\alpha}],\barr {n_\alpha}]$ and $m_1=0$. Thus \elo\ is forced to play that pair $(0,t_1)$
   \item \abe\ replies with $n_1=0$, $u_1\equiv T_0[\I,\barr {n_\alpha}]$ and $\pi_1\equiv \alpha$.
   \item Then \elo\ loses, as the thread $\Th(t_1\star \barr {n_1} \cdot u_1 \cdot \pi_1)$ contains no process of the form
   $\I\star \barr{m} \cdot t\cdot \alpha$ (to continue to play) or of the form $u_1\star\pi_1$ (to win the game).\qedhere
   \end{itemize}
 \end{proof}

\section{Non-substitutive case}
\label{s:nsubst}
\subsection{$\G^{2}_\Phi$: cumulative game}
Despite the wild realizer $t_\le$ of the formula $\Phi_\le$ is not a winning strategy for the corresponding game $\G^{1}_{\Phi_\le}$,
we can still think its computational behaviour in game-theoretic terms as follows.
If we observe closely what happens in the match we described in the proof of the previous Proposition, 
if \abe\ starts with $(u,\pi)$, to which \elo\ answers $(0,T_2[T_1[u,\barr {n_\pi}],\barr {n_\pi}])$,
\elo\ then does somehow the distinction between two cases over the next \abe\ answer $(n_1,u_1,\pi_1)$.
\begin{itemize}
 \item if $(u_1,\pi_1)\nequiv(T_0[u,\barr {n_\pi}],\pi)$, \elo\ simply pursues the execution to reach $u_1\star\pi_1$,
 which is a final winning position, as $0\le n_1$.
 \item if $(u_1,\pi_1)\equiv(T_0[u,\barr {n_\pi}],\pi)$, as no interesting move can be obtained from the current position, 
 \elo\ \emph{backtracks} to the former $\exists$-position $t_\le \star u \cdot \pi$, and now wins  since 
 $$t_\le \star u \cdot \pi \eval T_0[u,\barr {n_\pi}]\star \pi \equiv u_1\star \pi_1$$
 \end{itemize}
 That is to say that the term $t_\le$ can still be seen as a winning strategy if we give the right to \elo\ 
 to compute its move from any former $\exists$-position. This gives us a new game $\G^{2}_{\Phi}$, 
 in which \elo\ keeps track of all the previous $\exists$-positions encountered during the game.
 
 We thus define a $\G^{2}_{\Phi}$-state as a pair $\pair{P,\H}$, where $P$ is now a finite set of processes (intuitively, all $\exists$-positions, 
 including the current one), and $\H$ is exactly as in $\G^{1}_{\Phi}$.
 The set $\W^{2}_\Phi$ of winning positions is inductively defined as follows:

% \begin{figure}[htp]
\begin{itemize}
% \hline\hline\noalign{\medskip}
\item if there is $p\in P$ and $(\vc{m}{h},\vc{n}{h},u,\pi)\in\H$ 
such that  $p\eval u\star\pi$ and $\M \vDash f(\vc{m}{h},\vc{n}{h})=0$
$$\infer[{\mbox{\scriptsize (Win)}}]{\pair{P,\H} \in \W^{2}_\Phi}{}$$
\vspace{-0.3cm}

\item if there is $p\in P$, $i<h$, $(\vc{m}{i},\vc{n}{i},u,\pi)\in\H$  and  $m'\in\N$ such that  $p\eval u\star \overline{m'} \cdot t \cdot \pi$:
 $$\infer[{\mbox{\scriptsize (Play)}}]{\pair{P,\H} \in \W^{2}_{\Phi}}
{\pair{P\cup\{t\star \overline{n'} \cdot u' \cdot \pi'\}, \H\cup\{(\vc{m}{i}\cdot m',\vc{n}{i}\cdot n',u',\pi')\}} \in \W^{2}_\Phi
& \forall (n',u',\pi')\in\N\times\Lambda\times\Pi}$$
\end{itemize}

A term $t$ is say to be a \emph{winning strategy} for $\G^{2}_\Phi$ if for any handle $(u,\pi)\in\Lambda\times\Pi$, 
we have $\pair{\{t\star u\cdot\pi\}, \{(\emptyset,u,\pi)\}}\in\W^{2}_\Phi$.

\subsection{Adequacy}

\begin{prop}\label{prop:inclusion}
 A winning strategy for $\mathds{G}_\Phi^{1}$ is also a winning strategy for $\mathds{G}_\Phi^{2}$.
\end{prop}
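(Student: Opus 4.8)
The plan is to prove, by induction on the derivation, a monotonicity statement and then specialize it to initial states: \emph{for every $\G^1_\Phi$-state $\pair{p,\H}\in\W^1_\Phi$ and every finite set of processes $P$ with $p\in P$, one has $\pair{P,\H}\in\W^2_\Phi$.} The point is simply that the two deduction rules defining $\W^2_\Phi$ are obtained from those defining $\W^1_\Phi$ by allowing \elo\ to act from \emph{any} recorded $\exists$-position $p\in P$ rather than only from the single current one, and by accumulating the new process into $P$ instead of replacing it; so a $\G^1$-derivation transports into a $\G^2$-derivation essentially verbatim, carrying along the extra positions in $P$.

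Concretely, I would argue on the last rule of the derivation of $\pair{p,\H}\in\W^1_\Phi$. If it is $(\mathrm{Win})$, there is a final $\forall$-position $(\vc{m}{h},\vc{n}{h},u,\pi)\in\H$ with $p\eval u\star\pi$ and $\M\vDash f(\vc{m}{h},\vc{n}{h})=0$; since $p\in P$, exactly the same data fire the $(\mathrm{Win})$ rule of $\G^2$, giving $\pair{P,\H}\in\W^2_\Phi$. If it is $(\mathrm{Play})$, there are $(\vc{m}{i},\vc{n}{i},u,\pi)\in\H$ with $i<h$, an integer $m$ and a term $t$ such that $p\eval u\star\overline{m}\cdot t\cdot\pi$, and for every $(n',u',\pi')\in\N\times\Lambda\times\Pi$ the premise $\pair{t\star\overline{n'}\cdot u'\cdot\pi',\,\H\cup\{(\vc{m}{i}\cdot m,\vc{n}{i}\cdot n',u',\pi')\}}\in\W^1_\Phi$ holds. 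Fixing such a triple and setting $P'=P\cup\{t\star\overline{n'}\cdot u'\cdot\pi'\}$ (still finite), the process $t\star\overline{n'}\cdot u'\cdot\pi'$ lies in $P'$, so the induction hypothesis yields $\pair{P',\,\H\cup\{(\vc{m}{i}\cdot m,\vc{n}{i}\cdot n',u',\pi')\}}\in\W^2_\Phi$; as this holds for all $(n',u',\pi')$ and $p\in P$ with $p\eval u\star\overline{m}\cdot t\cdot\pi$, the $(\mathrm{Play})$ rule of $\G^2$ gives $\pair{P,\H}\in\W^2_\Phi$.

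The proposition then follows at once: if $t$ is a winning strategy for $\G^1_\Phi$, then for every handle $(u,\pi)\in\Lambda\times\Pi$ we have $\pair{t\star u\cdot\pi,\,\{(\emptyset,\emptyset,u,\pi)\}}\in\W^1_\Phi$, and applying the statement above with $P=\{t\star u\cdot\pi\}$ gives $\pair{\{t\star u\cdot\pi\},\,\{(\emptyset,\emptyset,u,\pi)\}}\in\W^2_\Phi$, which is exactly the condition for $t$ to be a winning strategy for $\G^2_\Phi$. There is no genuinely hard step; the only things to be careful about are (i) formulating the induction hypothesis with the universally quantified larger set $P$ — otherwise the $(\mathrm{Play})$ case does not close, since the premise of the $\G^2$-rule carries $P$ enlarged — and (ii) identifying the initial states of the two games, namely a singleton set of processes on the $\G^2$ side against the single current process on the $\G^1$ side, with one and the same history $\H$.
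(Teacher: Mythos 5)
Your proof is correct and follows essentially the same route as the paper: an induction on the derivation of $\pair{p,\H}\in\W^{1}_{\Phi}$, transporting each rule of $\G^{1}_\Phi$ to the corresponding rule of $\G^{2}_\Phi$. The only (cosmetic) difference is that you strengthen the induction hypothesis to quantify over all finite $P\ni p$, whereas the paper proves $\pair{\{p\},\H}\in\W^{2}_\Phi$ directly and closes the (Play) case with a separately proved monotonicity lemma (if $\pair{P,\H}\in\W^{2}_{\Phi}$ and $P\subset P'$ then $\pair{P',\H}\in\W^{2}_{\Phi}$); the two devices are interchangeable.
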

\begin{proof}
 It suffices to prove that for any $\G^{1}_\Phi$ state $\pair{p,\H}$, 
 if we have $\pair{p,\H}\in\W^{1}_{\Phi}$, then $\pair{\{p\},\H}\in\W^{2}_\Phi$.
 We do it by induction on the derivation of $\pair{p,\H}\in\W^{1}_{\Phi}$, observing for the second rules of $\G^{2}_\Phi$ that
 if $\pair{P,\H}\in\W^{2}_{\Phi}$ and $P\subset P'$, then $\pair{P',\H}\in\W^{2}_{\Phi}$ (which is also proved by induction).
\end{proof}

\begin{prop}[Adequacy]\label{prop:adequacy}
 If $t$ is a winning strategy for $\mathds{G}^{2}_\Phi$, then $t\ureal \Phi$
\end{prop}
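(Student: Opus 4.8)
The plan is to fix an arbitrary pole $\pole$, take a stack $u_0\cdot\pi_0\in\fv{\Phi}$ (every stack in $\fv{\Phi}$ has this shape since $\Phi\equiv E_0\equiv\forall X_1(A_1\limp X_1)$), assume toward a contradiction that $t\star u_0\cdot\pi_0\notin\pole$, and then replay $\G^{2}_\Phi$ from the initial handle $(u_0,\pi_0)$ against an opponent whose moves are read off from the realizability semantics. Note that the hypothesis $u_0\cdot\pi_0\in\fv{E_0}$ unfolds to: there is a falsity value $S_0\in\Pow(\Pi)$ with $\pi_0\in S_0$ and $u_0\in\tv{A_1\{X_1:=\dot S_0\}}$.

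I would isolate the invariant. Call a $\G^{2}_\Phi$-state $\pair{P,\H}$ \emph{sound} when (a) $p\notin\pole$ for every $p\in P$, and (b) to each history entry $\xi=(\vc{m}{i},\vc{n}{i},u,\pi)\in\H$ one can assign a falsity value $S\in\Pow(\Pi)$ with $\pi\in S$ such that, writing $\sigma_\xi=\{x_k:=m_k;\,y_k:=n_k\}_{1\le k\le i}$, one has $u\in\tv{A_{i+1}[\sigma_\xi]\{X_{i+1}:=\dot S\}}$ when $i<h$, and $u\cdot\pi\in\fv{E_h[\sigma_\xi]}$ when $i=h$ (recall $E_h[\sigma_\xi]$ is literally the Leibniz equality $f(\vc{m}{h},\vc{n}{h})=0$). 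The initial state $\pair{\{t\star u_0\cdot\pi_0\},\,\{(\emptyset,\emptyset,u_0,\pi_0)\}}$ is sound by the previous paragraph (its sole entry has $i=0$). The crux is the lemma, proved \emph{by induction on the derivation of $\pair{P,\H}\in\W^{2}_\Phi$}: no sound state belongs to $\W^{2}_\Phi$. Granting it, applying it to the sound initial state contradicts the winning-strategy hypothesis $\pair{\{t\star u_0\cdot\pi_0\},\ldots}\in\W^{2}_\Phi$, so $t\star u_0\cdot\pi_0\in\pole$; as $\pole$ and $u_0\cdot\pi_0\in\fv{\Phi}$ were arbitrary, $t\in|\Phi|_\pole$ for every pole, i.e. $t\ureal\Phi$.

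For the lemma, the (Win) leaf is immediate: there $p\in P$ and a final entry $(\vc{m}{h},\vc{n}{h},u,\pi)$ with $p\eval u\star\pi$ and $\M\vDash f(\vc{m}{h},\vc{n}{h})=0$; soundness gives $p\notin\pole$, hence $u\star\pi\notin\pole$ by anti-evaluation, while $u\cdot\pi\in\fv{E_h[\sigma_\xi]}$ together with $\M\vDash f(\vc{m}{h},\vc{n}{h})=0$ yields $u\star\pi\in\pole$ by Corollary~\ref{cor:equalitystack} --- a contradiction. For a (Play) step \elo\ uses $p\in P$, an entry $\xi=(\vc{m}{i},\vc{n}{i},u,\pi)$ with $i<h$, and $m'\in\N$ with $p\eval u\star\overline{m'}\cdot t'\cdot\pi$; from $p\notin\pole$ one gets $u\star\overline{m'}\cdot t'\cdot\pi\notin\pole$, hence $\overline{m'}\cdot t'\cdot\pi\notin\fv{A_{i+1}[\sigma_\xi]\{X_{i+1}:=\dot S\}}=\fv{\forallN x_{i+1}(\forallN y_{i+1}E_{i+1}[\sigma_\xi]\limp\dot S)}$, and since the head $\overline{m'}$ pins the first-order witness to $m'$ and $\pi\in S$ already holds, this forces $t'\notin\tv{\forallN y_{i+1}E_{i+1}[\sigma_\xi\cup\{x_{i+1}:=m'\}]}$; unfolding $\forallN y_{i+1}$ once more produces $n'\in\N$ and a stack $u'\cdot\pi'\in\fv{E_{i+1}[\sigma']}$, where $\sigma'=\sigma_\xi\cup\{x_{i+1}:=m';\,y_{i+1}:=n'\}$, with $t'\star\overline{n'}\cdot u'\cdot\pi'\notin\pole$. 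Here I use that this last falsity value consists of cons-stacks: $E_{i+1}[\sigma']$ is an implication when $i+1<h$, and when $i+1=h$ it is a Leibniz equality, whose falsity value by Lemma~\ref{lm:equality} is that of $\forall X(X\limp X)$ or of $\top\limp\bot$ --- in every case a set of cons-stacks. Letting \abe\ answer $(n',u',\pi')$ yields the state obtained by adding $t'\star\overline{n'}\cdot u'\cdot\pi'$ to $P$ and $(\vc{m}{i}\cdot m',\vc{n}{i}\cdot n',u',\pi')$ to $\H$; this state is again sound --- the new process is outside $\pole$, and for the new entry take as $S$ the falsity value witnessing $u'\cdot\pi'\in\fv{E_{i+1}[\sigma']}$ when $i+1<h$ (so $u'\in\tv{A_{i+2}[\sigma']\{X_{i+2}:=\dot S\}}$ and $\pi'\in S$), nothing further being required when $i+1=h$ --- and it is a premise of the (Play) rule, hence in $\W^{2}_\Phi$ with a strictly smaller derivation, so the induction hypothesis applies and closes the case. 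Via Proposition~\ref{prop:inclusion} this also establishes Proposition~\ref{prop:adequacy1}.

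The step I expect to be the real obstacle is getting the soundness invariant right: one must track that \emph{every} recorded $\exists$-position, not merely the current one, stays outside the pole, while simultaneously keeping each recorded $\forall$-position realizability-consistent in the precise sense of clause (b); this is exactly what lets \abe's answers be chosen so that both clauses persist, and it is the reason the contradiction can only be extracted at a (Win) leaf, where the equality lemma shuts the door. The rest --- unfolding the relativized quantifiers $\forallN$, keeping the substitutions $\sigma_\xi$ synchronized with the growing history, and checking the cons-stack bookkeeping forced by the decorated game --- is routine.
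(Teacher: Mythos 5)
Your proof is correct and takes essentially the same route as the paper's: the paper's key Fact --- if $\pair{P,\H}\in\W^{2}_\Phi$ and every history entry satisfies $u_i\cdot\pi_i\in\fv{E_i[\sigma]}$ then $P\cap\pole\neq\emptyset$ --- is exactly the contrapositive of your ``no sound state is winning'' lemma, proved by the same induction on the derivation, with the same use of Corollary~\ref{cor:equalitystack} at the (Win) leaf and the same unfolding of $\fv{A_{i+1}}$ at the (Play) step. The only cosmetic difference is that the paper applies the induction hypothesis to all premises to show $\xi$ is a realizer and then derives $p_i\in\pole$, whereas you extract a single counterexample stack and apply the hypothesis to that one premise; both are sound.
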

\begin{proof}
To make the proof easier, we will use the formul\ae~$A$ and $E$ that we previously defined in Section \ref{ss:g1}.

 Let $\pole$ be a fixed pole, $S\!_1$ be a falsity value, $u_0\real\far{x_1} (E_1\Rightarrow \dot{S}\!_1)) \Rightarrow \dot{S}\!_1$ and $\pi_0\in S\!_1$,
 and let us show that $t\star u_0\cdot\pi_0\in\pole$.
 For that, we more generally prove the following statement:
 \begin{clm}
 If $\pair{P,\H}\in\W^{2}_\Phi$ and  $\forall (\vc{m}{i},\vc{n}{i},u_i,\pi_i)\in\H, u_i\cdot\pi_i\in\fv{E_{i}\{x_j:=m_j,y_j:=n_j\}^{i}_{j=1}}$ 
 then  $P\cap\pole\neq\emptyset$  
 \end{clm}
\begin{proof}
We proceed by induction on the derivation of $\pair{P,\H}\in\W_\Phi$, distinguishing two possible cases:
\begin{enumerate}
 \item $\pair{P,\H}\in\W^{2}_\Phi$ because of the first induction rule:
  there exists $(\vc{m}{h},\vc{n}{h},u,\pi)\in\H$ and $p\in P$ such that
 $p\eval u\star\pi$ and $\M \vDash f(\vc{m}{h},\vc{n}{h})=0$. 
 If we assume that $u\cdot\pi\in\fv{E_h}=\fv{\forall W(W(f(\vc{m}{h},\vc{n}{h}))\Rightarrow W(0))}$,
 as $\M \vDash f(\vc{m}{h},\vc{n}{h})=0$, we get that $u\star\pi\in\pole$ (Corollary \ref{cor:equality}) and by anti-reduction, $p\in\pole$.
 
 \item $\pair{P,\H}\in\W^{2}_\Phi$ because of the second induction rule :
 there is some $p_i\in P$, $(\vc{m}{i},\vc{n}{i},u_i,\pi_i)\in\H$ and $m\in\N$ 
 such that  $p_i\eval u_i\star \overline{m} \cdot \xi \cdot \pi_i$, and for any $(n,u,\pi)$,
 $\pair{P\cup\{\xi\star \overline{n} \cdot u \cdot \pi\}, \H\cup\{(\vc{m}{h},\vc{n}{h},u,\pi)\}} \in \W^{2}_\Phi$.
 We prove that we can not have $P\cap\pole=\emptyset$.
 Indeed, assuming it is the case, we can show that $u_i\star \overline{m} \cdot \xi \cdot \pi_i\in\pole$.
 Besides, we know by hypothesis that 
 \begin{center}
  $u_i\cdot\pi_i\in\fv{\forall X_{i+1} (\far{x_{i+1}} 
	(\far{y_{i+1}}E_{i+1}\{{x}_{j}:={m}_{j},{y}_{j}:={n}_{j}\}^{i}_{j=1} \Rightarrow X_{i+1}) \Rightarrow X_{i+1})}$
  \end{center}
 so that it is sufficient to prove that $\xi\real\far{y_{i+1}}E_{i+1}\{{x}_{j}:={m}_{j},{y}_{j}:={n}_{j}\}^{i}_{j=1}\{x_{i+1}:=m\}$ to conclude.
 So pick $n\in\N$, $u\cdot\pi\in\fv{E_{i+1}\{{x}_{j}:={m}_{j},{y}_{j}:={n}_{j}\}^{i}_{j=1}\{x_{i+1}:=m\}\{y_{i+1}:=n\}}$, 
 and let us prove that $\xi\star\overline{n}\cdot u\cdot\pi\in\pole$.
 We have by hypothesis that 
 $$\pair{P\cup\{\xi\star \overline{n} \cdot u \cdot \pi\}, \H\cup\{(\vc{m}{i}\cdot m,\vc{n}{i}\cdot n,u,\pi)\}} \in \W^{2}_\Phi$$
 from which we deduce by induction (the premises are verified) that 
 $$(P\cup\{\xi\star \overline{n} \cdot u \cdot \pi\})\cap\pole\neq\emptyset$$
 As $P\cap\pole=\emptyset$, we get that $\xi\star \overline{n} \cdot u \cdot \pi\in\pole$, which conclude this case. \qedhere
\end{enumerate}
\end{proof}
 
 In particular, we have $\pair{\{t\star u_0\cdot\pi_0\},\{(\emptyset,\emptyset,u_0,\pi_0)\}}\in\W^{2}_\Phi$, $u_0\cdot\pi_0\in\fv{E_0}$, 
 hence we can deduce that $t\star u_0\cdot\pi_0\in\pole$.
 
\end{proof}

\subsection{Completeness of $\mathds{G}^{1}_\Phi$}
\begin{prop}[Completeness of $\mathds{G}^{2}_\Phi$]\label{prop:completeness2}
 If $t\ureal\Phi$ then $t$ is a winning strategy.
\end{prop}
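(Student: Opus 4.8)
The plan is to prove the contrapositive: if $t_0$ is \emph{not} a winning strategy for $\G^2_\Phi$, I exhibit a pole $\pole$ with $t_0\notin\tv{\Phi}_\pole$, contradicting $t_0\ureal\Phi$. So assume there is a handle $(u_0,\pi_0)$ with $\pair{\{t_0\star u_0\cdot\pi_0\},\{(\emptyset,\emptyset,u_0,\pi_0)\}}\notin\W^2_\Phi$. Since $\W^2_\Phi$ is inductively defined, unfolding non-membership gives: a state $\pair{P,\H}$ lies outside $\W^2_\Phi$ exactly when rule \textnormal{(Win)} does not apply and, for \emph{every} legal \elo-move out of it (a process $p\in P$, an index $i<h$, an entry $(\vc{m}{i},\vc{n}{i},u,\pi)\in\H$ and a reduction $p\eval u\star\overline{m'}\cdot t\cdot\pi$), there is \emph{some} \abe-reply $(n',u',\pi')$ whose successor state is again outside $\W^2_\Phi$. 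Fixing (by choice) one witnessing reply for each such move, I let $\mathcal S$ be the least set of states containing the initial one and closed under passing to the chosen successor of every legal \elo-move; an induction on $\mathcal S$ shows every state of $\mathcal S$ is outside $\W^2_\Phi$, and along $\mathcal S$ the components $P$ and $\H$ only grow. I then take the thread-oriented pole
$$\pole~\equiv~\Bigl(\bigcup\{\Th(p):p\in P\text{ for some }\pair{P,\H}\in\mathcal S\}\Bigr)^{c},$$
which is a genuine pole, and in which every process occurring in any state of $\mathcal S$ --- in particular $t_0\star u_0\cdot\pi_0$ --- does not lie.

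The heart of the argument is then a statement parallel to the Claim inside the proof of Proposition~\ref{prop:adequacy}, but established \emph{for} this $\pole$ rather than \emph{from} an arbitrary one, by induction along the construction of $\mathcal S$. To each game-tree node $\tau$ reached in $\mathcal S$ I attach a falsity value $S_\tau$ (accumulating the stacks that occur as the stack component of level-$|\tau|$ history entries at $\tau$, playing the role that the stack constants $\alpha_\tau$ played in Lemma~\ref{lm:scheme}), and I prove that every history entry $(\vc{m}{\tau},\vc{n}{\tau},u,\pi)$ at node $\tau$, of level $i$, satisfies $u\cdot\pi\in\fv{E_i[\sigma(\tau)]}_\pole$ with $X_1,\dots,X_i$ instantiated by $\dot S_{\tau_{|1}},\dots,\dot S_{\tau_{|i}}$, using the abbreviations of Section~\ref{ss:g1}. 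The base case is immediate: since $t_0\star u_0\cdot\pi_0\notin\pole$ while $t_0\in\tv{E_0}_\pole$ (as $t_0\ureal\Phi$), the stack $u_0\cdot\pi_0$ must lie in $\fv{E_0}_\pole=\fv{\Phi}_\pole$. For the inductive step I run the adequacy computation backwards: given the invariant at $\tau$ and a chosen \elo-move $p\eval u\star\overline{m'}\cdot t\cdot\pi$ with $p\notin\pole$, hence $u\star\overline{m'}\cdot t\cdot\pi\notin\pole$ by anti-evaluation, instantiating $X_{i+1}$ by $\dot S_\tau$ in $E_i\equiv\forall X_{i+1}(A_{i+1}\Rightarrow X_{i+1})$ forces $t$ to \emph{fail} to realize $\far{y_{i+1}}E_{i+1}[\sigma(\tau)]\{x_{i+1}:=m'\}$; choosing a refuting stack $\overline{n'}\cdot u'\cdot\pi'$ and feeding $t\star\overline{n'}\cdot u'\cdot\pi'$ into $\mathcal S$ (so it avoids $\pole$), the pair $u'\cdot\pi'$ attacks $E_{i+1}$ at the child node $\tau\cdot c$, which is the invariant there.

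Applied at the root node $\emptyset$, the invariant gives $u_0\cdot\pi_0\in\fv{\Phi}_\pole$, while $t_0\star u_0\cdot\pi_0\notin\pole$; hence $t_0\notin\tv{\Phi}_\pole=\fv{\Phi}_\pole^{\pole}$, contradicting $t_0\ureal\Phi$, so $t_0$ is a winning strategy. The step I expect to be the main obstacle is precisely the inductive step just sketched: the refuting reply must be chosen \emph{consistently with} the reply already fixed in $\mathcal S$, i.e.\ non-winningness of the parent state and the demand that ``$u'\cdot\pi'$ genuinely attacks $E_{i+1}$'' have to be met by one and the same \abe-reply (with the correct auxiliary falsity values $S_\tau$, and plausibly continuation constants occurring inside the replies). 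Here one uses that $\M\models\Phi$ --- which holds because $\Phi$ has a universal realizer, by the corollary closing Section~\ref{ss:Degenerate} together with Proposition~\ref{prop:mdl} --- to rule out the bad case: an \elo-move admitting no such refutation would put the parent state under rule \textnormal{(Play)}, contradicting $\pair{P,\H}\notin\W^2_\Phi$. The remainder is a careful but essentially mechanical transcription of the computation in the proof of Proposition~\ref{prop:adequacy}, with memberships and non-memberships in $\pole$ interchanged as needed.
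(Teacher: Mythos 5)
Your overall skeleton (assume a non-winning initial state, saturate it into a family of non-winning states, take the thread-oriented pole $\pole$ that misses all the processes so produced, then walk down the quantifier levels of $\Phi$ using $t_0\ureal\Phi$) is the same as the paper's. But the engine of your induction is inverted, and this is not a presentational slip: it makes the base case false. From $t_0\in\tv{E_0}_\pole=\fv{E_0}_\pole^{\pole}$ and $t_0\star u_0\cdot\pi_0\notin\pole$ the definition of truth values gives $u_0\cdot\pi_0\notin\fv{E_0}_\pole$, not $u_0\cdot\pi_0\in\fv{E_0}_\pole$ as you assert. And the membership really is false in general: $u_0\cdot\pi_0\in\fv{E_0}$ (with $X_1$ instantiated by some $S\ni\pi_0$) would mean that the \emph{arbitrary} term $u_0$ supplied by \abe\ realizes $A_1$, which nothing guarantees. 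The invariant that actually propagates --- and the one the paper proves, exactly as in Lemma~\ref{lm:scheme} --- is the \emph{non}-membership $u_i\cdot\pi_i\notin\fv{E_i[\sigma]}$ along the relevant branch: instantiating the head second-order variable of $E_i$ by $\{\pi_i\}$, this non-membership is precisely what produces the next \elo-move, i.e.\ an $m_{i+1}$ and a $\xi_{i+1}\real\far{y_{i+1}}E_{i+1}[\sigma]\{x_{i+1}:=m_{i+1}\}$ with $u_i\star\overline{m_{i+1}}\cdot\xi_{i+1}\cdot\pi_i\notin\pole$. Consequently your final contradiction ($t_0\notin\tv{\Phi}_\pole$ via $u_0\cdot\pi_0\in\fv{\Phi}_\pole$) is not available; the paper instead reaches level $h$, uses the predicate $\Delta$ ($0\mapsto\{\pi_h\}$, $n\geq 1\mapsto\emptyset$) and Corollary~\ref{cor:equality} to extract $\M\vDash f(\vc{m}{h},\vc{n}{h})=0$ together with $u_h\star\pi_h\notin\pole$, so that rule (Win) applies to some state $(P_l,\H_l)$ of the construction, contradicting its non-winningness.

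The obstacle you correctly identify --- that the reply extracted from ``$t$ fails to realize $\far{y_{i+1}}E_{i+1}$'' must coincide with the reply fixed when saturating the non-winning states --- is then resolved unsoundly in your sketch. The claim that an \elo-move admitting no refuting stack ``would put the parent state under rule (Play)'' is itself an adequacy statement: to show that \emph{every} \abe-reply leads to a winning state from the mere fact that $t\real\far{y_{i+1}}E_{i+1}[\sigma]\{x_{i+1}:=m'\}$, you need the history entries to lie in the corresponding falsity values --- the very invariant under construction, which moreover fails. (The appeal to $\M\models\Phi$ plays no role; the paper's proof never uses it.) The paper dissolves the obstacle rather than confronting it: it never requires \abe's chosen reply to lie in any falsity value. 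It builds a single increasing sequence $(P_j,\H_j)_{j}$ driven by an enumeration $\phi$ of pairs $(m,t)$ in which every pair occurs infinitely often; whenever the realizability argument exhibits a reachable move $(m_{i+1},\xi_{i+1})$ (reachable because the corresponding process avoids $\pole$, hence lies in some $\Th(p)$ with $p\in P_j$), fairness guarantees that this move is processed at some later stage $k$, so the chosen reply $\xi_{i+1}\star\overline{n_{i+1}}\cdot u_{i+1}\cdot\pi_{i+1}$ enters $P_{k+1}\subseteq P_\infty$ and therefore avoids $\pole$ --- which, combined with $\xi_{i+1}\real\far{y_{i+1}}E_{i+1}[\sigma]\{x_{i+1}:=m_{i+1}\}$, yields the \emph{non}-membership at the next level. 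The linearity of this sequence (rather than your tree-closure $\mathcal{S}$) also guarantees that the final process and the complete history entry cohabit in one state $(P_l,\H_l)$, which is needed for rule (Win) to fire.
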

\begin{proof}
 Let us reason by contradiction by assuming that there exists a handle $(u_0,\pi_0)\in\Lambda\times\Pi$
such that $\pair{t\star u_0\cdot\pi_0, \{(\emptyset,\emptyset,u_0,\pi_0)\}}\notin\W^{2}_\Phi$.
We will construct an increasing sequence $(\pair{P_j,\H_j})_{j\in\N}$ such that for any $j\in\N$, $\pair{P_j,\H_j}\notin \W^{2}_{\Phi}$.
For that, let us pick a fixed enumeration $\phi : \N \to \N\times\Lambda$ such that every pair $(m,t)$ appears infinitely many times in the range of $\phi$.
The sequence $(\pair{P_j,\H_j})$ is then defined as follows:
\begin{itemize}
 \item We set $P_0=\{t\star u_0\cdot\pi_0\}$ and $\H_0=\{(\emptyset,\emptyset,u_0,\pi_0)\}$.
 \item Assume we have built a state $\pair{P_j,\H_j}\notin\W^{2}_\Phi$. Writing $(m,t)=\phi(j)$, we distinguish the two following cases:
 \begin{enumerate}
  \item Either there exists $p\in P_j$ and $((\vc{m}{i},\vc{n}{i},u,\pi)\in\H_j)$ such that  $p\eval u\star \overline{m} \cdot t \cdot \pi$. 
  From the second rule of induction we get the existence of $n\in\N$, $u'\in\Lambda$, $\pi'\in\Pi$ such that 
  $\pair{P\cup\{t\star \overline{n} \cdot u' \cdot \pi'\}, \H\cup\{(\vc{m}{i}\cdot m,\vc{n}{i}\cdot n,u',\pi')\}} \notin \W^{2}_\Phi$.
  We pick such a tuple $(n,u',\pi')$ and define $P_{j+1}=P_j\cup\{t\star \overline{n} \cdot u' \cdot \pi'\}$ 
  and $\H_{j+1}=\H_j\cup\{(\vc{m}{i}\cdot m,\vc{n}{i}\cdot n,u',\pi')\}$. 
  \item Either there is no such process, and we set $P_{j+1}=P_j$ and $\H^{j+1}=\H_j$. 
 \end{enumerate}
 In both cases, we have construct $P_{j+1}$ and $\H_{j+1}$ such that
 $P_j\subset P_{j+1}$, $\H_j\subset\H_{j+1}$ and $\pair{P_{j+1},\H_{j+1}}\notin \W^{2}_{\Phi}$.
 We set $P_\infty=\union{j\in\N}P_j$, $Q=\union{p\in P_\infty} \Th(p)$ and $\pole=Q^{c}$.
\end{itemize}
 
By construction, we have $t\star u_0\cdot\pi_0\notin\pole$, and as $t\ureal\forall X (\far{x_1}(\far{y_1}E_1\Rightarrow X )\Rightarrow X)$, 
we get  $u_0\nVdash\far{x_1}(\far{y_1}E_1\Rightarrow\{\pi_0\})$. 
Thus there exists $m_1\in\N$ and  $\xi_1\real \far{y_1}E_1\{x_1:=m_1\}$ such that $u_0\star \overline{m_1} \cdot \xi_1\cdot\pi_0\notin  \pole$,
that is exists an index $j\in\N$ and a process $ p\in P_j$ such that $p\eval u_0\star \overline{m_1} \cdot \xi_1\cdot\pi_0$. 
Let $k\ge j$ be such that $\phi(k)=(m_1,\xi_1)$, then by construction there is some $\overline{n_1}, u_1, \pi_1$
such that $P_{k+1}=P_k\cup\{\xi_1\star\overline{n_1}\cdot u_1 \cdot \pi_1\}$ and $\H_{k+1}=\H_k\cup\{((m_1,n_1,u_1,\pi_1)\}$

As $\xi_1\real \far{y_1}E_1\{x_1:=m_1\}\equiv \far{y_1}\forall X ((\far{x_2}\far{y_2}E_2\{x_1:=m_1\}\Rightarrow X )\Rightarrow X)$ 
and $\xi_1\star\overline{n_1}\cdot u_1 \cdot \pi_1\notin\pole$, we deduce than $u_1\nVdash\far{x_2}\far{y_2}E_2\{x_1:=m_1,y_1:=n_1\}\Rightarrow\{\pi_1\})$.

Iterating this very same reasoning, we obtain that for every $i\in\llbracket 1,h\rrbracket$,
there exists an index $k_i\in\N$ and a closed term $\xi_i\in\Lambda$, such that $\H_{k_i}$ contains a tuple
$(\vc{m}{i},\vc{n}{i},u_i,\pi_i)$, with $\xi_i\star \overline{n_i}\cdot u_i \cdot \pi_i \notin\pole$
and $\xi_i\real \far{y_i}E_{i}\{x_j:=m_j\}^{i}_{j=1}\{y_j:=n_j\}^{i-1}_{j=1}$.

For $i=h$, we get then an index $k_h\in\N$ and a closed term $\xi_h$, such that $\H_{k_h}$ contains a tuple
$(\vc{m}{h},\vc{n}{h},u_h,\pi_h)$, with $\xi_h\star \overline{n_h}\cdot u_h \cdot \pi_h \notin\pole$
and $\xi_h\real \far{y_h}\forall W (W(f(\vc{m}{h},\vc{n}{h-1}\cdot y_h))\Rightarrow W(0))$.

If we consider the following predicate 
$$\Delta:\left\{\begin{array}{ccl}\N & \to & \Pow(\Pi) \\ 0 & \mapsto & \{\pi_h\} \\ n\ge1 & \mapsto& \emptyset\end{array}\right.$$
we get in particular that 
$\xi_h\real \{n_h\}\limp \Delta(f(\vc{m}{h},\vc{n}{h}))\Rightarrow \Delta(0) $, 
from which we deduce that $u_h \cdot \pi_h \notin \fv{\Delta(f(\vc{m}{h},\vc{n}{h}))\Rightarrow \Delta(0)}$.
Obviously $\pi_h\in\fv{\Delta(0)}$, so that necessarily we have $u_h\nVdash\Delta(f(\vc{m}{h},\vc{n}{h}))$.
Hence there exists $\pi\in\fv{\Delta(f(\vc{m}{h},\vc{n}{h}))}$, which implies that $\pi=\pi_h$ and $\M\vDash f(\vc{m}{h},\vc{n}{h})=0$,
such that $u_h\star\pi_h\notin \pole$, that is to say there is some $j\in\N$ and $p\in P_j$ such that $p\eval u_h\star \pi_h$. 
Taking $l=\max(j,k_h)$, this contradicts the fact that $(P_l,\H_l)\notin \W^{2}_\Phi$ because of the first rule of induction.
\end{proof}

\begin{thm}
If $\Phi$ is an arithmetical formula, there exists $t\ureal \Phi$ if and only if $t$ implements a winning strategy for $\G^{2}_\Phi$.
\end{thm}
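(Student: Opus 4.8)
The plan is to read the final theorem as nothing more than the conjunction of the adequacy direction (Proposition~\ref{prop:adequacy}) and the completeness direction (Proposition~\ref{prop:completeness2}), once the formula has been put into the canonical shape for which $\G^2_\Phi$ was defined. So the only content beyond quoting those two results is the reduction of an arbitrary arithmetical formula to one of the form $\exr{x_1} \far{y_1} \cdots \exr{x_h} \far{y_h}\, (f(\vc{x}{h}, \vc{y}{h}) = 0)$.

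First I would make this normalization explicit. By Definition~\ref{def:formulae}, an arithmetical formula is, up to prenexing, a strictly alternating block of numeric quantifiers ending in an atomic equation; by prepending a dummy $\exr{x_0}$ if it begins with a universal quantifier and appending dummy quantifiers until the number of blocks is even, one obtains a $\Sigma^0_{2h}$-formula $\Phi$ of the required shape, the new function symbol simply ignoring the dummy arguments. I would record that padding by a dummy quantifier is realized in both directions by very simple proof-like terms (essentially $\lambda x.x$, up to an application to $\barr 0$), so that a term realizes the original formula iff the term obtained by composing with these wrappers realizes $\Phi$; in particular the existence of a universal realizer is preserved, and likewise the existence of a winning strategy transfers by the same wrapping of the bookkeeping. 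If instead the theorem is read---as its placement after Propositions~\ref{prop:adequacy} and~\ref{prop:completeness2} suggests---with $\Phi$ ranging only over formul\ae\ already in this canonical shape, then this step is vacuous.

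With $\Phi$ of the canonical shape, the biconditional is then immediate. If $t$ is a winning strategy for $\G^2_\Phi$, then $t \ureal \Phi$ by Proposition~\ref{prop:adequacy}. Conversely, if $t \ureal \Phi$, then $t$ is a winning strategy for $\G^2_\Phi$ by Proposition~\ref{prop:completeness2}. One may additionally note, via Propositions~\ref{prop:adequacy1} and~\ref{prop:inclusion}, that winning strategies for $\G^1_\Phi$ form a subclass of these---so in the substitutive setting of Section~\ref{ss:g1} the two games coincide with realizability---although this remark is not needed for the statement.

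The step I expect to require the most care is the normalization: one must check that the wrapper terms absorbing the dummy quantifiers are genuine realizers and genuine winning strategies, and that they interact correctly with the histories and $\exists$-positions of both $\G^0_\Phi$-style games, so that the per-term equivalence---and not merely the existence statement---is preserved. Everything else is a direct appeal to the two propositions already established in Sections~5 and~6.
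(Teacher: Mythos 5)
Your proposal is correct and matches the paper's (implicit) argument exactly: the theorem is stated without proof as the immediate conjunction of Proposition~\ref{prop:adequacy} (winning strategy $\Rightarrow$ universal realizer) and Proposition~\ref{prop:completeness2} (universal realizer $\Rightarrow$ winning strategy). Your extra care about normalizing an arbitrary arithmetical formula to the $\Sigma^0_{2h}$ shape is reasonable but goes beyond the paper, which simply declares in Section~\ref{ss:game} that ``any arithmetical formula can be written equivalently in that way, adding some useless quantifiers if needed'' and works with the canonical shape throughout.
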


\section{A barrier for realizability models}
\label{s:models}
\newcommand{\liste}[2]{\<#1\>_{#2}}
\newcommand{\uncode}{\check}
\newcommand{\code}{\hat}
\subsection{A universal realizer for every formul\ae}
\label{ss:Universal}
We show here that if an arithmetic formula $\Phi\equiv\exr{x_1}\ldots\far{y_h}f(\vc{m}{h},\vc{n}{h})=0$ is true in the ground model, 
as soon as we dispose of a term computing $f$, we can implement a winning strategy, hence a universal realizer.
The idea of the strategy for \elo\  is to enumerate "smartly" $\N^{h}$, in the following sense:
when playing a tuple $\vc{m}{h}$, we first look as deep as possible in the tree of formers positions 
for the tuple $\vc{m}{i}$, and then go with corresponding \abe\ answer. 
In doing so we ensure that any tuple $\vc{m}{i}$ will always be played with the same answers $\vc{n}{i}$.
Then it is clear that is $\M\vDash\Phi$, we will reach sooner or later a winning position.

To implement such a strategy, we consider a term computing $f$ on a given position :
$$\Theta_f\star \liste{\barr m}{h} \cdot t_1 \cdot t_2 \cdot \pi \eval
\begin{cases}
  t_1\star\pi & \textrm{if } \M\vDash f(\vec{m},\vec{n})=0 \\
  t_2\star\pi & \textrm{if } \M\vDash f(\vec{m},\vec{n})\neq0 \end{cases}$$
where $\liste{\barr m}{i}$ is a $\lambda_c$-implementation\footnote{We could chose for instance to use a list representation for tuples, 
in which case $\liste{\barr m}{i}\equiv [\barr m_1,\ldots,\barr m_i]$,
but here the data-type would not be relevant, we only pay attention to some "big" steps of reduction independently of
technical representation of data}
for the tuple $\vc{m}{i}$,
and that we also have a term $\Next$ acting as a successor for $\N^{h}$. 
$$\Next \star \liste{\barr {m}}{h}^{i}\cdot t \cdot \pi \eval t \star \liste{\barr {m}}{h}^{i+1}\cdot\pi$$
where $\vc{m}{h}^{0}=(0,\ldots,0)$ and the sequence $(\vc{m}{h}^{i})_{i\in\N}$ 
is an enumeration of $\N^{h}$.
We also define the relation $\vc{m}{h}^{i}\le_h\liste{m}{h}^{j}\equiv i\le j$, which is total on $\N^{h}$.
Furthermore, we assume that we dispose of a $\lambda_c$-implementation of histories as lists of tuples,
and for a given history $H$, we will denote by $\code H$ its implementation\footnote{
$H\cup\{(\vc{m}{i},\vc{n}{i},u,\pi)\}$ will so correspond to $[\liste{\barr m}{i},\liste{\barr n}{i}),u,k_\pi]\cdot \code H$\,.}.

\begin{definition}
We say that a history $H$ is  \emph{functional} if for any $\vc{m}{i}$,
there exists at most one tuple $(\vc{n}{i},u,\pi)$ such that $(\vc{m}{i},\vc{n}{i},u,\k_\pi)\in H$.
\end{definition}

Then we build\footnote{We let the reader check the existence of such terms, which is a straightforward $\lambda_c$-calculus exercise}
several $\lambda_c$-terms according to their reductions rules. 
These terms will all take as parameter a $\lambda_c$-history $\code H$.
For $1\le i < h$, we define a term $T_i$ who is intended to
gets \abe` $i^{\text{th}}$ answer $(n_i,u_i,\pi_i)$, save it in $\code H$ and plays the next integer with $T_{i+1}$:
$$\begin{array}{r@{~~\eval~~}l}
T_i[\vc{m}{h},\vc{n}{i-1},\code H]\star \barr n_i \cdot u_i \cdot \pi_1 
      & u_i \star \barr m_{i+1}\cdot T_{i+1}[\vc{m}{h},\vc{n}{i},\code H^{(i)}] \cdot \pi_i
      \end{array}$$
where $\code H^{i}\equiv[\vc{m}{1},\vc{n}{i},u_i,\k_{\pi_i}]\cdot \code H$. 
The term $T_h$ gets \abe's answer as $T_i$ does, but then computes $f$ to know if it has reached a winning position 
or should either initiate the next step of enumeration:
$$T_{h}[\vc{m}{h},\vc{n}{h-1},\code H] \star \barr n_h \cdot u_h \cdot \pi_h 
      ~\eval~ \Theta_f \star \liste{m}{h}\cdot\liste{n}{h}\cdot u_h \cdot N[\vc{m}{h},\code H^{(h)}]\cdot\pi_h$$
with $\code H^{h}\equiv[\vc{m}{h},\vc{n}{h},u_h,\k_{\pi_h}]\cdot \code H$.
Then $N$ computes the next tuple in the enumeration and $L$ looks in the tree for the maximum former partial position similar 
to an initial segment of this tuple:
$$\begin{array}{r@{~~\eval~~}l}
N[\liste{m}{h},\code H]\star\pi
      & \Next \star \liste{m}{h} \cdot (\lambda m'_1\cdots m'_h.L[\liste{m'}{h},\code H])\cdot\pi \\
      L[\liste{m}{h},\code H]\star\pi
      & u_i \star \barr m_{i+1}\cdot T_{i+1}[\liste{m}{h},\liste{n}{i},\code H] \cdot \pi_i
      \end{array}$$
with $(\liste{m}{i},\liste{n}{i},u_i,\k_{\pi_i})\in \code H$ and 
$\forall j>i, \forall \vc{n}{j}\in \N^{j},\forall u\in\Lambda,\forall \pi\in\Pi (\liste{m}{j},\liste{n}{j},u,\k_{\pi})\notin \code H$.
Finally we consider $t_\Phi$ that would be the winning strategy, such that:
$$t_\Phi\star u_0\cdot\pi_0 
      ~\eval~ u_0\star \barr 0\cdot  T_1[\liste{0}{h},\emptystr,\code H_0] \cdot \pi_0$$
   with $H_0\equiv (\cdot,\cdot,u_0,\k_{\pi_0})$

\begin{prop}\label{prop:RealUniversal}
 If $\M\vDash\Phi$, then $t_\Phi$ is a winning strategy for $\G^{1}_\Phi$.
\end{prop}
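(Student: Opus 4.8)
The plan is to show that the term $t_\Phi$, together with the auxiliary terms $T_i$, $N$, $L$ and $\Theta_f$, implements exactly the "smart enumeration" strategy described informally above, and that this strategy is winning in $\G^1_\Phi$ whenever $\M\vDash\Phi$. First I would fix an arbitrary handle $(u_0,\pi_0)$ and trace the reduction of $t_\Phi\star u_0\cdot\pi_0$: by the reduction rule for $t_\Phi$ it reaches $u_0\star\barr 0\cdot T_1[\liste{0}{h},\emptystr,\code H_0]\cdot\pi_0$, so the initial $\exists$-move is forced to be $(0, T_1[\liste0h,\emptystr,\code H_0])$ from the empty position, and whatever $\forall$-answer $(n_1,u_1,\pi_1)$ \abe\ gives, the next $\exists$-position is $T_1[\ldots]\star\barr{n_1}\cdot u_1\cdot\pi_1$. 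The reduction rules for the $T_i$'s then show that from such a position the process reduces to $u_i\star\barr{m_{i+1}}\cdot T_{i+1}[\ldots]\cdot\pi_i$, i.e.\ it replays the \abe-supplied stack element $u_i$ with head $\pi_i$ — which is exactly a $\G^1_\Phi$ $\exists$-move on the position $(\vc mi,\vc ni,u_i,\pi_i)$ recorded in the history, proposing $m_{i+1}$. Thus the $\lambda_c$-dynamics really does define a legal play in $\G^1_\Phi$, and the integers $m_{i+1}$ produced are precisely those dictated by $\Next$ and $L$.

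The heart of the argument is a bookkeeping invariant. I would prove, by induction along the play, that (i) the $\lambda_c$-history $\code H$ carried by the current term faithfully encodes the game history $\H$ (this uses that $\Theta_f$, $\Next$, $L$ all only append to $\code H$, never rewrite it), and (ii) the history stays \emph{functional} in the sense of the Definition just above: the key point is that $L$, given a tuple $\liste mh$, always looks up the \emph{deepest} initial segment $\vc mi$ already present in $\code H$ and reuses the stored \abe-answer $\vc ni$, so any tuple $\vc mi$ is always extended with the same $\vc ni$ on every branch where it occurs. Combined with the fact that $\Next$ enumerates all of $\N^h$, this invariant gives: for every $i$-tuple $\vc mh^{(k)}$ in the enumeration, there is a moment in the play at which the full complete $\forall$-position $(\vc mh,\vc nh,u_h,\pi_h)$ with first $h$ components $\vc mh=\vc mh^{(k)}$ is added to $\H$, where $\vc nh$ are the (uniquely determined) answers \abe\ gave along the way.

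Finally I would invoke the hypothesis $\M\vDash\Phi$. By Tarski's definition of truth (equivalently Proposition \ref{prop:mdl}, $\{\emptyset\}\in\W^0_\Phi$), for \emph{every} assignment of the $\forall$-variables there is a choice of $\exists$-variables making $f(\vc mh,\vc nh)=0$; in particular, since our strategy tries \emph{all} tuples $\vc mh\in\N^h$ and since any complete branch of the play determines a consistent choice $\vc nh$ of \abe-answers, there is some enumeration index $k$ such that the tuple $\vc mh^{(k)}$ together with the answers $\vc nh$ forced along that branch satisfies $\M\vDash f(\vc mh^{(k)},\vc nh)=0$. At the moment this position is reached, $\Theta_f$ branches into its first argument and the process reduces to $u_h\star\pi_h$ with $(\vc mh^{(k)},\vc nh,u_h,\pi_h)\in\H$ and $\M\vDash f(\vc mh,\vc nh)=0$, which is precisely the (Win) rule of $\G^1_\Phi$; hence every play is won, i.e.\ $t_\Phi$ is a winning strategy. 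The adequacy Proposition \ref{prop:adequacy1} then yields $t_\Phi\ureal\Phi$. The main obstacle I anticipate is making the invariant (ii) and the "there is a winning index $k$" step fully precise: one must argue that because the enumeration by $\Next$ is exhaustive and the strategy never contradicts itself on repeated prefixes, the branch of the infinite play that keeps getting extended must eventually hit a tuple on which $\M$ witnesses the $\Sigma^0_{2h}$-truth of $\Phi$ — essentially a König-style / well-foundedness observation about the tree of $\exists$-positions, combined with the fact that $f$ is computed correctly by $\Theta_f$ so that no "false win" is ever declared and no genuine win is ever missed.
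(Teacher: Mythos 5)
Your plan is correct and follows essentially the same route as the paper: the reduction behaviour of $t_\Phi$ and the $T_i$'s realizes the smart enumeration, the functionality invariant on the history is the key bookkeeping step, and the crucial final step is exactly the one you flag — if no enumeration index ever triggers (Win), the functional limit history defines consistent \abe-answers for every prefix and hence witnesses $\forall m_1\exists n_1\cdots\forall m_h\exists n_h\,(\M\nvDash f(\vec m_h,\vec n_h)=0)$, contradicting $\M\vDash\Phi$. The paper merely packages this as a global contraposition (propagating ``not winning'' down the quantifier levels and along the enumeration of $\N^h$ via two technical lemmas), which is the same argument you would need to make your ``there is a winning index $k$'' step precise.
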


The proof does neither present any conceptual difficulty nor any interest in itself,
but still remains quite technical.
The idea is to propagate the contradiction along the enumeration of $\N^{h}$ 
in order to contradict $\M\vDash\Phi$ at the limit.
To do so, we define the proposition
{\bf P}$(i,\vc{m}{h},H)$ as the following statement :\\
{\bf P}$(i,\vc{m}{h},H)$ :"there exists $\vc{n}{i}\in\N^{i}$, $u_i\in\Lambda$, $\pi_i\in\Pi$ such that
\begin{itemize}
 \item $\{(\vc{m}{i},\vc{n}{i},u_i,\pi_i)\}\cup H$ is functional 
 \item $\pair{T_i[\liste{m}{h},\liste{n}{i-1},\code H]\star \barr n_i\cdot u_i\cdot\pi_i, 
                     \{(\vc{m}{i},\vc{n}{i},u_i,\pi_i)\}\cup H}\notin\W^{1}_{\Phi}$"
\end{itemize}

and prove two technical lemmas.   

\begin{lem}\label{lm:tech1}
 For any $i\in\Int{1,h}$, $\vc{m}{h}\in\N^{h}$ and any history $H$, {\bf P}$(i,\vc{m}{h},H)$ implies
 there exists an history $H'$ such that $H\subset H'$ and {\bf P}$(h,\vc{m}{h},H')$
\end{lem}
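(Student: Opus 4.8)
The plan is to prove the lemma by induction on the difference $h-i$. When $h-i=0$, i.e. $i=h$, there is nothing to do: {\bf P}$(i,\vc{m}{h},H)$ is literally {\bf P}$(h,\vc{m}{h},H)$, so one takes $H'\equiv H$. For the inductive step I would fix $i$ with $1\le i<h$, assume {\bf P}$(i,\vc{m}{h},H)$ with witnesses $\vc{n}{i}\in\N^{i}$, $u_i\in\Lambda$ and $\pi_i\in\Pi$, and put $H^{(i)}\equiv\{(\vc{m}{i},\vc{n}{i},u_i,\pi_i)\}\cup H$, which is functional by hypothesis and is implemented by $\code H^{(i)}$. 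The goal is to derive {\bf P}$(i+1,\vc{m}{h},H^{(i)})$ and then invoke the induction hypothesis.

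To obtain {\bf P}$(i+1,\vc{m}{h},H^{(i)})$ I would unfold one step of the reduction rule defining $T_i$ (for $i<h$),
$$T_i[\vc{m}{h},\vc{n}{i-1},\code H]\star\barr{n_i}\cdot u_i\cdot\pi_i~\eval~u_i\star\barr{m_{i+1}}\cdot T_{i+1}[\vc{m}{h},\vc{n}{i},\code H^{(i)}]\cdot\pi_i\,,$$
and use that $\W^{1}_{\Phi}$ is monotone along $\eval$ in its process component (if $p\eval p'$ then, by transitivity of $\eval$, any derivation of $\pair{p',H}\in\W^{1}_{\Phi}$ is also one of $\pair{p,H}\in\W^{1}_{\Phi}$): from $\pair{T_i[\vc{m}{h},\vc{n}{i-1},\code H]\star\barr{n_i}\cdot u_i\cdot\pi_i,H^{(i)}}\notin\W^{1}_{\Phi}$ it follows that $\pair{u_i\star\barr{m_{i+1}}\cdot T_{i+1}[\vc{m}{h},\vc{n}{i},\code H^{(i)}]\cdot\pi_i,H^{(i)}}\notin\W^{1}_{\Phi}$. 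Now $(\vc{m}{i},\vc{n}{i},u_i,\pi_i)\in H^{(i)}$ and the displayed process has exactly the shape $u_i\star\barr{m_{i+1}}\cdot t\cdot\pi_i$ demanded by rule $(\mathrm{Play})$, so if every premise of that rule instance held, the state would lie in $\W^{1}_{\Phi}$; since it does not, some premise fails, yielding $n_{i+1}\in\N$, $u_{i+1}\in\Lambda$ and $\pi_{i+1}\in\Pi$ with
$$\pair{T_{i+1}[\vc{m}{h},\vc{n}{i},\code H^{(i)}]\star\barr{n_{i+1}}\cdot u_{i+1}\cdot\pi_{i+1},\ H^{(i)}\cup\{(\vc{m}{i+1},\vc{n}{i+1},u_{i+1},\pi_{i+1})\}}\notin\W^{1}_{\Phi}\,,$$
where $\vc{m}{i+1}=\vc{m}{i}\cdot m_{i+1}$ and $\vc{n}{i+1}=\vc{n}{i}\cdot n_{i+1}$. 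Since $\vc{n}{i}=\vc{n}{(i+1)-1}$ and $\code H^{(i)}$ implements $H^{(i)}$, this is exactly {\bf P}$(i+1,\vc{m}{h},H^{(i)})$, provided the functionality clause holds, i.e. that $H^{(i)}\cup\{(\vc{m}{i+1},\vc{n}{i+1},u_{i+1},\pi_{i+1})\}$ is functional.

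The last move is to apply the induction hypothesis (available since $h-(i+1)<h-i$) to {\bf P}$(i+1,\vc{m}{h},H^{(i)})$: this returns a history $H'$ with $H^{(i)}\subseteq H'$ and {\bf P}$(h,\vc{m}{h},H')$, and since $H\subseteq H^{(i)}\subseteq H'$ the chain of inclusions closes the argument. The one point needing care — and the main obstacle — is the functionality bookkeeping just flagged: $H^{(i)}$ is functional, and the only key at which the freshly added tuple could clash is $\vc{m}{i+1}$, a proper extension of $\vc{m}{i}$, so it suffices to know that $H$ records no tuple properly extending $\vc{m}{i}$; I expect this to follow from the discipline with which histories are threaded through the terms $T_\bullet$, $N$ and $L$ in the construction of $t_\Phi$ (the term $L$ always descends to the deepest already-recorded prefix before extending it), which is precisely the invariant ensuring that each partial tuple is answered consistently throughout a play. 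Everything else is a mechanical unfolding of the reduction rules and of the inductive definition of $\W^{1}_{\Phi}$.
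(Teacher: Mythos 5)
Your proof is correct and follows essentially the same route as the paper: unfold one reduction step of $T_i$, observe that the failure of the state to be in $\W^{1}_{\Phi}$ forces a premise of the rule $(\mathrm{Play})$ to fail, extract the witnesses for {\bf P}$(i+1,\vc{m}{h},H^{(i)})$, and conclude by (decreasing) induction on $i$. The functionality bookkeeping you flag is indeed the only delicate point, and the paper glosses over it just as you do, asserting without further argument that the extended history remains functional.
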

\begin{proof}
 It suffices to see that because of the reduction rule defining $T_i$, 
 if {\bf P}$(i,\vc{m}{h},H)$ holds then the second rule of $\G^{1}_\Phi$ has to fail, 
hence there exists $n_{i+1}$, $u_{i+1}\in\Lambda$, $\pi_{i+1}\in\Pi$ such that
% and $H^{+}$ functional such that $\uncode H\subset \uncode H^{+}$
\begin{equation*}
 \pair{T_{i+1}[\liste{m}{h},\liste{n}{i},\code H^{i}]\star \barr n_{i+1}\cdot u_{i+1}\cdot\pi_{i+1},
 \{(\vc{m}{i+1},\vc{n}{i+1},u_{i+1},\pi_{i+1})\}\cup H^{i}}\notin\W^{1}_{\Phi}
% \tag{P$_{i+1}$}
\end{equation*}
 where $H^{i}\equiv \{\vc{m}{1},\vc{n}{i},u_i,\k_{\pi_i}\}\cup  H$, which is still a functional environment.
 Therefore {\bf P}$(i,\vc{m}{h},H)$  $\limp$  {\bf P}$(i+1,\vc{m}{h},H^{i})$, and
 {\bf P}$(i,\vc{m}{h},H)$  $\limp$  {\bf P}$(h,\vc{m}{h},H')$) follows by easy decreasing induction on $i\in\Int{1,h}$.
\end{proof}

\begin{lem}\label{lm:tech2}
For any history $H$, {\bf P}$(h,\vc{m}{h}^{j},H)$ implies that 
\begin{enumerate}
  \item there exists $\vc{n}{h}\in\N^{h}$ such that$\M\nvDash f(\vc{m}{h}^{j},\vc{n}{h})=0$
\item there exists a history $H'$ such that $H\subset H'$ and {\bf P}$(h,\vc{m}{h}^{j+1},H')$
 \end{enumerate}
 \end{lem}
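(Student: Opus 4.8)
\emph{Proof plan.} The plan is to unfold, step by step, the reduction rules of $T_h$, $\Theta_f$, $N$, $\Next$ and $L$ on the process appearing in \textbf{P}$(h,\vc{m}{h}^{j},H)$, and then to read off the two conclusions from the failure of the two inductive clauses defining $\W^{1}_\Phi$. So first I would fix the data supplied by \textbf{P}$(h,\vc{m}{h}^{j},H)$: a tuple $\vc{n}{h}\in\N^{h}$, a term $u_h\in\Lambda$ and a stack $\pi_h\in\Pi$ such that $\widetilde H:=\{(\vc{m}{h}^{j},\vc{n}{h},u_h,\pi_h)\}\cup H$ is functional and $\pair{p,\widetilde H}\notin\W^{1}_\Phi$, where $p$ is the process $T_h[\vc{m}{h}^{j},\vc{n}{h-1},\code H]\star\barr{n_h}\cdot u_h\cdot\pi_h$. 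Using the reduction rule for $T_h$ followed by the one for $\Theta_f$, one obtains $p\eval\Theta_f\star\vc{m}{h}^{j}\cdot\vc{n}{h}\cdot u_h\cdot N[\vc{m}{h}^{j},\code H^{(h)}]\cdot\pi_h$, where $\code H^{(h)}$ is the $\lambda_c$-encoding of $\widetilde H$; this process reduces to $u_h\star\pi_h$ when $\M\vDash f(\vc{m}{h}^{j},\vc{n}{h})=0$ and to $N[\vc{m}{h}^{j},\code H^{(h)}]\star\pi_h$ otherwise.

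Part (1) is then immediate: if $\M\vDash f(\vc{m}{h}^{j},\vc{n}{h})=0$ then $p\eval u_h\star\pi_h$ with $(\vc{m}{h}^{j},\vc{n}{h},u_h,\pi_h)\in\widetilde H$, so clause (Win) would give $\pair{p,\widetilde H}\in\W^{1}_\Phi$, contradicting the choice of the witnesses. Hence $\M\nvDash f(\vc{m}{h}^{j},\vc{n}{h})=0$, and this very tuple $\vc{n}{h}$ is the one claimed in~(1).

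For part (2), since we are now in the second branch, I would push the reduction of $p$ on through $N$, $\Next$ and $L$, ending at $p\eval u_k\star\barr{m_{k+1}}\cdot T_{k+1}[\vc{m}{h}^{j+1},\vc{n}{k},\code H^{(h)}]\cdot\pi_k$, where $k$ is the longest initial segment of the next enumerated tuple $\vc{m}{h}^{j+1}$ that already occurs in $\code H^{(h)}$, say as $(\vc{m}{k},\vc{n}{k},u_k,\k_{\pi_k})$ — so that $(\vc{m}{k},\vc{n}{k},u_k,\pi_k)\in\widetilde H$ and $0\le k<h$. Since $\pair{p,\widetilde H}\notin\W^{1}_\Phi$ while $p$ reaches a process of the shape $u_k\star\barr{m_{k+1}}\cdot t\cdot\pi_k$ with $(\vc{m}{k},\vc{n}{k},u_k,\pi_k)\in\widetilde H$, clause (Play) cannot apply for this move; so there are $n',u',\pi'$ with
$$\pair{T_{k+1}[\vc{m}{h}^{j+1},\vc{n}{k},\code H^{(h)}]\star\barr{n'}\cdot u'\cdot\pi',\;\{(\vc{m}{k}\cdot m_{k+1},\vc{n}{k}\cdot n',u',\pi')\}\cup\widetilde H}\notin\W^{1}_\Phi.$$
By maximality of $k$ the new length-$(k{+}1)$ position is unassigned in $\widetilde H$, so the enlarged history stays functional, and what has been obtained is exactly \textbf{P}$(k{+}1,\vc{m}{h}^{j+1},\widetilde H)$. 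Applying Lemma~\ref{lm:tech1} with $i=k{+}1\in\Int{1,h}$ then yields a history $H'$ with $H\subseteq\widetilde H\subseteq H'$ and \textbf{P}$(h,\vc{m}{h}^{j+1},H')$, which is~(2).

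The one genuinely delicate point is the bookkeeping: one must check that the $\lambda_c$-encoded history $\code H^{(h)}$ faithfully mirrors the game history $\widetilde H$ (so that the tuple extracted by $L$ is a legitimate member of $\widetilde H$ and (Play) really applies), that inserting the new length-$(k{+}1)$ entry preserves functionality (which is where maximality of $k$ and functionality of $\widetilde H$ enter), and that $k<h$, i.e.\ that the tuple produced by $\Next$ does not already occur at full length in the history — which holds because the histories arising along the game only record positions for tuples reached so far in the enumeration, so $\vc{m}{h}^{j+1}$ is fresh. Apart from that, everything is a mechanical unfolding of the reduction rules and of the two clauses defining $\W^{1}_\Phi$.
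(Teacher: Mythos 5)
Your proof is correct and follows essentially the same route as the paper's: derive part (1) from the failure of the (Win) clause after unfolding $T_h$ and $\Theta_f$, then obtain \textbf{P}$(k{+}1,\vc{m}{h}^{j+1},\widetilde H)$ from the failure of (Play) after the reduction through $N$ and $L$, and conclude with Lemma~\ref{lm:tech1}. Your explicit treatment of the bookkeeping (functionality via maximality of $k$, and $k<h$) is actually more careful than the paper's, which leaves these points implicit.
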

\begin{proof}
 Given a history $H$, if {\bf P}$(h,\vc{m}{h}^{j},H)$ holds, then it means that the first rule of induction of $\G^{1}$ fails,
hence necessarily $\M\nvDash f(\vc{m}{h}^{j},\vc{n}{h})=0$ and by definition of $T_h$, 
using the notations
$H^{h}=\{(\vc{m}{h}^{j},\vc{n}{h},u_{h},\pi_{h})\}\cup H$ and $\liste{m'}{h}=\liste{m}{h}^{j+1}$,
we get that 
$$T_{h}[\liste{m}{h}^{j},\liste{n}{h-1},\code H] \star \barr n_h \cdot u_h \cdot \pi_h 
      ~\eval~ 
      u_i \star \barr m'_{i+1}\cdot T_{i+1}[\liste{m'}{h},\liste{n'}{i},\code {H^{h}}] \cdot \pi_i
      $$
with $(\liste{m'}{i},\liste{n'}{i},u_i,\k_{\pi_i})\in \code H$ and 
$\forall j>i, \forall (\vc{n}{j},u,\pi)\in \N^{j}\times\Lambda\times\Pi (\liste{m'}{j},\liste{n}{j},u,\k_{\pi})\notin \code H$.
Note that this condition ensures the functionality of $H^{h}\cup\{(\vc{m}{h}^{j+1},\vc{n'}{i},u_{i},\pi_{i})\}$.
From {\bf P}$(h,\vc{m}{h}^{j},H)$ once more, we get that the second rule of induction of $\G^{1}$ fails too,
and so that {\bf P}$(i+1,\vc{m}{h}^{j+1},H^{h})$.
Hence by Lemma \ref{lm:tech1} we get the existence of $H'$ such that $H\subset H^{h}\subset H'$ 
and {\bf P}$(h,\vc{m}{h}^{j+1},H')$ holds.
\end{proof}

\begin{proof}[Proof of Proposition \ref{prop:RealUniversal}] By contraposition.
We show that if $t_\Phi$ is not a winning strategy, 
then there exists a growing sequence of history $(H_j)_{j\in\N}$ 
such that for all $j\in\N$, 
${\bf P}(h,\vc{m}{h}^{j},H_j)$ holds.

Indeed, assume $t_\Phi$ is not a winning strategy,
that is to say there is $u_0\in\Lambda,\pi_0\in\Pi$ 
such that $\pair{t_\Phi\star u_0\cdot \pi_0,\emptyset}\notin\W^{1}_\Phi$
Then because of the reduction rule of $t_\Phi$,
it means that the second rule of $\G^{1}_\Phi$ fails, thus there exists $(n_{1},u_{1},\pi_1\in\N\times\Lambda\times\Pi$, such that
$$\pair{T_{1}[\liste{0}{h},\emptystr,\code H]\star \barr n_{1}\cdot u_{1}\cdot\pi_{1}, 
 \{(\vc{m}{1},\vc{n}{1},u_1,\pi_1)\}\cup H}\notin\W^{1}_{\Phi}$$
 with $H\equiv (\cdot,\cdot,u_0,{\pi_0})$, 
that is ${\bf P}(1,\vc{m}{h}^{0},H)$.
 Then by Lemma \ref{lm:tech1} we get that 
 there exists $H_0$ such that ${\bf P}(h,\vc{m}{h}^{0},H_0)$ holds,
 and the claim follows by easy induction.
 \newcommand{\bigH}{\mathcal{H}}
 Then we set $\bigH=\bigcup_{j\in\N} H_j$, that is functional (because each $H_j$ is, and $H_j\subset H_{j+1}$).
 
 Applying the first clause of Lemma \ref{lm:tech2}, 
 we get that for all $j\in\N$, there exists $\vc{n}{h}^{j}$ such that 
 $(\vc{m}{h}^{j},\vc{n}{h}^{j},u,\pi)\in\bigH$ for some $u\in\Lambda$ and $\pi\in\Pi$
 and $\M\nvDash f(\vc{m}{h}^{j},\liste{n}{h}^{j})=0$.

 Furthermore, as $\bigH$ is functional, it easily implies that:
 $$\forall m_1 \exists n_1\ldots \forall m_h\exists n_h (\M\nvDash f(\vc{m}{h},\vc{n}{h})=0)$$
 and thus we finally get $\M\nvDash\Phi$.
\end{proof}

Combining the results we obtained at this point, we get the following theorem:

\begin{thm}\label{thm:main}
If $\Phi$ is an arithmetical formula, then $\M\vDash\Phi$ if and only if there exists $t\ureal \Phi$.
\end{thm}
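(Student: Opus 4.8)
The plan is to obtain the theorem as a direct assembly of the adequacy, completeness and universal-realizer results already established, since each implication reduces to a short combination of them.

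For the implication $t\ureal\Phi \Rightarrow \M\vDash\Phi$, I would specialise to the empty pole $\Bot=\varnothing$. A universal realizer of $\Phi$ belongs in particular to $|\Phi|_{\varnothing}$, so $|\Phi|_{\varnothing}=\Lambda\neq\varnothing$, and Proposition~\ref{p:Degenerated} then forces $\M\vDash\Phi$. The only point requiring a word is that the relativized arithmetical formula $\Phi$---written with the numeric quantifiers $\forallN,\existsN$ and the asymmetrical connective $\{e\}\limp(-)$---is interpreted under the empty pole exactly as its Tarskian reading in $\M$: this holds because $\forallN,\existsN$ are realizability-equivalent to $\forallNat,\existsNat$, the predicate $\Nat$ denotes all of $\N$ in $\M$, and with $\Bot=\varnothing$ the set $\fv{\{e\}\limp A}$ is nonempty exactly when $\fv{A}$ is, so $\{e\}\limp A$ collapses onto ordinary implication on its second argument.

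For the converse $\M\vDash\Phi \Rightarrow \exists t\,(t\ureal\Phi)$, suppose $\M\vDash\Phi$. Proposition~\ref{prop:RealUniversal} supplies a closed term $t_\Phi$ that is a winning strategy for $\G^{1}_\Phi$; Proposition~\ref{prop:inclusion} promotes it to a winning strategy for $\G^{2}_\Phi$; and Proposition~\ref{prop:adequacy} (adequacy of $\G^{2}$) yields $t_\Phi\ureal\Phi$, so the required realizer exists. Routing through $\G^{2}$ rather than invoking Proposition~\ref{prop:adequacy1} directly is merely a matter of convenience, since the adequacy of $\G^{1}$ was never proved independently.

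There is no genuine obstacle in the assembly itself: the substantive work---adequacy (Proposition~\ref{prop:adequacy}), completeness (Proposition~\ref{prop:completeness2}) and the construction of $t_\Phi$ together with its correctness (Proposition~\ref{prop:RealUniversal})---has already been discharged, and the theorem only records their conjunction, the sole care needed being the identification of $\M\vDash\Phi$ with the empty-pole interpretation noted above. As a by-product one also obtains the sharper statement combining Propositions~\ref{prop:adequacy} and~\ref{prop:completeness2}, namely that $t\ureal\Phi$ if and only if $t$ implements a winning strategy for $\G^{2}_\Phi$, whence $\G^{2}_\Phi$ admits a winning strategy precisely when $\M\vDash\Phi$.
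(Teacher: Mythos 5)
Your proof is correct and follows essentially the same route as the paper: the forward direction is Proposition~\ref{prop:RealUniversal} combined with adequacy (via Proposition~\ref{prop:inclusion} and Proposition~\ref{prop:adequacy}), and the converse is the empty-pole collapse of Proposition~\ref{p:Degenerated}. Your extra remark checking that the relativized quantifiers and the $\{e\}\limp(-)$ connective degenerate correctly when $\Bot=\varnothing$ is a worthwhile point of care that the paper leaves implicit, but it does not change the argument.
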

\begin{proof}
 The first direction is a consequence of Propositions \ref{prop:RealUniversal} and \ref{prop:adequacy}, 
 the reverse directly comes from Proposition \ref{p:Degenerated}.
\end{proof}

\subsection{Leibniz equality vs primitive non-equality}
Here we have chosen to consider formul\ae\ based on equalities, 
and we should wonder what happens if we use instead formul\ae\ based on disequalities:
$$\exists x_1\forall y_1\ldots\exists x_n\forall y_n f(\vc{x}{n},\vc{y}{n})\neq 0\,.$$
We know that both definitions are equivalent from a model-theoretic point of view.
Indeed, if we define the following function $h$:
$$h=\begin{cases} x\mapsto 1 & \text{if } x=0\\
                   x\mapsto 0 & \text{otherwise}
                  \end{cases}$$
then for all $\vec{x}\in\N^{n}$, $\M \vDash f(\vec{x})=0$ if and only if $\M\vDash (h\circ f)(\vec{x})\neq 0$.
In other words, formul\ae~ based on a non-equality have the same expressiveness, 
and we also might have chosen it as definition for the arithmetical formul\ae~(see Definition \ref{def:formulae}).

In classical realizability the disequality can be a given a simple semantic:
$$\fv{e_1\neq e_2}=\begin{cases}\fv{\top} & \text{if } \M\vDash e_1\neq e_2\\
                   \fv{\bot} & \text{otherwise}
                  \end{cases}$$
which is equivalent to the negation of equality. Indeed, one can easily check that we have
$\lambda x t.(t)x\Vdash e_1\neq e_2 \limp \neg (e_1=e_2)$ and $\lambda t. (t)I\Vdash \neg(e_1=e_2)\limp e_1\neq e_2$.

Yet using these definitions, the rules of the game would have slightly changed. 
Indeed, if we observe closely what happens at the last level of the game (with every variable already instantiated 
but the one of the last universal quantifier), that is a formula $\far{y}(f(y)\neq 0)$, if the formula is true in the model,
then the falsity value is empty, so that the opponent can not give any answer:
$$\fv{\forall{y}(f(y)\neq 0)}=\union{n\in\N}\fv{f(n)\neq 0}=\fv{\top}=\emptyset
\eqno{(\forall n\in\N, \M\vDash f(n)\neq 0)}$$
Hence \elo\ does not have to compute the formula $f$ to know whether she can win or not, 
she only has to wait for a potential answer of \abe, and keep on playing if she eventually gets one.

We shall bring the reader to notice two important facts.
Firstly, it is clear that as \elo\  has no need to compute $f$, she only needs to do somehow a ``blind" enumeration,
hence we can build the very same realizer we built in Proposition \ref{prop:RealUniversal}
without using a term computing $f$.
In fact, such a realizer would be suitable for any $f$, even not computable, that is :
\begin{prop}{\cite{Kri09}}\label{prop:RealUniversal2}
 For all $n\in\N$, there exists $t_n\in\Lambda_c$ such that for any $f:\N^{2n}\to\N$, 
 if $\M\vDash\exists x_1\forall y_1\ldots\exists x_n\forall y_n f(\vc{x}{n},\vc{y}{n})\neq 0$, 
 then $t_n\Vvdash \exr x_1\far y_1\ldots\exr x_n\far y_n f(\vc{x}{n},\vc{y}{n})\neq 0$.
\end{prop}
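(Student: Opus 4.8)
The plan is to reuse, essentially verbatim, the ``smart blind enumeration with backtrack'' term $t_\Phi$ built in the proof of Proposition~\ref{prop:RealUniversal}, but to erase its \emph{only} occurrence of $f$. Recall that there the term $\Theta_f$ was invoked exactly once, inside the leaf continuation (written $T_h$ in Section~\ref{ss:Universal}, with $h=n$ here), in order to decide --- once a leaf has been reached and \abe\ has answered --- whether \elo\ has already won or must go on enumerating $\N^n$. For a formula built on a disequality this test is useless: as recalled just above the statement, $\fv{f(\vc{m}{n},\vc{n}{n})\neq 0}$ is either $\emptyset$ (when $\M\vDash f(\vc{m}{n},\vc{n}{n})\neq 0$) or $\fv{\bot}=\Pi$ (otherwise). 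I therefore define $t_n$ to be $t_\Phi$ with its leaf continuation replaced by a term which, on \abe's leaf answer $\barr k\cdot u\cdot\pi$, \emph{unconditionally} restarts the enumeration (reducing to $\Next$ applied to the next tuple, exactly as the ``otherwise'' branch of the old $T_h$ did). Since $\Theta_f$ has vanished, $t_n$ depends only on $n$ and is a well-defined $\lambda_c$-term even when $f$ is not computable; this is the uniform realizer claimed by the statement.

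To prove $t_n\ureal\Phi$ when $\M\vDash\Phi$ (where $\Phi\equiv\exr{x_1}\far{y_1}\cdots\exr{x_n}\far{y_n}\,f(\vc{x}{n},\vc{y}{n})\neq 0$) I would argue by contraposition, mirroring the thread-tracing of the proof of Proposition~\ref{prop:RealUniversal} (Lemmas~\ref{lm:tech1} and~\ref{lm:tech2}) but carried out directly in realizability, since a disequality variant of the game $\G^{1}$ is not formalized here. Fix a pole $\pole$ and a stack $u_0\cdot\pi_0\in\fv{\Phi}$, and suppose $t_n\star u_0\cdot\pi_0\notin\pole$. Unfolding the encoding of $\exr{x_1}$, the handle $u_0$ realizes $\far{x_1}((\far{y_1}\cdots)\limp\dot S)$ for some falsity value $S\ni\pi_0$. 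At each \emph{internal} level $i$ one has the usual dichotomy: the continuation $\xi_i$ that \elo\ has just played against \abe's handle $u_{i-1}$ cannot realize the corresponding sub-formula of $\Phi$ (with the already-played integers substituted and $x_i:=m_i$), for otherwise the current process would belong to $\pole$ by definition of a realizer of $u_{i-1}$'s type and by anti-evaluation closure of $\pole$, contradicting $t_n\star u_0\cdot\pi_0\notin\pole$; hence it does not realize it, which produces \abe's answer $\barr{n_i}\cdot u_i\cdot\pi_i$ together with a process still outside $\pole$, and the trace descends. The only step differing from the equality case of Proposition~\ref{prop:RealUniversal} is at the leaf: there the leaf continuation fails to realize $\far{y_n}(f(\vc{m}{n},\vc{n}{n-1}\cdot y_n)\neq 0)$, so there is some $n_n\in\N$ making $\fv{f(\vc{m}{n},\vc{n}{n})\neq 0}$ non-empty (writing $\vc{n}{n}=\vc{n}{n-1}\cdot n_n$); by the disequality semantics recalled above this forces $\M\vDash f(\vc{m}{n},\vc{n}{n})=0$. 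This is precisely where the equality proof needed $\Theta_f$ in order to know which rule of the game fails; here, the emptiness of the falsity value of $f\neq 0$ when it is true makes \emph{any} leaf continuation a realizer of the leaf formula for free, so the program never has to look at $f$.

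As in Proposition~\ref{prop:RealUniversal}, the ``smart'' enumeration keeps the history produced along the trace functional --- a partial tuple $\vc{m}{i}$ is always completed by the same $\vc{n}{i}$ --- so, read along this functional map, the conclusion just obtained (``for every $\vc{m}{n}\in\N^n$ there is $\vc{n}{n}$ with $\M\vDash f(\vc{m}{n},\vc{n}{n})=0$'') yields
$$\forall m_1\,\exists n_1\,\cdots\,\forall m_n\,\exists n_n\;(\M\vDash f(\vc{m}{n},\vc{n}{n})=0),$$
that is $\M\nvDash\Phi$, contradicting the hypothesis. Hence $t_n\star u_0\cdot\pi_0\in\pole$ for every pole and every $u_0\cdot\pi_0\in\fv{\Phi}$, i.e.\ $t_n\ureal\Phi$. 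The main obstacle is the bookkeeping already present in Proposition~\ref{prop:RealUniversal}: one must check that the thread of $t_n\star u_0\cdot\pi_0$ does run an \emph{exhaustive} functional enumeration of $\N^n$, i.e.\ that the backtracks --- implemented through the continuation constants $\k_{\pi}$ saved in the $\lambda_c$-encoded history --- splice the restarted branches together correctly and that no tuple of $\N^n$ is ever skipped; once the analogues of Lemmas~\ref{lm:tech1}--\ref{lm:tech2} are in place, the argument above goes through with no further change.
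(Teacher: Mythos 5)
Your proposal is correct and coincides with what the paper actually does: the paper gives no separate proof of this proposition (it is credited to Krivine), the surrounding text only observing that the realizer of Proposition~\ref{prop:RealUniversal} can be rebuilt with its single call to $\Theta_f$ erased, precisely because a true disequality has empty falsity value, so the blind enumeration never needs to inspect $f$ and the resulting term is uniform in $f$ --- which is exactly your construction and your contrapositive thread-tracing argument. One detail to adjust: under the primitive semantics $\fv{e_1\neq e_2}\in\{\emptyset,\Pi\}$, a stack of $\fv{\far{y_n}(f(\vc{m}{n},\vc{n}{n-1}\cdot y_n)\neq 0)}$ has the shape $\barr{k}\cdot\pi$ with $\pi$ arbitrary (when $f(\vc{m}{n},\vc{n}{n-1}\cdot k)=0$), not $\barr{k}\cdot u\cdot\pi$ as in the Leibniz-equality leaf, so your leaf continuation must pop only the numeral before restarting the enumeration; if it tried to grab a second term argument it could block on a short stack, and then the exhaustiveness of the enumeration --- which your final contradiction with $\M\vDash\Phi$ depends on --- would be lost.
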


Secondlyd, such a result it obviously false if we use equality instead of non-equality.
Going back to the halting problem, if we consider one of the functions $f:\N^{2}\to\N$ such that 
\begin{center}
\begin{tabular}{ccc}
$f(m,n)=0$ & \hspace{1cm}~iff~  & $(n=0 \wedge \exr p(\Halt(m,p))) \vee (n\neq0\wedge\far p(\neg \Halt(m,p)))$
\end{tabular}
\end{center}
it is clear that $f$ is not computable and that $\M\vDash \forall y \exists x (f(y,x)=0)$ 
(that only says that a Turing machine stops or does not stop).
We know by Proposition \ref{prop:RealUniversal2} that there is a term $u\in\Lambda_c$
such that $u\Vvdash \far{y}\exr{x} (h\circ f)(y,x)\neq0$, 
but there is no term\footnote{Otherwise, using a witness extraction method for $\Sigma_0^1$-formul\ae~\cite{Miq10}, 
we would be able for all $m\in\N$ to compute $n_m\in\N$ such that $f(m,n_m)=0$, breaking the halting problem.}
$t$ such that $t\Vvdash\far{y}\exr{x} f(y,x)=0$,
and thus no term $t'$ such that $t'\Vvdash(\far{y}\exr{x} (h\circ f)(y,x)\neq0)\limp (\far{y}\exr{x} f(y,x)=0)$.
This phenomena is quite strange\footnote{In fact, it already appears when considering the formula $\forall x(x=0\liff h(x)\neq 0)$
that is not realized if not relativized to naturals.}, as both formul\ae~ were perfectly equivalent in the ground model.
As we explained, a game-theoretic interpretation of this fact is based on the idea on the idea that the use of a non-equality
leaves the computation to the opponent, and making so the game easier. 
However, in the author's opinion this does not furnish a satisfying enough explanation for the model-theoretic point of view,
and it might be interesting to deal with this phenomena more deeply.

\subsection{Connection with forcing}
\label{ss:Forcing}

In this paper, we only considered the \emph{standard realizability models} of PA2 (following the
terminology of~\cite{RAlgI}), that is: the realizability models parameterized on tuples of the form
$(\Lambda,\Pi,{\eval},\Bot)$, where $(\Lambda,\Pi,{\eval})$ is a particular instance of the
$\lambda_c$-calculus, and where $\Bot$ is a pole.
The strong separation between the calculus (on one side) and the pole (on the other side) is
essential to define the notion of universal realizability, which is at the heart of the
specification problem studied in this paper.

However, the definitions of classical realizability can be extended in many different
ways.
First, we may replace second-order arithmetic (PA2) by Zermelo-Fraenkel set theory (ZF),
using a model-theoretic construction  \cite{Kri00,Gimmel} that is reminiscent from the construction
of forcing models and of Boolean-valued models of ZF.
\emph{Mutatis mutandis}, all the results presented in this paper remain valid in the framework
of classical realizability models of ZF, provided we consider a representation of
arithmetic formul{\ae} in the language of set theory that preserves their computational
interpretation in the sense of PA2 (see~\cite{Gimmel}).

Second, we may replace the terms and stacks of the $\lambda_c$-calculus by the
$\A$-terms and $\A$-stacks of an arbitrary \emph{classical
realizability algebra} $\A$, as shown by Krivine~\cite{RAlgI,Gimmel}.
Intuitively, classical realizability algebras generalize $\lambda_c$-calculi (with poles)
the same way as partial combinatory algebras \cite{Combinatory} generalize the $\lambda$-calculus (or
Gödel codes for partial recursive functions) in the framework of intuitionistic
realizability.
This broad generalization of classical realizability---in a framework where terms
and stacks are not necessarily of a combinatorial nature---is essential, since it allows
us to make the connection between forcing and classical realizability explicit.
Indeed, any complete Boolean algebra can be presented as a classical realizability
algebra, so that all Boolean-valued models of ZF (or forcing models) can 
actually seen as particular cases of classical realizability models of ZF.
(In this setting, the combination of realizability and forcing presented in
\cite{RAlgI,Forcing} can be seen as a generalization of the method of iterated forcing.)

In the general framework of classical realizability algebras, the specification
problem studied in this paper does not make sense anymore (due to the loss of
the notion of universal realizability), but we can still use the $\lambda_c$-terms
presented in Section~\ref{ss:Universal} to show more generally that every arithmetic formula that
is true in the ground model is realized by a proof-like term.

\begin{thm}
Let $\M$ be a Tarski model of ZFC, $\A$ a classical realizability
algebra taken as a point of $\M$, and $\M^{\A}$ the classical
realizability model of ZF built from the ground model $\M$ and the
classical realizability algebra $\A$.
Then for every closed arithmetical formula $\phi$ (expressed in theo
language of ZF) such that $\M\models\phi$, there exists a proof-like
term $\theta\in\A$ such that $\theta\Vdash_{\A}\phi$.
\end{thm}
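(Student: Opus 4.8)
The plan is to show that the closed proof-like term $t_\Phi$ built in Section~\ref{ss:Universal} already realizes $\phi$, once one notices that neither its construction nor the adequacy argument of Section~\ref{ss:Universal} uses anything specific to the concrete $\lambda_c$-calculus or to a \emph{standard} realizability model. First I would reduce to the setting of the previous sections. By the standard representation of arithmetic inside set theory (see~\cite{Gimmel}), which preserves the computational interpretation of formul\ae, it suffices to realize the PA2-formula corresponding to $\phi$; and, padding with dummy quantifiers if necessary, we may assume this formula has the shape $\Phi\equiv\exr{x_1}\far{y_1}\cdots\exr{x_h}\far{y_h}\,f(\vc{x}{h},\vc{y}{h})=0$ treated in Section~\ref{ss:game}, with every first-order quantifier relativized to $\Nat$. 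Since $\M$ is a model of ZFC it contains the genuine set $\N$ together with the interpretation of each primitive recursive symbol, so $\M\models\phi$ amounts to $\M\models\Phi$ in the sense of Section~\ref{ss:formulae}.

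Next I would observe that every $\lambda_c$-term occurring in Section~\ref{ss:Universal}---the term $\Theta_f$ (which only asks for a program computing the primitive recursive function $f$), the term $\Next$, the terms $T_i$, $N$, $L$, and finally $t_\Phi$---is a closed \emph{proof-like} term: it is assembled from the usual combinators, from $\cc$, and from primitive recursion on the numerals $\barr n$, and although continuation constants $\k_\pi$ are created dynamically when $t_\Phi$ saves a stack through $\cc$, none occurs in the term itself. Consequently each of these terms can be read as a proof-like element of an arbitrary classical realizability algebra $\A$, the reduction rules listed in Section~\ref{ss:Universal} being admissible there: a realizability algebra provides application, the pushing of a term onto a stack, the combinators needed to simulate $\lambda$-abstraction, the instruction $\cc$, and enough structure to compute primitive recursive functions on the numerals. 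Write $\theta\in\A$ for the proof-like element corresponding to $t_\Phi$; establishing all of this is the routine combinatory-algebra part of the argument.

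It then remains to rerun the proof of Section~\ref{ss:Universal} inside $\M^{\A}$. Since $\M\models\Phi$, Proposition~\ref{prop:RealUniversal} gives that $t_\Phi$ is a winning strategy for $\G^{1}_\Phi$; by Proposition~\ref{prop:inclusion} it is then a winning strategy for $\G^{2}_\Phi$; and by Proposition~\ref{prop:adequacy} it realizes $\Phi$. The decisive point is that the proof of Proposition~\ref{prop:adequacy}---and equally those of Propositions~\ref{prop:RealUniversal} and~\ref{prop:inclusion} and of Lemmas~\ref{lm:tech1} and~\ref{lm:tech2}---manipulates only anti-evaluation, the thread-oriented construction of the pole $\pole$ and membership to it, the falsity values of the auxiliary formul\ae\ $E_i$ and $A_i$ of Section~\ref{ss:g1}, and Corollary~\ref{cor:equality}; all of these make sense, with verbatim proofs, once the $\lambda_c$-terms and stacks are replaced by the $\A$-terms and $\A$-stacks of $\A$ and the pole by the pole of $\A$. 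We conclude that $\theta\Vdash_{\A}\Phi$ holds in $\M^{\A}$, and transporting this back along the set-theoretic representation of arithmetic yields $\theta\Vdash_{\A}\phi$.

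The main obstacle is precisely this passage from the concrete $\lambda_c$-calculus to an abstract realizability algebra: one has to check carefully that (i) the terms of Section~\ref{ss:Universal} are faithfully realized as $\A$-terms with exactly the reduction behaviour invoked---in particular the relativization to $\Nat$ and the implementation of histories as lists of tuples carrying captured stacks $\k_\pi$---and that (ii) the representation of arithmetical formul\ae\ in the language of ZF preserves their computational interpretation, so that a realizer of $\Phi$ in the sense of PA2 is a realizer of $\phi$ in the sense of ZF. Both points are covered by results already available in the literature (\cite{RAlgI,Gimmel} for classical realizability algebras and for the representation of arithmetic in set theory, and~\cite{RiegPhD} for the relativization to $\Nat$ and storage operators), so no genuinely new difficulty is met; the content of the theorem is that the specification-driven strategy of Section~\ref{ss:Universal}, unlike the ``blind enumeration'' behind Krivine's original proof, survives this generalization.
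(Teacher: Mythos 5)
Your proposal is correct and follows exactly the route the paper intends: the paper offers no written-out proof of this theorem beyond the sentence preceding it, which says precisely that one reuses the $\lambda_c$-terms of Section~\ref{ss:Universal} (i.e.\ $t_\Phi$, shown to be a winning strategy by Proposition~\ref{prop:RealUniversal} and hence a realizer via Propositions~\ref{prop:inclusion} and~\ref{prop:adequacy}), transported to an arbitrary classical realizability algebra and combined with the computation-preserving representation of arithmetic in ZF from~\cite{Gimmel}. Your write-up simply makes explicit the two transfer steps the paper delegates to~\cite{RAlgI,Gimmel}, which is exactly where the paper places them.
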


This shows that arithmetical formul\ae\ remain absolute in the framework
of classical realizability models of set theory, which generalizes a
well-known property of forcing models to classical realizability.
Actually, recent work of Krivine~\cite{Kri14} shows that this result
extends to the class of $\Sigma^1_2$- and $\Pi^1_2$-formul\ae\ as well.
By discovering the existence of an ultrafilter for the characteristic
Boolean algebra $\gimel2$ \cite{Gimmel} of the realizability model~$\M^{\A}$,
Krivine succeeded to construct (by quotient and extensional collapse)
a proper class $\M'\subseteq\M^{\A}$ that constitutes a transitive
model of~ZF elementarily equivalent to~$\M$, and that contains the
same ordinals as~$\M^{\A}$.
Hence the Levy-Schoenfield theorem [14, Theorem 25.20] applies to
$\M$, $\M'$ and~$\M^{\A}$, thus proving the absoluteness of
$\Sigma^1_2$-and $\Pi^1_2$-formul\ae.

\medskip
\begin{center}
% ------------------------------------------
\rule{150pt}{0.5pt}
\end{center}

\subsubsection*{Acknowledgements}
The authors wish to thank Alexandre \textsc{Miquel}, 
who provided valuable assistance to the writing of the connection between the specification results and forcing in Section \ref{ss:Forcing}.
\medskip
\bibliographystyle{amsplain}
\bibliography{biblio}
\end{document}